\newcommand*\LSTfont{\Small\ttfamily\SetTracking{encoding=*}{-60}\lsstyle}
\newenvironment{denseitemize}{
\begin{enumerate}[topsep=2pt, partopsep=0pt, leftmargin=1.5em]
  \setlength{\itemsep}{2pt}
  \setlength{\parskip}{0pt}
  \setlength{\parsep}{0pt}
}{\end{enumerate}}
\begin{document}

\newcommand{\ie}{i.e.}
\newcommand{\eg}{e.g.}
\newcommand{\QF}{\textit{Query Failure}\xspace}
\newcommand{\asum}{\texttt{SUM}\xspace}
\newcommand{\aavg}{\texttt{AVG}\xspace}
\newcommand{\acount}{\texttt{COUNT}\xspace}
\newcommand{\minihead}[1]{\vspace{0.5em}\noindent\textbf{#1.}}
\newcommand{\todo}[1]{\textcolor{red}{(#1)}} 
\newcommand{\revise}[1]{\textcolor{blue}{#1}} 
\newcommand{\fix}[1]{\textcolor{blue}{#1}}
\newcommand{\oracle}{\name-O\xspace}
\newcommand{\uniform}{\name-R\xspace}
\newcommand{\aPrioriError}{P1\xspace}
\newcommand{\noMaintenance}{P2\xspace}
\newcommand{\notModifyDbms}{P3\xspace}
\newcommand{\samplingscheme}{\texttt{TAQA}\xspace}
\newcommand{\bstats}{\texttt{BSAP}\xspace}
\newcommand{\mycross}{\textcolor{red}{\textbf{$\times$}}}
\definecolor{forestgreen}{RGB}{10, 60, 10}
\newcommand{\mycheck}{\textcolor{forestgreen}{\textbf{\checkmark}}}
\definecolor{defaultcolor}{HTML}{1f77b4}
\newcommand{\syntaxhighlight}[1]{\textbf{\textcolor{defaultcolor}{#1}}}

\newcommand{\pilotrewriter}{\textsc{PilotRewriter}}
\newcommand{\finalrewriter}{\textsc{FinalRewriter}}
\newcommand{\pilotQ}{Q_{pilot}}
\newcommand{\finalQ}{Q_{final}}
\newcommand{\inQ}{Q_{in}}
\newcommand{\mysampleplan}{\Theta}
\newcommand{\myconstraint}{\Phi}

\newcommand{\query}{Q}
\newcommand{\sqltable}{T}
\newcommand{\sqlspace}{\;}
\newcommand{\sqldistinct}{\textbf{\texttt{DISTINCT}}\xspace}
\newcommand{\sqlselect}{\textbf{\texttt{SELECT}}\xspace\xspace}
\newcommand{\sqlfrom}{\textbf{\texttt{FROM}}\xspace\xspace}
\newcommand{\sqlwhere}{\textbf{\texttt{WHERE}}\xspace\xspace}
\newcommand{\whereclause}{\textbf{\texttt{WHERE}}\xspace}
\newcommand{\sqlgroupby}{\textbf{\texttt{GROUP}}\sqlspace\textbf{\texttt{BY}}\xspace}
\newcommand{\sqlhaving}{\textbf{\texttt{HAVING}}\xspace\xspace}
\newcommand{\sqlorderby}{\textbf{\texttt{ORDER}}\sqlspace\textbf{\texttt{BY}}\xspace\xspace}
\newcommand{\sqllimit}{\textbf{\texttt{LIMIT}}\xspace\xspace}
\newcommand{\sqlover}{\textbf{\texttt{OVER}}\xspace\xspace}
\newcommand{\sqlpartitionby}{\textbf{\texttt{PARTITION}}\sqlspace\textbf{\texttt{BY}}\xspace\xspace}
\newcommand{\sqljoin}{\textbf{\texttt{JOIN}}\xspace\xspace}
\newcommand{\sqlfulljoin}{\textbf{\texttt{FULL JOIN}}\xspace\xspace}
\newcommand{\sqlinnerjoin}{\textbf{\texttt{INNER}}\sqlspace\textbf{\texttt{JOIN}}\xspace\xspace}
\newcommand{\sqlleftjoin}{\textbf{\texttt{LEFT}}\sqlspace\textbf{\texttt{JOIN}}\xspace\xspace}
\newcommand{\sqlrightjoin}{\textbf{\texttt{RIGHT}}\sqlspace\textbf{\texttt{JOIN}}\xspace\xspace}
\newcommand{\sqlon}{\textbf{\texttt{ON}}\xspace\xspace}
\newcommand{\sqlas}{\textbf{\texttt{AS}}\xspace\xspace}
\newcommand{\sqlmax}{\textbf{\texttt{MAX}}\xspace\xspace}
\newcommand{\sqlmin}{\textbf{\texttt{MIN}}\xspace\xspace}
\newcommand{\sqlcount}{\textbf{\texttt{COUNT}}\xspace\xspace}
\newcommand{\sqlsum}{\textbf{\texttt{SUM}}\xspace\xspace}
\newcommand{\sqlavg}{\textbf{\texttt{AVG}}\xspace\xspace}
\newcommand{\sqlrank}{\textbf{\texttt{RANK}}\xspace\xspace}
\newcommand{\sqldenserank}{\textbf{\texttt{DENSE\_RANK}}\xspace\xspace}
\newcommand{\sqlunionall}{\textbf{\texttt{UNION ALL}}\xspace\xspace}
\newcommand{\mydots}{\cdots}
\newcommand{\assign}{:=}
\newcommand{\grammareq}{::=}
\newcommand{\aggregate}{a}
\newcommand{\target}{t}
\newcommand{\targetset}{\widetilde{t}}
\newcommand{\targetlist}{L}
\newcommand{\targetlistset}{\widetilde{\targetlist}}
\newcommand{\targetlists}{\widetilde{\targetlist}}
\newcommand{\sqlpredicate}{\psi}
\newcommand{\sqlcondition}{\sqlpredicate}
\newcommand{\sqlconditionset}{\widetilde{\sqlpredicate}}
\newcommand{\sqlgroupbyclause}{G}
\newcommand{\sqlorderbyclause}{O}
\newcommand{\sqlexpr}{E}
\newcommand{\sqlexprset}{\widetilde{\sqlexpr}}
\newcommand{\colalias}{alias\xspace}
\newcommand{\agg}{f}
\newcommand{\windowfunc}{\omega}
\newcommand{\col}{col}
\newcommand{\cols}{cols}
\newcommand{\gcols}{gcols}
\newcommand{\sqlvalue}{V}
\newcommand{\irule}[2]{
\mkern-2mu\displaystyle\frac{\begin{array}{c}#1\end{array}}{\vphantom{,}\begin{array}{c}#2\end{array}\mkern-2mu}}
\newenvironment{centermath}{\begin{center}$\displaystyle}{$\end{center}}

\newcommand{\myjoin}{\mathcal{J}}
\newcommand{\name}{\textsc{PilotDB}\xspace}

\title{PilotDB: Database-Agnostic Online Approximate Query Processing with A Priori Error Guarantees (Technical Report)}

\author{Yuxuan Zhu}
\affiliation{%
  \institution{University of Illinois}
  \city{Champaign-Urbana}
  \country{USA}
}
\email{yxx404@illinois.edu}

\author{Tengjun Jin}
\affiliation{%
  \institution{University of Illinois}
  \city{Champaign-Urbana}
  \country{USA}
}
\email{tengjun2@illinois.edu}

\author{Stefanos Baziotis}
\affiliation{%
  \institution{University of Illinois}
  \city{Champaign-Urbana}
  \country{USA}
}
\email{sb54@illinois.edu}

\author{Chengsong Zhang}
\affiliation{%
  \institution{University of Illinois}
  \city{Champaign-Urbana}
  \country{USA}
}
\email{cz81@illinois.edu}

\author{Charith Mendis}
\affiliation{%
  \institution{University of Illinois}
  \city{Champaign-Urbana}
  \country{USA}
}
\email{charithm@illinois.edu}

\author{Daniel Kang}
\affiliation{%
  \institution{University of Illinois}
  \city{Champaign-Urbana}
  \country{USA}
}
\email{ddkang@illinois.edu}

\begin{abstract}
After decades of research in approximate query processing (AQP), its adoption in 
the industry remains limited. Existing methods struggle to simultaneously provide 
user-specified error guarantees, eliminate maintenance overheads, and avoid 
modifications to database management systems. To address these challenges, we 
introduce two novel techniques, \samplingscheme and \bstats. \samplingscheme is 
a two-stage online AQP algorithm that achieves all three properties for 
arbitrary queries. However, it can be slower than exact queries if we use 
standard row-level sampling. \bstats resolves this by enabling block-level 
sampling with statistical guarantees in \samplingscheme. We implement 
\samplingscheme and \bstats in a prototype middleware system, PilotDB, that is 
compatible with all DBMSs supporting efficient block-level sampling. We evaluate 
\name on PostgreSQL, SQL Server, and DuckDB over real-world benchmarks, 
demonstrating up to 126$\times$ speedups when running with a 5\% guaranteed error.
\end{abstract}

\begin{CCSXML}
<ccs2012>
   <concept>
       <concept_id>10002951.10003227.10003241.10010843</concept_id>
       <concept_desc>Information systems~Online analytical processing</concept_desc>
       <concept_significance>500</concept_significance>
       </concept>
   <concept>
       <concept_id>10002950.10003648.10003688</concept_id>
       <concept_desc>Mathematics of computing~Statistical paradigms</concept_desc>
       <concept_significance>300</concept_significance>
       </concept>
   <concept>
       <concept_id>10002951.10002952.10003400</concept_id>
       <concept_desc>Information systems~Middleware for databases</concept_desc>
       <concept_significance>300</concept_significance>
       </concept>
   <concept>
       <concept_id>10002951.10003227.10003241.10003244</concept_id>
       <concept_desc>Information systems~Data analytics</concept_desc>
       <concept_significance>300</concept_significance>
       </concept>
 </ccs2012>
\end{CCSXML}

\ccsdesc[500]{Information systems~Online analytical processing}
\ccsdesc[300]{Mathematics of computing~Statistical paradigms}
\ccsdesc[300]{Information systems~Middleware for databases}
\ccsdesc[300]{Information systems~Data analytics}

  \keywords{approximate query processing, sampling}
  
  \received{9 April 2025}

\maketitle

\section{Introduction}
Approximate query processing (AQP) is widely studied to accelerate queries in 
big data analytics \cite{aqua,start,blinkdb,abs,error-bounded-stratified,quickr,
sample+seek,idea,verdictdb,baq,taster,digithist,bias-aware-sketch,
count-filter-sketch,distributed-wavelet,icicles,congressional-sample,outlier,
sample-selection}. Although AQP has been extensively explored in academia, its adoption 
is still limited in practice \cite{sqlserver-aqp,oracle-aqp,bigquery-aqp}. Prior 
research demonstrates three properties that are crucial for real-world AQP 
applications: (\textbf{\aPrioriError}) guaranteeing user-specified errors before 
the query is executed (\ie, a priori error guarantees) \cite{quickr,blinkdb,baq,
taster,sample+seek,chaudhuri2017approximate,error-bounded-stratified}, 
(\textbf{\noMaintenance}) no maintenance overheads \cite{quickr,dbest}, and 
(\textbf{\notModifyDbms}) not modifying the underlying database management system 
(DBMS) \cite{verdictdb,dbest,baq,aqp-commercial-challenges}.

However, none of the existing systems or algorithms achieves all three properties simultaneously 
(Table \ref{tab:systems}). We can further categorize these techniques into 
two types: \textit{offline} methods that pre-compute samples and \textit{online} 
methods that generate samples at query time. 

Existing offline AQP methods requires maintenance overheads \cite{blinkdb,
verdictdb,baq,aqua,icicles,congressional-sample,start,outlier,sample-selection,
sample+seek}, sacrificing \noMaintenance and leading to limitations in 
deployments. Offline methods operate in two stages. In the offline stage, they 
pre-compute data samples based on expected workloads. In the online stage, offline 
samples that satisfy the error specification are selected
to answer the query. Consequently, when data or workloads are updated, re-computations 
and/or manual inspections are required to maintain a priori error guarantees 
\cite{blinkdb,baq,sample+seek}. The cumulative costs of this maintenance can be 
a significant overhead that discourages deployments and commercial adoption 
\cite{chaudhuri2017approximate,mozafari2015cliffguard}.

Although online methods eliminate maintenance overheads (\noMaintenance) 
\cite{quickr,online-aggregation,ripple-join,dbo,gola,deepola,progressivedb},
existing online AQP algorithms require modifying DBMSs to achieve a priori error
guarantees \cite{quickr}, sacrificing \notModifyDbms. These algorithms depend on 
sophisticated samplers and complex optimization logic for query acceleration and 
error guarantees \cite{quickr}. However, these techniques are tightly integrated 
with DBMSs and lack widespread support. Consequently, adopting them requires 
modifying existing DBMSs, which can be infeasible for 
commercial applications \cite{aqp-commercial-challenges,verdictdb,keebo}.

In this paper, we propose two novel techniques to simultaneously achieve \aPrioriError,
\noMaintenance, and \notModifyDbms, while accelerating queries compared to executing 
exact queries. First, we introduce a two-stage online AQP algorithm, 
\samplingscheme, that achieves a priori error guarantees through query rewriting 
and online sampling. Second, to accelerate query processing with \samplingscheme, 
we develop \bstats, a set of statistical techniques that formalize block sampling 
to provide statistical guarantees in approximate queries. Finally, we build a 
middleware AQP system, \name, which implements \samplingscheme and \bstats.

\begin{table}
    \centering
    \footnotesize
    \caption{Characteristics of state-of-the-art AQP systems and algorithms. 
    Online AQP inherently eliminates sample maintenance overhead. \name is the 
    first one that achieves a priori error guarantees  (\aPrioriError), 
    eliminates maintenance overheads (\noMaintenance), and avoids DBMS 
    modifications (\notModifyDbms), simultaneously. }\label{tab:systems}
    \begin{tabular}[pos]{c|c|c|c}
\toprule
AQP System & \makecell{A Priori Error\\ Guarantees (\aPrioriError)}
& \makecell{w/o Maintenance\\Overhead (\noMaintenance)} 
& \makecell{w/o Modifying\\DBMSs (\notModifyDbms)}  \\
\midrule
BlinkDB \cite{blinkdb}          & \mycheck & \mycross & \mycross  \\
Taster \cite{taster}            & \mycheck & \mycross & \mycross  \\
Sample+Seek \cite{sample+seek}  & \mycheck & \mycross & \mycross  \\
Quickr \cite{quickr}            & \mycheck & \mycheck & \mycross  \\
BAQ \cite{baq}                  & \mycheck & \mycross & \mycheck  \\
VerdictDB \cite{verdictdb}      & \mycross & \mycross & \mycheck  \\
DBest \cite{dbest}              & \mycross & \mycheck & \mycheck  \\
\textbf{PilotDB}                & \mycheck & \mycheck & \mycheck  \\
\bottomrule
    \end{tabular}
\end{table}

\minihead{\samplingscheme}
Our online AQP algorithm, \samplingscheme, achieves all three properties through 
two stages of query rewriting and online sampling. In the first stage, we 
rewrite the input query and execute it to determine the optimal sampling plan 
that (1) satisfies the user's error specification and (2) minimizes the execution 
cost. In the second stage, we rewrite the input query with the optimal sampling 
plan and execute it, delivering results directly to users. For both stages, the 
rewritten queries leverage existing samplers in the DBMS.

\vspace{0.5em}
However, naively applying samplers of existing work to \samplingscheme 
either fails to accelerate queries or requires modifying DBMSs. Specifically, 
row-level samplers are inefficient in databases that read data at the block 
level, resulting in query latencies as high as exact queries (\S 
\ref{subsec:block-motiv}) \cite{page-selection,sqlserver-store}. This is 
especially the case for analytical queries where data scanning is often the 
primary latency bottleneck \cite{chen2016memsql,theodoratos1997data}. To address 
the inefficiency of row-level samplers, researchers have developed more efficient 
sampling techniques, such as index-assisted sampling \cite{wang2023approximate,
zhao2022ab}. Unfortunately, these techniques require modifying DBMSs and are not 
widely supported, sacrificing \notModifyDbms.

\minihead{\bstats}
We develop \bstats to accelerate online sampling using block-level sampling (also
known as block sampling)—an approach that does not require modifying DBMSs (\notModifyDbms),
as it is already widely implemented \cite{impala-sample,presto-sample,psql-sample,
sqlserver-sample,duckdb-sample,hive-sample,bigquery-sample}. Block sampling, 
which samples data at the block level, achieves higher efficiency compared to 
row-level sampling by skipping non-sampled blocks (Figure \ref{fig:sampling-demo}).\footnote[1]{Throughout the paper, ``block'' refers to the minimum unit of data 
accessing in the storage layer.} Quantitatively, sampling 0.01\% data from a table 
with 6B rows using block sampling can be up to 500$\times$ faster 
than uniform row-level sampling (Figure \ref{fig:motivate-block}). Furthermore, 
our analysis reveals that for the same error specification, the sample size 
required for uniform block sampling can be comparable or even smaller than that of 
uniform row-level sampling (\S \ref{subsec:block-motiv}).\footnote[2]{In an extreme 
case where the variance and expectation of a block is similar to the entire dataset, 
sampling one block can be sufficient for a small error rate.}

\begin{figure}[t]
    \centering
    \noindent\includegraphics[width=\linewidth]{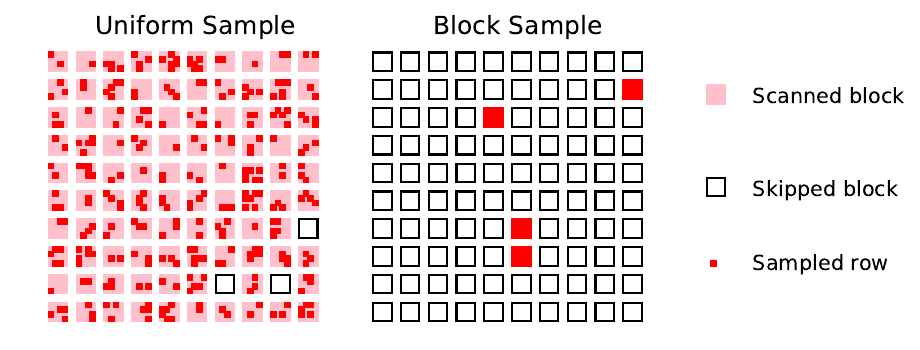}
    \caption{Sampling 3\% data from a table with a block size of 100 
    rows: in expectation, the row-level method requires scanning approximately 
    95\% blocks, while the block-level method scans approximately 3\% blocks.}
    \label{fig:sampling-demo}
\end{figure}

Although block sampling has been included in the ISO standard SQL \cite{sql-2003} 
and is widely supported, existing error analysis techniques are insufficient to
handle block sampling in nested or Join queries. Naively applying existing 
techniques can lead to errors up to 52$\times$ higher than the user-requested 
error (\S \ref{subsec:eval-correct}), sacrificing \aPrioriError. We introduce new 
statistical techniques in \bstats to formalize block sampling in approximate 
queries with statistical error guarantees. For deep nested queries, we establish 
sampling equivalence rules to reason about the interaction between block sampling 
and relational operations. For Join queries, we analyze the asymptotic distribution 
and the variance of the Join result over block samples, extending the standard 
central limit theorem (CLT) that fails when block sampling is executed on multiple 
tables \cite{huang2019joins,zhao2018random,chaudhuri1999random}.

With \bstats, we can further accelerate prior online AQP systems. In particular, 
we can use block sampling to replace the heavily-used uniform row-level sampling 
\cite{quickr}, while preserving their error guarantees. We empirically show that 
\bstats can accelerate \textsc{Quickr} by up to 60$\times$ (\S 
\ref{subsec:eval-augment}) and \samplingscheme by up to 219$\times$ (\S 
\ref{subsec:ablation}), compared to uniform row-level sampling.

\vspace{0.5em}

We build a prototype middleware AQP system, \name, that implements 
\samplingscheme and \bstats. We evaluate \name on various DBMSs, showing that it 
can achieve substantial query speedups on diverse benchmarks, including TPC-H 
\cite{tpch}, Star Schema Benchmark \cite{ssb-paper}, ClickBench \cite{clickbench}, 
Instacart \cite{instacart,verdictdb}, and DSB \cite{ding2021dsb}. When connected 
to transactional databases, PostgreSQL \cite{postgresql} and SQL Server 
\cite{sqlserver}, \name achieves up to 126$\times$ speedup. When connected 
to an analytical database, DuckDB \cite{duckdb}, \name achieves 
up to 13$\times$ speedup. \name consistently achieves a priori error 
guarantees across various settings. 

\vspace{0.5em}

We summarize our contributions as follows:
\begin{denseitemize}
    \item We propose \samplingscheme, an online AQP algorithm that 
    simultaneously achieves \aPrioriError, \noMaintenance, and \notModifyDbms
    (\S \ref{sec:tsqa}).
    \item We develop \bstats, a set of statistical techniques that enable block 
    sampling to answer approximate nested and Join queries with statistical 
    guarantees (\S \ref{sec:block}).
    \item We build and evaluate \name, which implements \samplingscheme and 
    \bstats, achieving a priori error guarantees and up to two orders of 
    magnitude speedup on various DBMSs (\S \ref{sec:eval}).
\end{denseitemize}

\section{Overview}
In this section, we present an overview of \name. We first discuss the background 
and challenges of building \name (\S \ref{subsec:bg-clg}). Next, we introduce the 
workflow of \name (\S \ref{subsec:overview-alg}). Finally, we describe the types 
of queries that \name supports (\S \ref{subsec:support}) and the semantics of 
errors that \name guarantees (\S \ref{subsec:error-semantic}).

\subsection{Background and Challenges}\label{subsec:bg-clg}
In Table \ref{tab:systems}, we summarize the characteristics of state-of-the-art 
AQP systems in terms of a priori error guarantees (\aPrioriError), maintenance 
overheads (\noMaintenance), and DBMS modifications (\notModifyDbms). We then 
present the background and challenges of simultaneously achieving \aPrioriError, 
\noMaintenance, and \notModifyDbms from the perspective of algorithmic and 
statistical techniques.

\minihead{Algorithmic Challenges}
Given a query and an error specification (\S \ref{subsec:error-semantic}), an AQP 
algorithm must plan sampling properly to achieve a priori error guarantees 
(\aPrioriError). A sampling plan specifies the sampling method, table(s) to sample, 
and the sample size, which determines whether the query can be accelerated and 
whether the error specification can be satisfied. To determine the sampling plan, 
prior work either pre-computes samples based on the knowledge of the query workload 
\cite{blinkdb,baq,taster,sample+seek} or inserts samplers to the query plan at query time 
based on runtime statistics \cite{quickr}. However, these methods 
break \noMaintenance or \notModifyDbms. The pre-computation method requires 
maintenance efforts to refresh samples when data changes \cite{blinkdb,baq,taster,sample+seek}, 
while the method of inserting samplers at query time requires modifying the 
execution and optimization logic of DBMSs \cite{quickr}.

We aim to resolve the tension among \aPrioriError, \noMaintenance, and \notModifyDbms. 
As we explained, the key challenge is to determine the sampling plan without 
pre-computation or controlling the query execution. To address it, we propose a 
novel online AQP algorithm that processes a query in two stages to plan and 
execute sampling (\S \ref{sec:tsqa}).

\minihead{Statistical Challenges}
Confidence intervals derived from statistical theories, such as CLT, are widely 
used to analyze errors of AQP \cite{blinkdb,verdictdb,quickr,baq,taster,
sample+seek,simple-random-sampling,sampling-algebra,group-by-sample,join-synopses,
aqua-sigmod,start,online-aggregation,ripple-join,dbo,turbo-dbo,block-sampling-count,
haas1996selectivity,haas2004bi}. However, deriving valid confidence intervals
for AQP with block sampling brings up two challenges that are not addressed in 
existing literature.

First, we need to analyze errors when there are intermediate relational operations
(\eg, Join and Group By) between block sampling and aggregations. Prior work 
studies confidence intervals of aggregations computed on the output of block 
sampling \cite{pansare2011online,haas1996selectivity,hou1991statistical}. However, 
with multiple relational operations between aggregations and block sampling, 
the confidence interval can potentially be affected by interactions between the 
block sampling and relational operations. Previous research on interactions 
between row-level sampling and relational operations cannot be applied to 
block-level sampling because they cannot handle the dependence of rows from the same 
block \cite{sampling-algebra,quickr}. In this work, we propose sampling 
equivalence rules that establish the commutativity between block sampling and 
relational operations (\S \ref{subsec:nested}), allowing us to analyze errors of 
deep approximate queries that use block sampling.

Second, we need to obtain valid confidence intervals when multiple tables in a 
Join query are sampled at the block level. Existing literature studies the 
asymptotic distribution of the Join result when tables are sampled with the same 
sample size \cite{haas1996selectivity}. However, it is insufficient in our case 
because we target a richer sampling space where sample sizes for tables can 
be arbitrarily different. To address that, we extend the theoretical result of 
Hass et al. \cite{haas1996selectivity} to a general form and derive an estimation 
of the upper bound of the sampling variance to achieve error guarantees (\S 
\ref{subsec:join}).

Those challenges are crucial to formalizing block sampling in AQP with error 
guarantees. We unify our theoretical results into \bstats, which can also be 
used to further accelerate other online AQP algorithms beyond \name 
(\S \ref{subsec:eval-augment}).

\subsection{Workflow}\label{subsec:overview-alg}
\begin{figure}
    \centering
    \includegraphics[width=\linewidth]{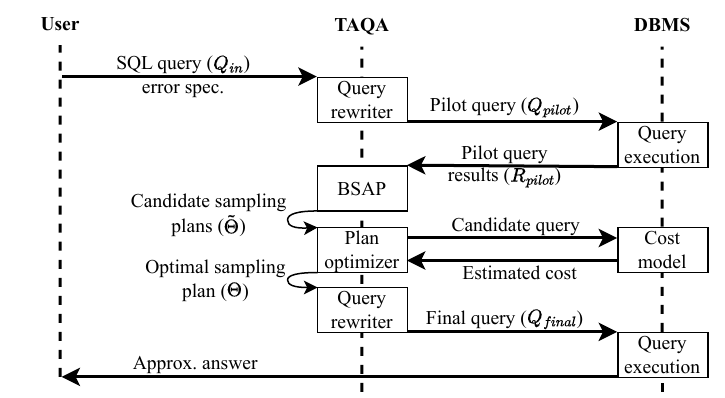}
    \caption{Workflow of \name.}
    \label{fig:archi}
\end{figure}
\name operates as a middleware between the user and the DBMS. Users may interact 
with \name in the same way as they interact with a DBMS except that \name takes 
additionally the error specification (\S \ref{subsec:error-semantic}) as input 
and produces an approximate answer.

On receiving the user's input, \name processes it with the \samplingscheme 
algorithm. \samplingscheme first rewrites the input query $\inQ$ into a pilot 
query $\pilotQ$ that computes necessary statistics for error analysis. Then,
\samplingscheme issues $\pilotQ$ to the DBMS and obtains the pilot result $R_{pilot}$.
Based on $R_{pilot}$ and the error specification, \samplingscheme incorporates 
\bstats to decide whether $\inQ$ can be approximated using block sampling. 
Specifically, \samplingscheme uses \bstats to analyze the error 
(\S \ref{sec:block}) and generates a set of candidate sampling plans, 
$\tilde{\Theta}$, that guarantees the error specification (\S \ref{subsec:plan}).
If \samplingscheme cannot identify any feasible sampling plans, \name will 
proceed to execute the original query $\inQ$.
Next, \samplingscheme interacts with the cost model of the DBMS to determine the 
optimal sampling plan $\Theta$ that minimizes the estimated cost 
(\S \ref{subsec:optimize}). Finally, \samplingscheme rewrites $\inQ$ in the final 
query $\finalQ$ based on $\mysampleplan$ and issues $\finalQ$ to the DBMS. We 
visualize this workflow in Figure \ref{fig:archi}.

\subsection{Supported Queries} \label{subsec:support}
\name is designed to answer all queries the underlying DBMS supports by directly 
executing the original query on the DBMS when necessary. There are two cases where 
\name may fail to accelerate a query: (1) \samplingscheme does not support the 
query or (2) block sampling cannot accelerate the query. In the first case, 
\name directly passes the query to the DBMS without intercepting it. In the second 
case, \name intercepts the query processing with \samplingscheme but executes 
the original query eventually.

\name tries to intercept and accelerate arbitrary aggregation queries using 
\samplingscheme with the following exceptions. First, \name does not support non-linear 
aggregates, (\eg, \texttt{COUNT DISTINCT}, \texttt{MAX}, and \texttt{MIN}), or 
aggregates in Group By clauses (\eg, \texttt{GROUP BY COUNT(*)}). These queries
are challenging for AQP and not supported in prior techniques \cite{blinkdb,
quickr,verdictdb,sample+seek}. Moreover, if any subqueries are correlated, \name 
tries to replace correlated subqueries with Joins using pre-defined rules. If 
\name fails to apply rules, it falls back to executing the exact query, since 
executing the pilot query is expensive if the query is correlated 
\cite{verdictdb}.

\name may fail to accelerate extremely selective queries or queries with a large 
group cardinality. These two cases are challenging to support for sampling-based 
AQP \cite{verdictdb,quickr}. However, prior online AQP may still use sampling on 
those queries, which results in errors larger than the user-specified error 
\cite{quickr}. By contrast, \name incorporates sampling plan 
optimization (\S \ref{subsec:optimize}) to determine that sampling is infeasible 
or not efficient for such queries. \name defaults to executing exact queries in 
this case.

\subsection{Error Specifications and Semantics} \label{subsec:error-semantic}
Finally, we describe how users can specify error requirements in \name and define 
the statistical semantics of the error specification. \name allows users to 
specify a maximum relative error of all aggregates and a probability or confidence, 
which are the same specifications prior work allows \cite{blinkdb,
error-bounded-stratified,quickr,taster}.  We present an example query with error 
specifications below:
\begin{lstlisting}[keywords={SELECT,COUNT,FROM,TABLESAMPLE,WHERE,SUM,AVG,AND,INNER,JOIN,GROUP,BY,INTERVAL,ERROR,WITHIN,PROBABILITY}]
-- example query
SELECT l_returnflag, l_linestatus, SUM(l_extendedprice) 
       as agg_1, AVG(l_extendedprice) as agg_2
FROM lineitem
WHERE l_shipdate <= date '1998-12-01' - interval '90 day'
GROUP BY l_returnflag, l_linestatus
-- error specification
ERROR WITHIN 5%
PROBABILITY 95%
\end{lstlisting}

Intuitively, the error specification in the example query means that the probability 
of relative errors of \texttt{agg\_1} and \texttt{agg\_2} being simultaneously 
less than 5\% is at least 95\%. Formally, consider a query with $k$ aggregations 
and $m$ groups, resulting in a set of $k\cdot m$ aggregates: 
$\{\mu_{i,j} | 1 \le i \le k, 1 \le j \le m\}$. We denote $\hat \mu_{i,j}$ as 
the estimate of the aggregate $\mu_{i,j}$. An error specification with a 
relative error $e$ and confidence $p$ means that \name will output a set of 
estimated aggregates such that the probability 
that all estimates simultaneously have a relative error no greater than $e$ 
(\ie, the probability of the intersection of events) is at least $p$. 
Namely,
\begin{equation}
\mathbb{P}\left[\bigcap_{\substack{1\le i \le k, 1\le j \le m}}\ 
    \left|\frac{\mu_{i,j} - \hat \mu_{i,j}}{\mu_{i,j}}\right| \le e 
    \right] \ge p \label{eq:error-semantics}
\end{equation}

Our error specification limits the joint probability of all estimates having 
unexpected errors across aggregations and groups. This is stronger and more 
intuitive for users to interpret than prior work that only reasons about the error 
of each estimate independently \cite{quickr,blinkdb}. We will tackle the joint 
probability in the next section.

\section{Two-Stage Query Approximation}\label{sec:tsqa}
In this section, we focus on addressing the algorithmic challenges mentioned in 
Section \ref{subsec:bg-clg}. We introduce our two-stage query approximation 
algorithm to answer the following three questions:
\begin{enumerate}[leftmargin=*]
\item How can we find a \textit{valid} sampling plan that satisfies the user's 
        error specification (\S \ref{subsec:plan})?
\item How can we find an \textit{efficient} sampling plan that minimizes the 
        execution cost of \samplingscheme (\S \ref{subsec:optimize})?
\item How can we achieve (1) and (2) via query rewriting (\S \ref{subsec:rewrite})?
\end{enumerate}

\subsection{Sample Planning via Pilot Query Processing} \label{subsec:plan}
We determine sampling plans that satisfy the user's error specification by 
executing a pilot query that inspects the statistical property of the input 
query. To understand what should be inspected through the pilot query, 
we first parametrize the sampling plan.

Given a query with $k$ tables, a sampling plan should specify the sampling 
method and corresponding sampling parameters for each table. To avoid
modifying the DBMSs, we use Bernoulli sampling 
where each unit (\eg, a row or a block) is independently selected with a fixed 
sampling rate or probability $\theta$ without replacement. In many DBMSs 
\cite{sql-2003}, row-level Bernoulli sampling is supported through the 
\texttt{TABLESAMPLE BERNOULLI} clause while block-level Bernoulli sampling is 
expressed via \texttt{TABLESAMPLE SYSTEM}. 

Although Bernoulli sampling produces variable sample sizes, we can still
provide error guarantees by parameterizing the $k$-table sampling plan into a 
list of $k$ sampling rates: $\Theta=[\theta_1, \ldots, \theta_k]$. This approach 
allows us to account for the variability in sample sizes when deriving 
guarantees. In the rest of this section, we present the statistical intuition 
and formulation underlying this approach.

\minihead{Statistical Intuition}
Consider the scenario where the query involves one aggregation computed on one 
group. We can calculate the confidence interval to analyze the relative error of 
the estimate. Suppose we have a population with mean $\mu$ that is estimated with 
a sample mean $\hat\mu$. We denote $Var[\hat\mu]$ as the variance of $\hat\mu$. 
We can establish the following CLT-based confidence interval for $\mu$:
\begin{equation}
    \mathbb{P}\left[
        \hat\mu - z_{(1+p)/2}\sqrt{Var[\hat\mu]} 
        \le \mu \le 
        \hat\mu + z_{(1+p)/2}\sqrt{Var[\hat\mu]}
    \right] \ge p \label{eq:ci}
\end{equation}
where $z_{(1+p)/2}$ is the $(1+p)/2$ percentile of the standard normal 
distribution. When $\mu$ is positive, Inequality \ref{eq:ci} can be equivalently 
converted to an inequality on the relative error of $\hat\mu$:
\begin{equation*}
    \mathbb{P}\left[
        \left|\frac{\hat \mu - \mu}{\mu}\right| \le 
        \frac{z_{(1+p)/2}\sqrt{Var[\hat\mu]}}{\mu}
    \right] \ge p
\end{equation*}
That is, to satisfy the error specification with a maximum relative error $e$ 
and a confidence $p$, it is sufficient to ensure that
\begin{equation}
    z_{(1+p)/2} \cdot \sqrt{Var[\hat\mu]} \cdot \mu^{-1} \le e \label{eq:rel-error-gt}
\end{equation}

With Inequality \ref{eq:rel-error-gt}, we observe that determining $\mu$ and 
$Var[\hat\mu]$ is the key to satisfying the error specification. However, $\mu$ and 
$Var[\hat\mu]$ are unknown unless we execute the input query. To address this, 
prior work maintains pre-computed samples \cite{blinkdb,baq,sample+seek} or 
modifies DBMSs to monitor statistics during the query execution \cite{quickr}.
In \samplingscheme, 
we estimate $\mu$ and $Var[\hat\mu]$ by executing a pilot query that is 
dynamically rewritten from the input query. 

To minimize the latency overhead, the pilot query samples the table that 
is most expensive to load. This is achieved in two steps. First, \name obtains 
an execution plan of the original query to inspect the table loading method used 
by the DBMS. A table is considered as a candidate to sample if the DBMS uses 
scanning.\footnote[3]{\textcolor{black}{Due to the overhead, sampling is often 
slower than index seeking, which is often used when the table is indexed and 
predicates are highly selective.}}
Second, \name queries the estimated table 
cardinality maintained by the DBMS and samples the largest table that will be 
scanned.

From the pilot query result, we can estimate the lower bound of $\mu$ 
and the upper bound of $Var[\hat\mu]$ where $\hat\mu$ will be computed using a 
sampling plan $\Theta$ in the final query. We first focus on sampling one table 
in the final query and then address sampling multiple tables in Section \ref{subsec:join}.
Assuming $\hat\mu$ is sub-Gaussian,\footnote[4]{\textcolor{black}{Sub-Gaussian 
assumption holds for any bounded distribution based on Hoeffding's inequality. 
Estimates of aggregate are bounded as tables have finite cardinality.}} these 
bounds are estimated using standard technique based on the
CLT \cite{intro-math-stats}, a widely used approach in AQP 
\cite{verdictdb,quickr,sample+seek,blinkdb}. The sub-Gaussian 
assumption implies that $\hat\mu$ has a fast decaying tail bounded above by a 
Gaussian distribution. Then, the analytical expression of the bounding 
distribution can be derived using CLT asymptotically. We present a detailed 
derivation in Appendix \ref{subsec:app-bounds}.

For instance, given sample size $n$ and sample variance $\hat\sigma$, we have 
$\frac{\mu - \hat\mu}{\hat\sigma/n} \rightarrow t_{n-1}$ as $n \rightarrow 
\infty$. With sufficiently large $n$, we have:
\begin{equation*}
    \mathbb{P}\left[\mu \ge \hat\mu - t_{n-1,1-\delta}\cdot \hat\sigma \cdot n^{-1}\right] \ge 1-\delta
\end{equation*}
where $\delta$ is a pre-specified failure probability and $t_{n-1,1-\delta}$
is $1-\delta$ percentile of Student's t distribution.
We can obtain the upper bound of $Var[\hat\mu]$ similarly since the ratio 
between the variance $\sigma^2$ and its estimate $\hat\sigma^2$ converges to 
chi-squared distribution: $(n-1)\hat\sigma^2/\sigma^2 \rightarrow \chi^2_{n-1}$.
Furthermore, as $n$ follows the binomial distribution $Bin(N,\theta)$, we 
can estimate the lower bound of $n$ given the upper bound of the population size 
$N$ that is obtained using the pilot query result.

However, this is not sufficient to guarantee the confidence $p$ since these 
bounds obtained from statistical distributions are \textit{probabilistic}. 
A probabilistic bound can fail with a \textit{controllable} probability 
\cite{intro-math-stats}. Therefore, to ensure the overall validity, we adjust 
the confidence $p$ based on the failure probability of all probabilistic bounds 
we used in the derivation, which leads to the confidence $p'$ in 
Procedure \ref{theorem:rel-error}.

\minihead{Formal Description} 
We formalize the intuition as follows.

\newtheorem{myProcedure}{Procedure}

\begin{myProcedure}\label{theorem:rel-error}
Consider an input query $\inQ$ that computes a linear aggregate $\mu$. Suppose
a user specifies a maximum relative error $e$ and a confidence $p$. In the first
stage, we rewrite $\inQ$ into a pilot query $\pilotQ$ with sampling rate 
$\theta_p$. Based on the result of $\pilotQ$, we can 
calculate (1) $L_\mu$: a probabilistic lower bound of $\mu$, and (2) 
$U_V[\Theta]$: a probabilistic upper bound of $Var[\hat\mu]$ given a sampling 
plan $\Theta$. Namely, with pre-specified failure probabilities $\delta_1$ 
and $\delta_2$, we can obtain the following inequalities:
\begin{gather}
    \mathbb{P}\left[\mu \ge L_\mu\right] \ge 1 - \delta_1 \label{eq:mean-lb}\\
    \mathbb{P}\left[Var[\hat\mu] \le U_V[\Theta]\right] \ge 1 - \delta_2 \label{eq:var-ub} 
\end{gather}
We find a sampling plan $\Theta$ such that the following inequality holds
\begin{equation}
    z_{(1+p')/2} \cdot \sqrt{U_V[\Theta]} \cdot L_\mu^{-1} \le e \label{eq:key-cstr}
\end{equation}
where $p'$ is the adjusted confidence based on the probabilities in Inequalities 
\ref{eq:mean-lb} and \ref{eq:var-ub}:
\begin{equation*}
    p' = p + \delta_1 + \delta_2 \notag
\end{equation*}
\end{myProcedure}

Procedure \ref{theorem:rel-error} involves three tunable parameters: 
$\theta_p$, $\delta_1$, and $\delta_2$. Intuitively, a smaller $\theta_p$ 
reduces overhead of executing $Q_{pilot}$, while a larger $\theta_p$ results in 
tighter estimations. Similarly, an optimal allocation of probabilities 
(configurations of $\delta_1$ and $\delta_2$) can lead to smaller sampling rates 
and thus higher query speedups. By default, we set $\theta_p=0.05\%$ and 
$\delta_1=\delta_2=1-p'=\frac{1-p}{3}$. In line with existing literature 
\cite{abae,inquest,quickr,hou1991error,hertzog2008considerations}, we recommend 
configuring $\theta_p$ to ensure that the pilot sample typically includes more 
than 30 units. For those requiring optimal performance, we suggest efficiently 
tuning $\delta_1$ and $\delta_1$ using cached pilot query results.

Following Procedure \ref{theorem:rel-error}, we can obtain an estimated 
aggregate $\hat\mu$ that satisfies the user's error specification. We formally 
state the guarantee in Theorem \ref{theorem:guarantee} and present the proof in
Appendix \ref{sec:app-proof}.

\begin{theorem}\label{theorem:guarantee}
Assuming that the aggregate to estimate is sub-Gaussian, if the input 
query $\inQ$ is
rewritten into a final query $\finalQ$ based on the sampling plan $\Theta$ obtained 
from the Procedure \ref{theorem:rel-error}, the estimated aggregate $\hat\mu$ 
computed in $\finalQ$ satisfies the inequality: 
$\mathbb{P}\left[\left|(\hat\mu - \mu)/\mu\right| \le e\right] \ge p$.
\end{theorem}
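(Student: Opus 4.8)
The plan is to run an error-budget argument that stitches together the three probabilistic facts produced by Procedure \ref{theorem:rel-error} --- the CLT coverage guarantee, the lower bound $L_\mu$ on $\mu$, and the upper bound $U_V[\Theta]$ on $Var[\hat\mu]$ --- by exhibiting the target event $A = \{\,|(\hat\mu-\mu)/\mu| \le e\,\}$ as a superset of an intersection of three events whose failure probabilities sum to at most $1-p$. First I would invoke the CLT-based confidence interval of Inequality \ref{eq:ci}, but instantiated at the \emph{adjusted} confidence $p'$ rather than $p$. Under the sub-Gaussian assumption the standardized final estimator is asymptotically dominated by a Gaussian, so the event
\[
    E_0 = \left\{\, \left|\frac{\hat\mu-\mu}{\mu}\right| \le \frac{z_{(1+p')/2}\sqrt{Var[\hat\mu]}}{\mu} \,\right\}
\]
has probability at least $p'$; this is exactly the reformulation that leads to Inequality \ref{eq:rel-error-gt}, now with the half-width written in terms of the true $Var[\hat\mu]$ and $\mu$. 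Note that $E_0$ concerns the final sample, whereas the next two events concern the pilot.

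Next I would set $E_1 = \{\mu \ge L_\mu\}$ and $E_2 = \{Var[\hat\mu] \le U_V[\Theta]\}$, which by Inequalities \ref{eq:mean-lb} and \ref{eq:var-ub} satisfy $\mathbb{P}[E_1] \ge 1-\delta_1$ and $\mathbb{P}[E_2] \ge 1-\delta_2$. The key deterministic step is that on $E_1 \cap E_2$ the random half-width appearing in $E_0$ is dominated by the \emph{planned} quantity: since $\mu \ge L_\mu > 0$ and $Var[\hat\mu] \le U_V[\Theta]$,
\[
    \frac{z_{(1+p')/2}\sqrt{Var[\hat\mu]}}{\mu} \le \frac{z_{(1+p')/2}\sqrt{U_V[\Theta]}}{L_\mu} \le e,
\]
where the last inequality is precisely the planning constraint \ref{eq:key-cstr} that $\Theta$ was chosen to satisfy. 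Consequently $E_0 \cap E_1 \cap E_2 \subseteq A$.

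Finally I would close with Boole's inequality, which requires no independence between the pilot and final samples, to obtain $\mathbb{P}[A] \ge \mathbb{P}[E_0 \cap E_1 \cap E_2] \ge 1 - (1-p') - \delta_1 - \delta_2 = p' - \delta_1 - \delta_2$, and then substitute $p' = p + \delta_1 + \delta_2$ to conclude $\mathbb{P}[A] \ge p$, matching the semantics of Inequality \ref{eq:error-semantics} for a single aggregate.

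The main obstacle I anticipate is handling the interplay between the two sampling stages cleanly: $L_\mu$, $U_V[\Theta]$, and the plan $\Theta$ itself are all functions of the pilot sample, so the half-width in $E_0$ is a random threshold rather than a fixed constant, and $E_0$ lives on a different source of randomness than $E_1$ and $E_2$. The argument above deliberately sidesteps any independence assumption by working on the joint probability space and feeding only marginal probabilities into the union bound. A more refined variant could instead condition on the pilot $\sigma$-algebra $\mathcal{F}_{pilot}$, use that the asymptotic CLT coverage $\mathbb{P}[E_0 \mid \mathcal{F}_{pilot}] \ge p'$ holds for \emph{every} realized plan, and apply the tower property, which would even yield the slightly stronger bound $\mathbb{P}[A] \ge p'(1-\delta_1-\delta_2) \ge p$. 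A secondary point to be careful about is that the coverage in $E_0$, as well as the $t$- and $\chi^2$-based bounds underlying $L_\mu$ and $U_V[\Theta]$ in Procedure \ref{theorem:rel-error}, are asymptotic in the sample size, so the guarantee should be stated in the large-sample regime where those limits are valid.
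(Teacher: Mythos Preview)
Your proposal is correct and follows essentially the same approach as the paper's own proof: define the three events (CLT coverage at adjusted level $p'$, $\mu \ge L_\mu$, and $Var[\hat\mu] \le U_V[\Theta]$), show their intersection is contained in the target event via the planning constraint~\ref{eq:key-cstr}, and apply Boole's inequality to obtain $\mathbb{P}[A] \ge p' - \delta_1 - \delta_2 = p$. Your write-up is in fact more explicit than the paper's about the deterministic containment step and about the two-stage randomness, but the skeleton is identical.
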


In \name, $L_\mu$ and $U_V[\Theta]$ cannot be naively obtained through standard 
techniques since \name uses block sampling, instead of row-level sampling. 
Block sampling introduces correlations among data from the same block, which 
breaks the assumption of data independence in standard techniques \cite{blinkdb,
quickr,verdictdb,intro-math-stats}. We develop a set of novel statistical 
techniques, \bstats, to address that (\S \ref{sec:block}).

\minihead{Multi-Aggregate Queries}
It is common to calculate more than one aggregate in a single query by computing 
arithmetic combinations of multiple aggregations, specifying multiple aggregations, 
or grouping a table by columns. To guarantee the overall error specification on 
all aggregates, we need to adjust the error requirement (\ie, the relative error 
$e$ and the confidence $p$) for each aggregate.

\begin{table}
    \centering
    \caption{Upper bounds of relative errors of composite estimators with multiplication, division, and addition.}
    \label{tab:error-prop}
    \begin{tabular}{cc}
    \toprule
    \makecell[c]{Composite\\estimator} & Upper bound of relative error \\
    \midrule
    $\hat \mu_1 \cdot \hat \mu_2$ & $e_{\mu_1} + e_{\mu_2} + e_{\mu_1} \cdot e_{\mu_1}$ \\
    $\hat \mu_1 / \hat \mu_2$ & $(e_{\mu_1}+e_{\mu_2})/(1+\min(e_{\mu_1}, e_{\mu_2}))$ \\
    $\hat \mu_1 + \hat \mu_2$ & $\max(e_{\mu_1}, e_{\mu_2})$\\
    \bottomrule
    \end{tabular}
\end{table}
First, we discuss how \samplingscheme deals with composite aggregates that 
compute (nonlinear) arithmetic combinations of simple aggregates, such as the 
product of two \asum aggregates. In \samplingscheme, we handle composite 
aggregates by propagating the relative error of simple aggregates (\eg, the sum
aggregates) into the composite aggregates (\eg, the product). In the case of 
estimating the product of two simple aggregates, the relative error of the 
product can be bounded above by the relative errors of the factors:
\begin{equation}
    \left|\frac{\hat\mu_1 \cdot \hat\mu_2 - \mu_1 \cdot \mu_2}{\mu_1 \cdot \mu_2}\right| 
    \le \left|\frac{\hat\mu_1 - \mu_1}{\mu_1}\right|\left|\frac{\hat\mu_2 - \mu_2}{\mu_2}\right| 
    + \left|\frac{\hat\mu_1 - \mu_1}{\mu_1}\right| + \left|\frac{\hat\mu_2 - \mu_2}{\mu_2}\right| 
    \notag
\end{equation}
This inequality shows that it is sufficient to limit the relative error of 
factors for the relative error of the product to satisfy the error specification. 
In \name, we allocate the relative error requirement evenly across simple 
aggregates. Therefore, each simple aggregate will need to satisfy a relative 
error of $e'=\sqrt{e+1} - 1$.

We refer to this way of using the relative error of simple aggregates to limit 
the relative error of a composite aggregate as \textit{error propagation}. 
We introduce propagation rules for multiplication, division, and addition in 
Table \ref{tab:error-prop}, which are inspired by uncertainty propagations 
\cite{error-analysis,error-prop-multi,uncertainty-prop}. The validity of 
these rules can be proved with straightforward algebraic transformation. We 
present the detailed proof in Appendix \ref{sec:app-proof}.

Second, in the case where a query computes multiple aggregates, \samplingscheme 
adjusts the confidence $p$ and applies the procedures in Procedure 
\ref{theorem:rel-error} to each of them. Based on our error semantics (\S 
\ref{subsec:error-semantic}), \samplingscheme should guarantee that the joint 
probability of the relative error of each estimate being less than $e$ is at 
least $p$. To analyze the joint probability, we apply Boole's inequality, which 
decomposes the probability of a union of events into the sum of probabilities of 
individual events:
\begin{align*}
\mathbb{P}\left[\bigcap_{\substack{1\le i \le k, 1\le j \le m}} \hat e_{i,j} \le e \right]
&= 1 - \mathbb{P}\left[\bigcup_{\substack{1\le i \le k, 1\le j \le m}} \hat e_{i,j} \ge e \right] \\
&\ge 1 - \sum_{i=1}^k \sum_{j=1}^m \mathbb{P}\left[\hat e_{i,j} \ge e\right]
\end{align*}
where $\hat e_{i,j} = |(\mu_{i,j} - \hat\mu_{i,j})/\mu_{i,j}|$ is the relative 
error of the aggregate estimate $\hat\mu_{i,j}$. This inequality shows that it 
is sufficient to limit the summation of the confidence of individual aggregates 
for the overall confidence to hold. With such decomposition, we can conveniently 
allocate the confidence to each aggregate. In \name, we allocate the confidence 
evenly. Namely, if we have $k\cdot m$ aggregates, each aggregate $\mu_{i,j}$ 
needs to satisfy its relative error requirement with confidence of 
$p_{i,j} = 1-\frac{1-p}{km}$.

\minihead{Handling Missing Groups}
Till now, we have been focusing on analyzing the error of estimations. However, 
for queries with Group By clauses, it is possible to miss groups in the pilot 
query due to block sampling. In this case, we may result in a sampling plan that 
does not guarantee errors of aggregates of missed groups. To address it, 
\samplingscheme controls the sampling rate of the pilot query to ensure that 
groups larger than a user-specified value $g$ are not missed with a high
probability. If all groups output by the query are smaller than $g$, 
\samplingscheme will end up generating a sampling plan with large sampling 
rates, making the approximate query more expensive than the original query. Such 
sampling plans will be rejected during the sampling plan optimization 
(\S \ref{subsec:optimize}). Consequently, \name will execute these queries 
exactly.

To ensure that all groups with size greater than $g$ are included in 
the pilot query results 
with a high probability, we propose the following lemma that computes the 
required sampling rate of the pilot query. We present the proof of the 
lemma in Appendix \ref{sec:app-proof}.
\begin{lemma}\label{lemma:group-error}
For a table $T$ with a block size $b$, block sampling with a sampling rate 
$\theta$ satisfying the condition below ensures that the probability of missing 
a group of size greater than $g$ is less than $p_f$.
\begin{equation}
    \theta \ge 
    1 - \left(1-\left(1-p_f\right)^{\lceil g/b \rceil/|T|}\right)^{1/\lceil g/b \rceil} 
    \label{eq:group-error}
\end{equation}
\end{lemma}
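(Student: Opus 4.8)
The plan is to reduce the event ``some large group is missed'' to an event about independent block indicators and then exploit positive association to isolate a single worst-case per-group miss probability. I model block sampling exactly as fixed in Section~\ref{subsec:plan}: each block carries an independent Bernoulli$(\theta)$ indicator, and a group is \emph{missed} precisely when every block that contains one of its rows is left unsampled.

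First I would bound the miss probability of one group. A group of size greater than $g$ has more than $g$ rows, and since each block holds at most $b$ rows, those rows must occupy at least $\lceil g/b\rceil$ distinct blocks; laying the rows out over \emph{more} blocks only enlarges the set of blocks that must all simultaneously be unsampled. Writing $c=\lceil g/b\rceil$, the probability that a fixed large group is missed is therefore at most $(1-\theta)^{c}$, regardless of the actual layout of its rows.

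Next I would combine the groups. The event ``group $j$ is captured'' (at least one of its blocks is sampled) is an increasing function of the independent block indicators, so by the Harris/FKG inequality these events are positively associated and $\mathbb{P}[\text{all large groups captured}]\ge\prod_j \mathbb{P}[\text{group }j\text{ captured}]$, where each factor is at least $1-(1-\theta)^{c}$. Since $b\ge 1$ gives $g\ge\lceil g/b\rceil=c$, every large group occupies more than $c$ rows, so the table contains at most $|T|/c$ of them; because $1-(1-\theta)^{c}\in(0,1)$, enlarging the true count to $|T|/c$ in the exponent only decreases the product, yielding $\mathbb{P}[\text{all large groups captured}]\ge\bigl(1-(1-\theta)^{c}\bigr)^{|T|/c}$. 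Imposing that this lower bound be at least $1-p_f$ and solving $\bigl(1-(1-\theta)^{c}\bigr)^{|T|/c}\ge 1-p_f$ for $\theta$ reproduces exactly Inequality~\ref{eq:group-error}.

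The main obstacle is the combination step: the per-group capture events share blocks and are not independent, so I cannot multiply their probabilities without justification. The observation that unlocks the proof is that sharing blocks can only \emph{help}—capturing a shared block captures several groups at once—so treating the groups as independent is the genuine worst case; the Harris/FKG inequality is what makes this rigorous. A secondary point requiring care is the direction of the inequality when substituting the conservative count $|T|/c$ into the exponent: I must confirm that over-counting the groups strengthens rather than weakens the guarantee, which holds precisely because the base lies in $(0,1)$.
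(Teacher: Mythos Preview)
Your proof is correct and follows the same structure as the paper's: bound the single-group miss probability by $(1-\theta)^{\lceil g/b\rceil}$, bound the number of large groups by $|T|/\lceil g/b\rceil$, and combine via positive association. Your appeal to the Harris/FKG inequality for the combination step is in fact cleaner than the paper's own argument, which handles the pairwise case by a direct conditional-probability computation and then asserts the extension to all groups without further justification.
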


Intuitively, Lemma \ref{lemma:group-error} calculates the minimum sampling rate
to maintain a high group coverage probability. This result extends the group
coverage probability of row-level sampling in prior work (\ie, Proposition 4 of 
\cite{quickr}) to block sampling. Empirically, with $g=200$ and $p_f=0.05$, no 
groups are missed for the queries we evaluated (\S \ref{sec:perf}). Nevertheless, 
there is an opportunity to integrate block sampling with indexes, such as the 
outlier index \cite{outlier}, to better support small-group queries, left to 
future work.

\subsection{Sampling Plan Optimization} \label{subsec:optimize}
For queries with multiple input tables, Procedure \ref{theorem:rel-error} often 
results in multiple valid sampling plans. \samplingscheme uses optimization 
methods to find the most efficient plan. We formulate sampling 
plan optimization as a mathematical optimization problem and derive a solution 
using cost models.

\minihead{Problem Formulation}
According to Procedure \ref{theorem:rel-error}, the error specification is satisfied 
if the sampling plan satisfies each constraint $\phi_{i,j}$ of $i$-th aggregation 
and $j$-th group, as defined below:
\begin{equation}
    \phi_{i,j}(\Theta) :\equiv \quad  z_{(1+p_{i,j})/2} \cdot \sqrt{U_{V_{i,j}}[\Theta]} \cdot L_{\mu_{i,j}}^{-1} \le e_{i,j} \notag
\end{equation}
where $p_{i,j}, e_{i,j}$ are the adjusted confidence and the relative error 
requirement, respectively. The overall constraint $\Phi(\Theta)$ is defined as 
the conjunction of all individual $\phi_{i,j}(\Theta)$.

However, the sampling plan space defined by $\Phi(\Theta)$ is too broad to 
locate the most efficient sampling plan quickly. To further narrow down the plan 
space, we introduce the following additional conditions. First, due to the 
overhead of sampling, a query with a sampling rate larger than 10\% can be as 
expensive as the exact query (Figure \ref{fig:motivate-block}). Thus, we only 
consider sampling plans with sampling rates smaller than 10\%, which is 
consistent with prior work \cite{quickr}. Second, we only consider sampling plans 
that minimize the sample rate of one of the tables. Finally, we only sample 
large tables that are expensive to load, using a similar approach to how 
we identify tables to sample in the pilot query. We choose tables that 
will be scanned (not seeked) by the DBMS and are of high cardinality (\eg, fact tables 
\cite{datawarehouse}). In our experiment, we set a threshold of 1 million rows. 
These constraints result in the following space of sampling plans for a query with 
$l$ large tables.
\begin{equation*}
    \tilde{\Theta} := \bigl\{\arg\min_{\Theta} \theta_i,\ s.t.\ \Phi(\Theta) 
    \wedge D(\Theta, S)\ \big|\ S \subset \{1, \ldots, l\}, i \in S  \bigr\}
\end{equation*}
where $D(\Theta, S)$ defines the domain of sampling plans:
\begin{equation}
D(\Theta, S) :\equiv\quad 
    \left(\forall_{i \in S}\ 0 < \theta_i \le 0.1\right) \wedge 
    \left(\forall_{i \notin S}\ \theta_i = 1\right)
    \notag
\end{equation}

In \name, we enumerate the sets of tables to sample and the individual table 
of which we aim to minimize the sampling rate. For each optimization problem, 
we use the trust region method for fast and robust convergence \cite{trust-region}.

\minihead{Cost-based Optimization}
The solved sampling plans $\tilde{\Theta}$ often contain more than one plan.
Among them, we must choose the most efficient one to execute. Unfortunately, 
measuring the exact cost is prohibitively expensive, as it requires executing the
plan. Furthermore, cost estimation is a challenging problem, lacking a universal
solution for all DBMSs \cite{wu2013predicting}. In \name, we use the cost model 
of the underlying DBMS to estimate the cost. Most DBMSs offer external APIs to 
quickly estimate the cost of a query without executing it \cite{postgres-cost,
sqlserver-cost,presto-cost,impala-cost}. For in-memory databases that may not 
have cost estimators, such as DuckDB \cite{duckdb}, we estimate the cost by the 
volume of scanned data. This is because data scanning can be much more expensive 
than data processing for in-memory databases \cite{duckdb}. Empirically, the 
latency to sampling plan optimization is negligible compared to the overall 
query execution (\S \ref{sec:cost}).

Furthermore, exact queries are likely to be cheaper to execute than approximate queries 
with large sampling rates, particularly when small errors are required for 
queries with high selectivity or large group cardinality. To address it, \name 
rejects inefficient sampling plans when the estimated cost is larger than that of 
the exact query. If no sampling plan is feasible, \name will execute the exact 
queries.

\subsection{Query Rewriting} \label{subsec:rewrite}
Throughout \samplingscheme, we use query rewriting to synthesize and execute 
intermediate queries on the underlying DBMS. We describe the high-level 
procedures to rewrite an arbitrary aggregation query into (1) a pilot query 
$\pilotQ$ which computes statistics required by Procedure \ref{theorem:rel-error} 
and (2) a final query $\finalQ$ which computes the final answer based on the 
sampling plan optimized in Section \ref{subsec:optimize}. We demonstrate the 
query rewriting with an example in Figure \ref{fig:rewriting-example}.

\begin{figure}[t]
    \centering
    \includegraphics[width=\linewidth]{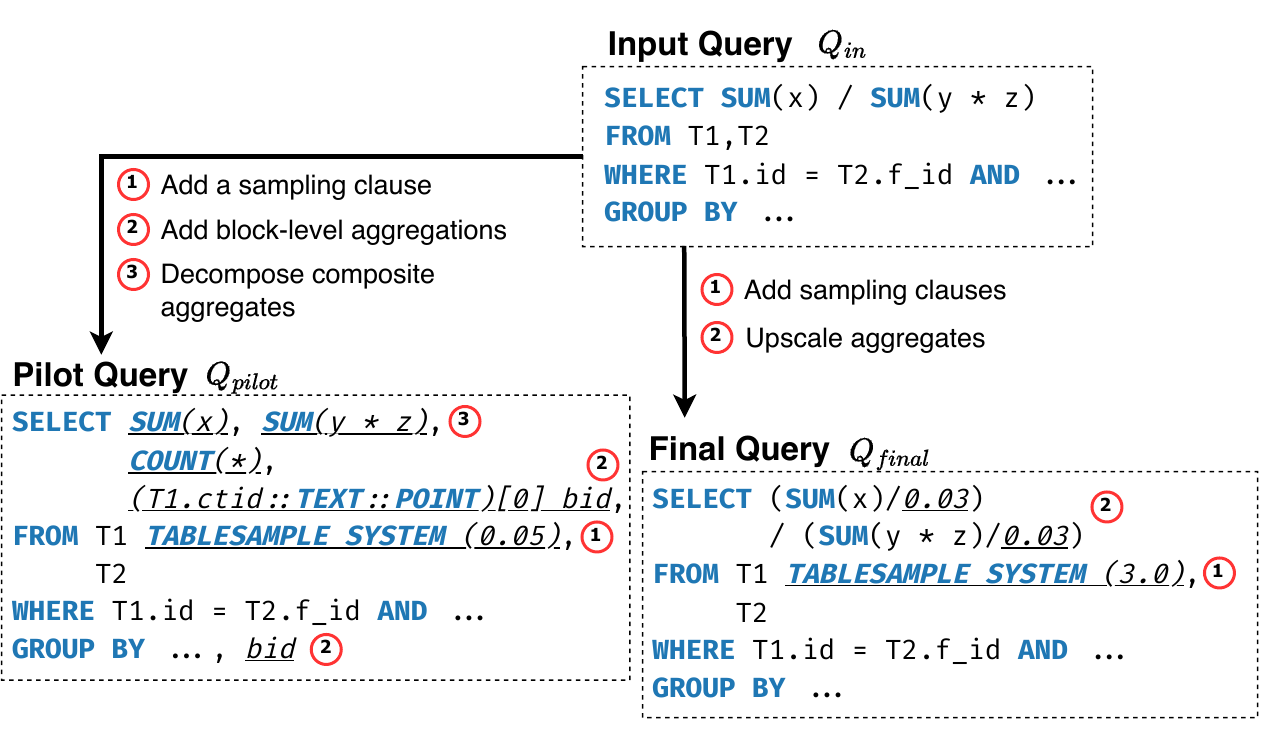}
    \caption{Demonstration of query rewriting with PostgreSQL syntax. Rewritten parts are emphasized.}
    \label{fig:rewriting-example}
\end{figure}

\minihead{Pilot Query Rewriting}
Based on Procedure \ref{theorem:rel-error}, $\pilotQ$ computes different 
statistics for different sampling methods. For row-level Bernoulli sampling, 
$\pilotQ$ can directly compute aggregates, corresponding standard deviations, 
and the sample size. For block sampling, $\pilotQ$ needs to calculate the 
aggregates and the size for each sampled block. This requires $\pilotQ$ to group 
the result by blocks. We achieve this by specifying the location of physical 
data blocks as a column expression.\footnote[5]{Nearly every DBMS that 
implements \texttt{TABLESAMPLE SYSTEM} supports outputing data location in some 
form \cite{psql-pageloc,sqlserver-pageloc,duckdb-pageloc,impala-pageloc,
presto-pageloc}.}
For example, in DuckDB, we divide the row ID by the block size; in PostgreSQL, 
we use the system column \texttt{ctid}. We summarize the rewriting procedures as 
follows:
\begin{enumerate}[leftmargin=*]
  \item We add a sampling clause (\eg, \texttt{TABLESAMPLE SYSTEM}) to the largest 
        table in $\inQ$.
  \item We incorporate the block location column of the largest table into 
        Group By clauses to compute block-level aggregates.
  \item We decompose composite aggregates (\eg, \texttt{SUM(x)/SUM(y)}) into 
        simple aggregates.
\end{enumerate}

\minihead{Final Query Rewriting}
The final query $\finalQ$ computes estimates of aggregates using the optimized 
sampling plan obtained. We summarize the rewriting procedures as follows:
\begin{enumerate}[leftmargin=*]
    \item We add sampling clauses according to the sampling plan.
    \item We upscale the \texttt{SUM}-like aggregates by dividing the product of 
          sampling rates.
\end{enumerate}

\section{Block Sampling for Efficient Online AQP} \label{sec:block}
In this section, we address the statistical challenges mentioned in Section 
\ref{subsec:bg-clg}. We first present motivations for using block sampling, 
examining its benefits and feasibility (\S \ref{subsec:block-motiv}). Next, we 
develop theoretical results that enable block sampling in AQP with statistical 
guarantees. That is, we obtain estimations required by Procedure 
\ref{theorem:rel-error} (\ie, $L_\mu$ and $U_V[\Theta]$) using block sampling 
for complex queries with nested subqueries (\S \ref{subsec:nested}) and Join 
(\S \ref{subsec:join}).

\subsection{Motivations}\label{subsec:block-motiv}
Throughout the history of AQP research, a wide range of sampling methods have 
been studied, but there is no universal best method \cite{chaudhuri2017approximate}.
Nevertheless, to simultaneously achieve \aPrioriError, \noMaintenance, and 
\notModifyDbms, we argue that block sampling, which samples data blocks,
is better than row-level sampling methods. We will explain this from three 
perspectives that are crucial in choosing sampling methods:
\begin{enumerate}[leftmargin=*]
    \item System Efficiency: volume of resulting data in a fixed time
    \item Statistical Efficiency: required sample size for a fixed error rate
    \item Feasibility: achieving statistical guarantees on various DBMSs
\end{enumerate}

\begin{figure}[t]
    \centering
    \includegraphics[width=\linewidth]{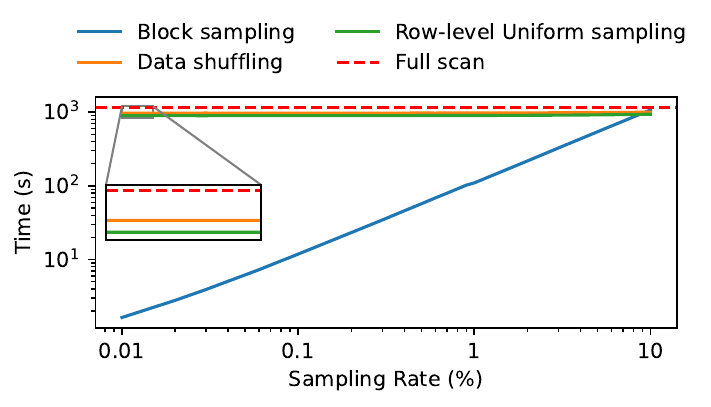}
    \caption{Comparison of the system efficiency of sampling methods that do not modify DBMSs. 
    At small sampling rates, such as 0.01\%, block sampling can be 500$\times$ faster than others.}
    \label{fig:motivate-block}
\end{figure}

\minihead{System Efficiency}
Across sampling methods that do not need DBMS modifications, block sampling 
achieves higher system efficiency than others. Block sampling improves system 
efficiency by skipping scanning non-sampled data. We evaluated the throughput of 
block sampling, row-level uniform sampling, and data shuffling on a 6B-row table. 
Figure \ref{fig:motivate-block} shows the latency to complete an \aavg query over 
the sampled data with sampling rates from 0.01\% to 10\% on PostgreSQL. At small 
sampling rates (\eg, 0.01\%), block sampling outperforms others by up to 
500$\times$. At large sampling rates (\eg, 10\%), all methods have comparable 
latencies to a full scan.

\minihead{Statistical Efficiency}
Block sampling can achieve comparable or higher statistical efficiency compared 
to row-level uniform sampling. Intuitively, block sampling introduces correlation 
across data from the same block, which seems to affect its statistical efficiency. 
However, in the case when the data of blocks is heterogeneous, the statistical 
efficiency of block sampling can be similar to or better than row-level uniform 
sampling. We analyze this with an \aavg query over a table 
$\{X_i|1 \le i \le N\cdot b\}$ of $N$ blocks and a consistent block size 
$b$.\footnote[6]{The analysis based on varied block sizes can be similarly derived 
by treating the block size as a random variable.} We present the theoretical 
result in Lemma \ref{lemma:comp-stats} and defer the proof to Appendix \ref{sec:app-proof}.

\begin{lemma} \label{lemma:comp-stats}
Let $\sigma^2_j$ be the variance of data in the $j$-th block. The ratio between 
the sample size of block sampling and that of row-level uniform sampling to 
achieve the same accuracy in expectation is 
$b\left(1 - \mathbb{E}\left[\sigma_j^2\right]\bigl/Var\left[X_i\right]\right)$.
\end{lemma}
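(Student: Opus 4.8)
The plan is to compare the variance of the AVG estimator under the two sampling schemes at a matched number of sampled rows, and then read off the ratio of sample sizes needed to equalize those variances. First I would fix the estimators. For row-level uniform sampling of $n_r$ rows, the natural estimator of the mean is the sample mean $\hat\mu_r$, whose variance is $Var[\hat\mu_r] = Var[X_i]/n_r$ to leading order. For block sampling of $m$ blocks—yielding $n_b = m\,b$ rows—the estimator is the average of the sampled rows, which (because every block has the same size $b$) equals the average of the sampled block means $\bar X_j$; hence $Var[\hat\mu_b] = Var[\bar X_j]/m = b\,Var[\bar X_j]/n_b$.

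The key step is to rewrite $Var[\bar X_j]$ in terms of the within-block variances $\sigma_j^2$ via the law of total variance. Drawing a uniformly random row $X_i$, let $J$ denote the (uniform) index of its block; then $\mathbb{E}[X_i \mid J = j] = \bar X_j$ and $Var[X_i \mid J = j] = \sigma_j^2$. The total-variance decomposition gives $Var[X_i] = \mathbb{E}[\sigma_j^2] + Var[\bar X_j]$, so $Var[\bar X_j] = Var[X_i] - \mathbb{E}[\sigma_j^2]$. Substituting into the block-sampling variance yields $Var[\hat\mu_b] = b\,(Var[X_i] - \mathbb{E}[\sigma_j^2])/n_b$.

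To finish, I would equate the two estimator variances (the condition for ``the same accuracy''), namely $b\,(Var[X_i] - \mathbb{E}[\sigma_j^2])/n_b = Var[X_i]/n_r$, and solve for the sample-size ratio, obtaining $n_b/n_r = b\,(1 - \mathbb{E}[\sigma_j^2]/Var[X_i])$, exactly the claimed expression. Intuitively, block sampling pays a factor $b$ because each sampled block contributes $b$ correlated rows, but recovers the factor $1 - \mathbb{E}[\sigma_j^2]/Var[X_i]$, which shrinks as blocks become internally homogeneous (small within-block variance) and hence more heterogeneous across blocks—precisely the regime in which block sampling is competitive.

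The main obstacle is justifying the leading-order variance formulas under the actual sampling mechanism, which is why the statement is qualified ``in expectation.'' Bernoulli sampling draws a random number of units, so $n_r$ and $m$ should be read as expected sample sizes, and the identity $Var[\hat\mu_b] = Var[\bar X_j]/m$ relies on treating the sampled block means as (approximately) independent—exact under with-replacement sampling and correct to first order for Bernoulli sampling at the small rates of interest, once finite-population corrections are dropped. I would state these modeling assumptions explicitly (equal block sizes, large $N$, negligible finite-population correction) so the two variance expressions hold to leading order; none of them alters the final ratio, which depends only on the variance-decomposition identity $Var[X_i] = \mathbb{E}[\sigma_j^2] + Var[\bar X_j]$.
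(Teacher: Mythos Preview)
Your proposal is correct and follows essentially the same approach as the paper: both compute the variance of the two mean estimators, invoke the law of total variance to write the between-block variance as $Var[X_i]-\mathbb{E}[\sigma_j^2]$, equate the two estimator variances, and solve for the sample-size ratio. The only cosmetic difference is notation (the paper uses $n_b$ for the number of sampled \emph{blocks}, whereas you use $m$ for blocks and $n_b=mb$ for rows), and your explicit discussion of the ``in expectation'' caveat and finite-population corrections is more careful than the paper's own treatment.
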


Based on Lemma \ref{lemma:comp-stats}, we analyze the statistical efficiency of 
block sampling in two cases. First, when each data block is heterogeneous (\ie, 
$\mathbb{E}\left[\sigma_i^2\right] \rightarrow Var[X_{i,j}]$), the required sample 
size for block sampling can be smaller than that of row-level uniform sampling,
achieving better statistical efficiency. This can happen when the underlying 
DBMS has large data blocks. Second, when each data block is homogeneous 
(\ie, $\mathbb{E}\left[\sigma_i^2\right] \rightarrow 0$), the required sample size 
for block sampling is up to $b$ times that of row-level uniform sampling. We 
found that this rarely happens, especially with deep queries or complex 
predicates, and is often offset by the system efficiency of block 
sampling.

\minihead{Feasibility}
Finally, we evaluate whether it is feasible to use block sampling to approximately 
process arbitrary aggregation queries. We identify two key criteria for this 
to happen. First, can we obtain unbiased estimations \cite{quickr}? It is easy 
to verify that estimations of linear aggregates using block sampling are unbiased. 
For example, the \asum aggregate can be approximated without bias by adding 
summations of data blocks divided by the sampling rate. Second, can we achieve 
statistical guarantees of errors \cite{chaudhuri2017approximate}? For queries 
computing aggregates directly on the output of block sampling, we can achieve
error guarantees by analyzing block-level statistics \cite{hou1991statistical,
haas1996selectivity,pansare2011online}. For example, we can obtain a confidence 
interval of the mean of the sum of each block with standard CLT. However, it is 
non-trivial to achieve error guarantees for deep nested queries and Join queries. 
We dedicate the rest of this section to resolving it.

\begin{figure*}
    \centering
    \includegraphics[width=\linewidth]{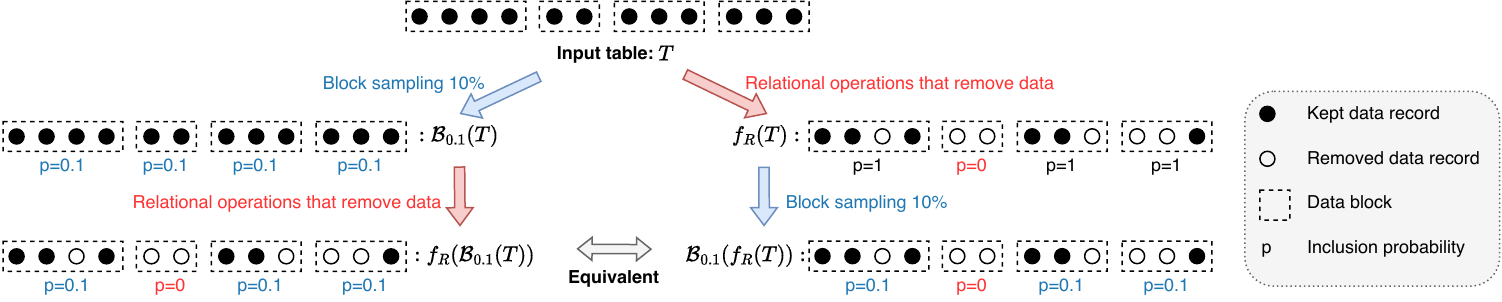}
    \caption{Demonstration of the commutativity between block sampling $\mathcal{B}$ and relational operations $f_R$ that remove data (\eg, \texttt{WHERE}, \texttt{JOIN} conditions, and \texttt{GROUP BY}); Order of operations does not affect the inclusion probability of each data block.}
    \label{fig:demo-equiv}
\end{figure*}
\subsection{Deep Nested Queries}\label{subsec:nested}
Achieving statistical guarantees for sampling-based AQP on deep nested queries 
is challenging, especially for non-uniform sampling methods \cite{quickr,sampling-algebra}, 
such as block sampling. This is because the output of sampling is manipulated by
subsequent relation operations, which potentially changes the statistical 
distribution of the sample. We use the following pair of queries as an example 
to demonstrate such a situation:
\begin{lstlisting}[keywords={SELECT,SUM,FROM,TABLESAMPLE,WHERE,DATE,AND,SYSTEM,JOIN,ON,AS}]
-- Q1: the query we execute
SELECT SUM(l_extendedprice * l_discount)
FROM lineitem TABLESAMPLE SYSTEM (0.5%) JOIN parts 
ON partkey 
WHERE l_shipdate >= DATE '1994-01-01' AND ...
-- Q2: the query we analyze
SELECT SUM(l_extendedprice * l_discount)
FROM ( SELECT * FROM lineitem JOIN parts ON partkey
       WHERE l_shipdate >= DATE '1994-01-01' AND ... ) 
AS cte TABLESAMPLE SYSTEM (0.5%)
\end{lstlisting}

We can obtain the confidence interval for Q2 by treating the sum of each block 
as a random variable, similar to prior work \cite{hou1991statistical,
haas1996selectivity,pansare2011online}. However, it is unclear how to calculate 
the confidence interval for Q1 due to the Join and filters between block sampling 
and the aggregation. In this section, we address this issue by analyzing the 
interaction between block sampling and relational operations and establishing 
rules for sampling equivalence.

\minihead{Intuition}
In general, we prove that block sampling is commutative with most relational 
operations, including projection, selection, Join, Group By, and Union. 
In Figure \ref{fig:demo-equiv}, we demonstrate that exchanging block sampling 
with any relational operation that removes data does not affect the probability 
distribution of the sample. For relational operations that add data (\eg, Join), 
we can always associate added data with a data block where block sampling operates.

\minihead{Formalization}
To formalize and prove this intuition, we define the notion of sampling equivalence 
in terms of sampling probability.
\begin{definition}\label{def:commutativity}
Two sampling procedures, $\mathcal{S}_1$ and $\mathcal{S}_2$, for a set of $k$ 
relations $\{T_1, \ldots, T_k\}$, where $k \ge 1$, are said to be equivalent, 
denoted as
\begin{equation}
    \mathcal{S}_1\left(\{T_1, \ldots, T_k\}\right) \Leftrightarrow 
    \mathcal{S}_2\left(\{T_1, \ldots, T_k\}\right) \notag
\end{equation}
if, for any possible sample result $R$, the probability of obtaining $R$ is the 
same under both sampling procedures $S_1$ and $S_2$, \ie,
\begin{equation}
    \forall R,\ \mathbb{P}\left[\mathcal{S}_1\left(\{T_1, \ldots, T_k\}\right)=R\right] 
    = \mathbb{P}\left[\mathcal{S}_2\left(\{T_1, \ldots, T_k\}\right)=R\right]. \notag
\end{equation}
\end{definition}
Next, we derive an important property of the sampling equivalence: the identity 
of the probability distribution of aggregates, as shown in the following 
proposition. We present the proof in Appendix \ref{sec:app-proof}.
\begin{proposition}\label{prop:identical-agg}
Let $\mathcal{S}_1$ and $\mathcal{S}_2$ be two equivalent sampling procedures. 
For any aggregate function $f$ that maps a table to a real value, the probability 
distribution of the $f$ applied to samples from $\mathcal{S}_1$ is identical to 
the probability distribution of $f$ applied to the samples from $\mathcal{S}_2$. 
Namely, for any real value $x$,
\begin{equation}
    \mathbb{P}\left[f\bigl(\mathcal{S}_1(\{T_1, \ldots, T_k\})\bigr) = x\right] 
    = \mathbb{P}\left[f\bigl(\mathcal{S}_2(\{T_1, \ldots, T_k\})\bigr) = x\right]\notag
\end{equation}
\end{proposition}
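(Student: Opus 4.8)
The plan is to reduce this distributional statement to the defining property of sampling equivalence (Definition~\ref{def:commutativity}), which already equates the probabilities of producing each individual sample result $R$. The crucial observation is that $f$ is a \emph{deterministic} map from sample results to real values: all of the randomness lives in the sampling procedure, so once the random sample $R$ is fixed, $f(R)$ is determined. Consequently the event $\{f(\mathcal{S}_i(\{T_1,\ldots,T_k\})) = x\}$ decomposes as the disjoint union of the elementary events $\{\mathcal{S}_i(\{T_1,\ldots,T_k\}) = R\}$ taken over all sample results $R$ in the preimage $f^{-1}(x) = \{R : f(R) = x\}$.

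First I would establish that the space of possible sample results is finite, so that this decomposition is a legitimate finite sum. Because every input table has finite cardinality (as already assumed in the paper, since tables have finite cardinality), each sample result is a sub-relation drawn from finitely many rows, and there are only finitely many such sub-relations; hence $f^{-1}(x)$ is a finite set and the events indexed by it are finitely many and pairwise disjoint. This lets me invoke finite additivity of probability without any measure-theoretic machinery.

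Next, applying the law of total probability over this partition gives, for each $i \in \{1,2\}$,
\begin{equation}
\mathbb{P}\left[f\bigl(\mathcal{S}_i(\{T_1,\ldots,T_k\})\bigr) = x\right]
= \sum_{R \,:\, f(R) = x} \mathbb{P}\left[\mathcal{S}_i(\{T_1,\ldots,T_k\}) = R\right]. \notag
\end{equation}
By Definition~\ref{def:commutativity}, equivalence of $\mathcal{S}_1$ and $\mathcal{S}_2$ means precisely that the summands agree term by term, i.e.\ $\mathbb{P}[\mathcal{S}_1 = R] = \mathbb{P}[\mathcal{S}_2 = R]$ for every $R$. Summing the identical terms over the common index set $f^{-1}(x)$ yields equal totals, which is exactly the claimed identity.

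The main obstacle is bookkeeping rather than a genuine difficulty: I must ensure that the two decompositions are indexed over the \emph{same} set of sample results and that the partition is both disjoint and exhaustive. Disjointness is immediate because a fixed sample $R$ has a single value $f(R)$, and exhaustiveness over $f^{-1}(x)$ follows since $f$ is total on sample results. The one point deserving care is that $\mathcal{S}_1$ and $\mathcal{S}_2$ share the same underlying sample space of possible results $R$, which holds because both operate on the identical relations $\{T_1,\ldots,T_k\}$; this guarantees the index set $f^{-1}(x)$ is the same on both sides, so no stray terms can appear and the term-by-term equality carries through to the sums.
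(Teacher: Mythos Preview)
Your proposal is correct and follows essentially the same approach as the paper's proof: both decompose the event $\{f(\mathcal{S}_i)=x\}$ as a disjoint union over the preimage $f^{-1}(x)$, apply additivity to express each side as a sum of elementary probabilities $\mathbb{P}[\mathcal{S}_i=R]$, and then invoke Definition~\ref{def:commutativity} to equate the sums term by term. Your write-up is, if anything, more careful than the paper's in justifying finiteness, disjointness, and the common index set.
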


Based on Proposition \ref{prop:identical-agg}, to show the aggregates computed 
over the outputs of two different sampling procedures have the same distribution, 
it is sufficient to prove two sampling procedures are equivalent. Leveraging 
this, we show that block sampling is commutative with selection, Join, and Union 
in the following propositions. We present the proof in Appendix 
\ref{sec:app-proof}.

\begin{proposition}\label{theorem:select-comm}
(\textsc{Selection}) For any table $T$, selection $\sigma_\sqlpredicate$ with a 
predicate $\sqlpredicate$, and block sampling $\mathcal{B}_\theta$ with a 
sampling rate $\theta$,
\begin{equation}
    \sigma_\sqlpredicate(\mathcal{B}_\theta(T)) \Leftrightarrow \mathcal{B}_\theta(\sigma_\sqlpredicate(T)) \notag
\end{equation}
\end{proposition}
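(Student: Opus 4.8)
The plan is to make the definition of block sampling explicit in terms of independent per-block indicators, and then to exhibit a coupling under which the two sides of the claimed equivalence return literally the same random set of rows; this turns the distributional statement demanded by Definition~\ref{def:commutativity} into an elementary set identity. Concretely, let $B_1, \ldots, B_N$ be the disjoint physical blocks partitioning $T$, and for a row $r$ let $\beta(r) \in \{1, \ldots, N\}$ denote the index of the block containing it. I would model block sampling $\mathcal{B}_\theta$ as drawing independent Bernoulli$(\theta)$ variables $Z_1, \ldots, Z_N$, one per block, and returning $\{r \in T : Z_{\beta(r)} = 1\}$, so that two rows in the same block are always kept or discarded together.

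The structural fact I would rely on is that selection removes rows but never reassigns a surviving row to a different block: for any $r$ with $\sqlpredicate(r)$ true, $\beta(r)$ is unchanged by $\sigma_\sqlpredicate$. Hence the blocks of $\sigma_\sqlpredicate(T)$ are exactly $\sigma_\sqlpredicate(B_1), \ldots, \sigma_\sqlpredicate(B_N)$ (some of which may be empty), carrying the same indexing and the same per-block inclusion probability $\theta$ as the blocks of $T$ itself.

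First I would place both procedures on a single probability space carrying the same indicators $Z_1, \ldots, Z_N$. Evaluating the left-hand side, $\mathcal{B}_\theta(T)$ is $\{r : Z_{\beta(r)} = 1\}$, and filtering gives $\sigma_\sqlpredicate(\mathcal{B}_\theta(T)) = \{r : \sqlpredicate(r) \wedge Z_{\beta(r)} = 1\}$. Evaluating the right-hand side, $\sigma_\sqlpredicate(T) = \{r : \sqlpredicate(r)\}$ with block indices preserved, and block-sampling it yields $\mathcal{B}_\theta(\sigma_\sqlpredicate(T)) = \{r : \sqlpredicate(r) \wedge Z_{\beta(r)} = 1\}$. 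The two realizations coincide as sets for every outcome of $Z$; therefore, for any sample result $R$, the events $\{\sigma_\sqlpredicate(\mathcal{B}_\theta(T)) = R\}$ and $\{\mathcal{B}_\theta(\sigma_\sqlpredicate(T)) = R\}$ are identical, which immediately gives the equality of probabilities required by Definition~\ref{def:commutativity}.

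I expect the only delicate point to be the bookkeeping of blocks that become empty after selection, since a DBMS might retain or drop such blocks when sampling $\sigma_\sqlpredicate(T)$. The coupling above implicitly retains them (keeping all $N$ indices), which is what makes the two outputs pointwise equal. To confirm the conclusion is robust to dropping empties, I would fall back on the direct computation $\mathbb{P}[\text{result} = R] = \sum_{S} \theta^{|S|}(1-\theta)^{N - |S|}$, where the sum ranges over block subsets $S$ with $\bigcup_{j \in S}\sigma_\sqlpredicate(B_j) = R$, and observe that an empty $\sigma_\sqlpredicate(B_j)$ contributes no rows to the union; hence the result law depends only on the indicators of the nonempty filtered blocks and is unaffected by whether empty blocks are sampled. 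Either way, the two distributions match.
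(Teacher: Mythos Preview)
Your proposal is correct and actually cleaner than the paper's own argument. The paper proceeds by first identifying two impossible events (a row in the output failing the predicate, or two predicate-satisfying rows from one block with only one surviving), then explicitly computing the inclusion probability of each filtered block under both orderings, and finally multiplying out via independence to get $\theta^n(1-\theta)^{N'-n}$ for any achievable result $T^*$. Your coupling argument bypasses the event-by-event bookkeeping and the explicit probability computation: by putting both procedures on one probability space with shared block indicators $Z_1,\ldots,Z_N$, you obtain pointwise (almost-sure) equality of the two outputs, which is strictly stronger than the distributional equality Definition~\ref{def:commutativity} asks for and yields it immediately. The paper's explicit computation does make the per-result probability $\theta^n(1-\theta)^{N'-n}$ visible, which is occasionally useful elsewhere, but your approach is shorter and makes the structural reason for commutativity (selection never moves a row across blocks) more transparent. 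Your treatment of the empty-block edge case, by marginalizing out the indicators of blocks that vanish under $\sigma_\psi$, is also more careful than the paper's, which silently passes from $N$ to $N'$ blocks without comment.
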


\begin{proposition}\label{theorem:join-comm}
(\textsc{Join}) For any tables $T_1$ and $T_2$, Join $\Join_\sqlpredicate$ with 
a predicate $\sqlpredicate$, and block sampling $\mathcal{B}_\theta$ with a 
sampling rate $\theta$,
\begin{equation}
    \mathcal{B}_\theta(T_1) \Join_\sqlpredicate T_2 \Leftrightarrow \mathcal{B}_\theta(T_1 \Join_\sqlpredicate T_2) \notag
\end{equation}
\end{proposition}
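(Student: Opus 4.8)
The plan is to prove the equivalence directly from Definition \ref{def:commutativity} by exhibiting a common source of randomness---the per-block inclusion indicators of $T_1$---and showing that, conditioned on any realization of these indicators, both procedures produce the exact same deterministic output. Since both procedures draw these indicators from the identical distribution (each block of $T_1$ included independently with probability $\theta$), the induced distributions over sample results must coincide, which is exactly what Definition \ref{def:commutativity} requires.

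First I would fix the physical partition of $T_1$ into disjoint blocks $B_1, \ldots, B_N$ with $T_1 = \bigcup_{i=1}^N B_i$, and model $\mathcal{B}_\theta$ as choosing a random subset $S \subseteq \{1, \ldots, N\}$ in which each index $i$ is included independently with probability $\theta$; the sampled relation is then $\bigcup_{i \in S} B_i$. The crucial observation is that block identity is inherited through the Join: every tuple in $T_1 \Join_\sqlpredicate T_2$ descends from exactly one tuple of $T_1$, hence from exactly one block $B_i$, so the relations $\{B_i \Join_\sqlpredicate T_2\}_{i=1}^N$ form a genuine partition of the Join result. This is precisely the association of added data with an originating block described in the intuition, and it is what makes $\mathcal{B}_\theta(T_1 \Join_\sqlpredicate T_2)$ well defined.

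Next I would compute both sides under a fixed selection $S$. On the left, left-distributivity of Join over union gives
\begin{equation}
    \Bigl(\bigcup_{i \in S} B_i\Bigr) \Join_\sqlpredicate T_2 = \bigcup_{i \in S}\bigl(B_i \Join_\sqlpredicate T_2\bigr), \notag
\end{equation}
so $\mathcal{B}_\theta(T_1) \Join_\sqlpredicate T_2$ retains exactly the Join-blocks indexed by $S$. On the right, $\mathcal{B}_\theta$ applied to the Join result retains exactly the Join-blocks $B_i \Join_\sqlpredicate T_2$ for $i \in S$, using the same indicator set $S$. The two outputs are therefore identical as a function of $S$. Consequently, for any result $R$ the preimage $\{S : \text{output} = R\}$ is the same for both procedures, so $\mathbb{P}\left[\mathcal{B}_\theta(T_1) \Join_\sqlpredicate T_2 = R\right] = \mathbb{P}\left[\mathcal{B}_\theta(T_1 \Join_\sqlpredicate T_2) = R\right]$, establishing the equivalence.

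The main obstacle is conceptual rather than computational: making rigorous the claim that $\mathcal{B}_\theta$ on the Join result samples with respect to the blocks inherited from $T_1$ rather than arbitrary physical blocks of the materialized Join. I would address this by carefully defining block provenance---the well-definedness rests on the fact that $T_1$'s blocks partition $T_1$ and that each Join tuple has a unique $T_1$-ancestor---and by verifying that left-distributivity holds for the relevant Join semantics, where a single $T_1$ tuple may match several $T_2$ tuples but all resulting tuples share its block. A secondary check is that the two procedures genuinely share the same indicator distribution, which holds because both select $T_1$-blocks via independent $\theta$-Bernoulli trials.
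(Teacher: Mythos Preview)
Your proposal is correct and follows essentially the same approach as the paper: both arguments rest on the observation that the Join result inherits the block partition $\{B_i \Join_\psi T_2\}_{i=1}^N$ from $T_1$, and that each such Join-block is retained independently with probability $\theta$ under either procedure. The only presentational differences are that the paper first reduces $\Join_\psi$ to cross-product (invoking the already-established commutativity with selection) and then computes the probability $\theta^n(1-\theta)^{N-n}$ of each result explicitly, whereas your coupling argument handles the predicate directly via left-distributivity and bypasses the explicit computation by showing both outputs are the same deterministic function of the shared indicator set $S$.
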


\begin{proposition}\label{theorem:union-comm}
(\textsc{Union}) Let $\cup$ be a bag union operation (or \texttt{UNION ALL} in 
SQL). For any tables $T_1, \ldots, T_k$ ($k \ge 2$) and block sampling 
$\mathcal{B}_\theta$ with a sampling rate $\theta$,
\begin{equation}
    \bigcup_{i=1}^k \mathcal{B}_\theta(T_i) \Leftrightarrow \mathcal{B}_\theta\left(\bigcup_{i=1}^k T_i\right) \notag
\end{equation}
\end{proposition}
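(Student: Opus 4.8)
The plan is to show that the two procedures induce identical distributions over sample results $R$, exactly as demanded by Definition \ref{def:commutativity}. Recall that block sampling $\mathcal{B}_\theta$ selects each physical block independently with probability $\theta$ and, when a block is selected, includes all of its rows; thus a sample is completely determined by the \emph{subset of blocks} that are selected. The first step is therefore to pin down the block structure of a bag union. Since $\cup$ is \texttt{UNION ALL}, it concatenates the tables without merging rows or re-blocking: every block of $\bigcup_{i=1}^k T_i$ originates from exactly one $T_i$. Hence there is a content-preserving bijection between the blocks of $\bigcup_{i=1}^k T_i$ and the disjoint union $\bigsqcup_{i=1}^k B_i$, where $B_i$ denotes the set of blocks of $T_i$. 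This structural fact is the crux of the argument and the step I expect to require the most care, because it is precisely where bag (rather than set) semantics matters: duplicate rows across tables are retained, so no two source blocks collapse into one and no rows are created or destroyed.

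Next I would parametrize a realization of either procedure by the selected block subset $A \subseteq \bigsqcup_i B_i$. Under $\mathcal{S}_2 = \mathcal{B}_\theta(\bigcup_i T_i)$, the block selection is by construction a product of independent $\mathrm{Bernoulli}(\theta)$ draws over the blocks of $\bigcup_i T_i$, so via the bijection above $\mathbb{P}[A] = \theta^{|A|}(1-\theta)^{|\bigsqcup_i B_i| - |A|}$. Under $\mathcal{S}_1 = \bigcup_i \mathcal{B}_\theta(T_i)$, each $T_i$ is sampled independently of the others, and within each $T_i$ the blocks are selected independently with probability $\theta$; writing $A_i = A \cap B_i$, the joint selection probability factorizes as $\prod_{i=1}^k \theta^{|A_i|}(1-\theta)^{|B_i| - |A_i|}$, which equals the same expression because $|A| = \sum_i |A_i|$ and $|\bigsqcup_i B_i| = \sum_i |B_i|$. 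Hence the two procedures induce the identical distribution over $A$.

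Finally I would observe that, for a fixed $A$, both procedures return the same output bag: $\mathcal{S}_2$ returns the union of the rows of the selected blocks of $\bigcup_i T_i$, while $\mathcal{S}_1$ returns the bag union over $i$ of the rows of the selected blocks of each $T_i$, and these coincide under the content-preserving bijection. Thus $R$ is the image of $A$ under one and the same deterministic map in both procedures. Since the distribution over $A$ agrees and the map $A \mapsto R$ is identical, the pushforward distributions over $R$ agree, summing over all $A$ that map to a given $R$ automatically handles the possibility that distinct block subsets yield the same bag. The argument holds verbatim for every $k \ge 2$, with no need for induction. By Definition \ref{def:commutativity}, we conclude $\bigcup_{i=1}^k \mathcal{B}_\theta(T_i) \Leftrightarrow \mathcal{B}_\theta\bigl(\bigcup_{i=1}^k T_i\bigr)$, which completes the proof.
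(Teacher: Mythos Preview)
Your proposal is correct and follows essentially the same approach as the paper: both arguments compute the probability of an arbitrary block-selection outcome as a product of independent $\mathrm{Bernoulli}(\theta)$ factors and observe that this product is the same whether the Bernoulli draws are performed table-by-table or over the concatenated union. Your presentation is somewhat cleaner in that it handles general $k$ directly (the paper writes out only the case $k=2$) and explicitly treats the pushforward from block subsets to result bags, but the underlying idea is identical.
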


Finally, we consider projection and Group By. We find that the commutativity 
between block sampling and project is trivial, since projection is at the column 
level and thus orthogonal to sampling. Moreover, Group By operations can be 
considered as a special case of selection with a predicate on the grouping columns.

We conclude these equivalence rules with the following standard form 
for any supported aggregation query $Q$:
\begin{equation}
    Q \Leftrightarrow \texttt{AGG}\left(\Join_{i=1}^k \mathcal{B}_{\theta_i} (\tilde{T}_i)\right)
    \label{eq:s-form}
\end{equation}
where $\tilde{T}_i$ is the output table of intermediate relational operations and
$\theta_i$ is the sampling rate of the $i$-th input table. This result is obtained 
by applying our equivalence rules recursively across the query. Intuitively, if 
an aggregation query executes block sampling on one input table ($k=1$), it is 
equivalent to the query that computes aggregate directly on a block sample. In
this case, we can calculate the estimates at the block level and use standard 
techniques to analyze the error \cite{haas1996selectivity}. If a query executes 
block sampling on multiple input tables ($k>1$), it is equivalent to the query 
that computes aggregate on the Join of block samples. 

We show that our sampling equivalence rules are stronger than 
sampling dominance rules of \textsc{QuickR}. First, the sampling dominance rules 
ensure accuracy dominance in only one direction and do not establish the 
equivalence. Second, using dominance rules are insufficient for proving the 
equivalence, as they only consider the inclusion probability of one or two 
sampled units (i.e., c- and v-dominance). In contrast, our equivalence rules 
consider the joint inclusion probability of the entire sample. As a result, when 
two sampling plans are equivalent in our definition, they inherently satisfy
sampling dominance. 

\subsection{Join Queries} \label{subsec:join}
When the input query has multiple large tables, \samplingscheme tries to
execute block sampling on multiple tables, which leads to Equation \ref{eq:s-form} 
with $k>1$. To analyze the query error with \samplingscheme, we need
to (1) ensure Procedure \ref{theorem:rel-error} is valid by investigating the 
asymptotic distribution of the aggregate over the Join of multiple block samples 
and (2) obtain two estimates $L_\mu$ and $U_V[\Theta]$ that are necessary
for \samplingscheme to plan sampling (\S \ref{subsec:plan}).

\minihead{Failure of the Naive Method}
However, due to correlations within blocks and across Join results, the asymptotic
distribution of Equation \ref{eq:s-form} with $k>1$ is not governed by standard 
CLT \cite{huang2019joins,chaudhuri1999random,haas1996selectivity}. Naively applying
the standard CLT to calculate confidence intervals can lead to invalid guarantees.
We show this failure through the following query that Joins two large tables and 
uses block sampling on both tables: 
\begin{lstlisting}[keywords={SELECT,COUNT,FROM,TABLESAMPLE,WHERE,LIKE,SYSTEM,AND,
INNER,JOIN}]
SELECT SUM(price) FROM lineitem TABLESAMPLE SYSTEM(1%) 
INNER JOIN orders TABLESAMPLE SYSTEM(5%) 
WHERE l_orderkey = o_orderkey AND comment LIKE '%special%'
\end{lstlisting}
The ``confidence interval'' obtained through standard CLT with a 95\% intended 
confidence may only achieve a coverage probability as low as 8\%.\footnote[7]{We 
evaluated the query on DuckDB with the 1,000-scaled TPC-H 1,000 times.} 

\minihead{Our Solutions} 
We show that the sample mean still asymptotically converges to a normal 
distribution when multiple tables of a Join operation are sampled at the block 
level. However, the variance is not in the standard form. We first
present the asymptotic convergence in Theorem \ref{theorem:join-clt} and defer 
the proof to Appendix \ref{sec:app-proof}. Theorem \ref{theorem:join-clt} are inspired by 
\cite{haas1996selectivity} but extends their theory to sampling with different 
rates. We present the theorem in a standard way using the block-level 
\texttt{AVG} aggregate. The result for \asum and \acount can be obtained
similarly, while the row-level \aavg can be considered as a ratio between \asum 
and \acount. 
\begin{theorem}\label{theorem:join-clt}
Suppose a Join operation is executed on a set of $k$ tables $\{T_1, \ldots, T_k\}$,
where each table $T_i$ has a set of $N_i$ blocks: $\{t_{i, 1}, \ldots, t_{i, N_1}\}$.
Let $\myjoin(*)$ be a function that takes as input $k$ blocks of different 
tables and produces the sum of the Join result of these blocks. We denote $\mu$ as
the block-level mean of the Join result:
\begin{equation}
\mu = \left(\prod_{i=1}^k N_i\right)^{-1} 
      \sum_{i_1=1}^{N_1}\dots\sum_{i_k=1}^{N_k} 
      \mathcal{J}(t_{1, i_1}, \ldots, t_{k, i_k})
\end{equation}
For each Join table $T_i$, we execute the block sampling with a sample size of
$n_i$ blocks. We denote $\hat \mu$ as the block-level mean of the Join result of 
block samples. Then, we can have the following convergence
\begin{equation}
    \hat\mu - \mu \xrightarrow{D} \mathcal{N}(0, Var[\hat\mu]) \quad
    \text{as} \quad
    n_i \rightarrow \infty
\end{equation}
where $Var[\hat\mu]$ is the (unknown) variance of $\hat\mu$.
\end{theorem}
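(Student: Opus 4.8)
The plan is to recognize $\hat\mu$ as a first-order multi-sample (generalized) U-statistic over finite populations of blocks and to establish its asymptotic normality via a Hoeffding/ANOVA decomposition combined with a finite-population CLT. Let $I_i \subseteq \{1,\ldots,N_i\}$ denote the set of $n_i$ block indices drawn without replacement from table $T_i$, independently across tables; then $\hat\mu = (\prod_{i=1}^k n_i)^{-1} \sum_{j_1 \in I_1}\cdots\sum_{j_k \in I_k} \myjoin(t_{1,j_1},\ldots,t_{k,j_k})$, where the only randomness lies in the index sets $I_i$ while the kernel values $\myjoin(\cdots)$ are fixed constants. This is exactly the structure of a degree-one-in-each-argument generalized U-statistic, so the classical projection machinery applies once it is adapted to sampling without replacement with heterogeneous sample sizes.

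First I would perform the Hoeffding decomposition of $\hat\mu - \mu$ into components indexed by the nonempty subsets $S \subseteq \{1,\ldots,k\}$. Define the first-order conditional means $g_i(j) = (\prod_{l\neq i} N_l)^{-1}\sum_{(j_l)_{l\neq i}} \myjoin(\ldots, t_{i,j}, \ldots)$ and the centered scores $a_{i,j} = g_i(j) - \mu$. The projection (the $|S|=1$ part) is $\hat H = \sum_{i=1}^k n_i^{-1}\sum_{j\in I_i} a_{i,j}$, a sum of $k$ \emph{independent} finite-population sample means. I would then invoke the Hájek–Erdős CLT for simple random sampling without replacement on each of the $k$ terms and use independence across tables to conclude $\hat H \xrightarrow{D} \mathcal{N}(0, V)$, where $V = \sum_{i=1}^k \text{(per-table variance contribution)}$ carries the usual finite-population correction, provided a Lindeberg-type negligibility condition on the scores $\{a_{i,j}\}$ holds as the populations grow.

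Next I would control the remainder $\hat\mu - \mu - \hat H = \sum_{|S|\ge 2} H_S$. The key observation is that each interaction term $H_S$ has variance of order $\prod_{i\in S} n_i^{-1}$ (up to finite-population corrections), so for $|S|\ge 2$ it is $o\!\left(\sum_{i=1}^k n_i^{-1}\right)$ as all $n_i\to\infty$ jointly, since $1/(n_{i_1}n_{i_2}) = o(1/n_{i_1} + 1/n_{i_2})$; this is strictly smaller order than $\mathrm{Var}[\hat H]$. Hence the remainder is $o_P(\sqrt{V})$, and Slutsky's theorem yields $\hat\mu - \mu \xrightarrow{D} \mathcal{N}(0, Var[\hat\mu])$ with $Var[\hat\mu]$ asymptotically equal to $V$. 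Because the statement must also cover \asum and \acount and must allow arbitrary $n_i$, I would track the normalization and the variance symbolically rather than assuming a common sample size, which is precisely the point at which this result extends Haas et al. \cite{haas1996selectivity}.

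The hard part will be twofold. The main obstacle is handling sampling without replacement: unlike the i.i.d.\ U-statistic case, the scores within a single table are negatively dependent, so the ordinary U-statistic CLT cannot be used off the shelf, and I must instead rely on finite-population CLTs and verify their regularity (Lindeberg/negligibility) conditions under the superpopulation growth of both $n_i$ and $N_i$. The secondary difficulty is the bookkeeping for heterogeneous rates: the variance $V$ is a sum of per-table contributions weighted by $n_i^{-1}$ and the corresponding sampling fractions, and I must show that the cross-table higher-order terms vanish at the correct joint rate even when the $n_i$ grow at different speeds.
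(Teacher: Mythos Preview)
Your proposal is correct and takes a genuinely different route from the paper. The paper proceeds by induction on the number of sampled tables: at step $t$, it writes $\hat Z_{n_1,\ldots,n_{t+1}}$ as the inner product of two independent random vectors---one summarizing the sample from the first $t$ tables, the other a mean of one-hot indicators for table $t{+}1$'s sampled blocks---applies a multivariate CLT to each factor, invokes Cram\'er--Wold to obtain joint convergence, and then uses the delta method on the bilinear map $(W,U)\mapsto W^\top U$ to pass to the scalar limit. Your approach instead treats $\hat\mu$ as a multi-sample generalized U-statistic and uses the Hoeffding/ANOVA decomposition: the degree-one projection is a sum of $k$ independent finite-population sample means (handled by H\'ajek's CLT), and the higher-order interaction terms are shown to be $o_P$ of the projection by a direct variance count.

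Both arguments are valid; they trade off different things. The paper's inductive delta-method argument avoids any explicit appeal to U-statistic theory and keeps the bookkeeping local to one table at a time, which may be easier for a systems audience. Your projection argument is more in line with the statistics literature, makes the variance structure $V=\sum_i (\text{per-table contribution})$ completely transparent from the outset, and handles heterogeneous $n_i$ naturally without any special care in the induction. Your identification of the two hard points---the need for a finite-population (rather than i.i.d.) CLT on each coordinate, and the verification that the $|S|\ge 2$ terms vanish at the joint rate even when the $n_i$ grow at different speeds---is exactly right and mirrors the places where the paper's delta-method step implicitly relies on the same facts.
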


Theorem \ref{theorem:join-clt} validates our \samplingscheme algorithm on queries 
where multiple tables are sampled at the block level. To obtain concrete 
sampling plans, Procedure \ref{theorem:rel-error} requires a lower bound of 
aggregate: $L_\mu$ and an upper bound of the variance of the aggregate estimator: 
$U_V[\Theta]$. We show the results of $U_V[\Theta]$ for the two-table sampling 
with a \asum aggregate.
$L_\mu$ can be derived based on standard probabilistic inequalities, such as 
Chebyshev's Inequality \cite{intro-math-stats}. We defer the proof to Appendix
\ref{sec:app-proof}.
\begin{lemma}\label{lemma:var-ub}
Consider a query which Joins two tables $T_1$ and $T_2$. Without loss of generality,
we suppose that in the pilot query, block sampling with a tiny sampling rate 
$\theta_p$ is executed on $T_1$, resulting in $n_p$ blocks. Given a final sampling plan 
$\Theta=[\theta_1, \theta_2]$, the probability that the variance of the \asum 
estimate has an upper bound defined as follows is at least $1-\delta_2$:
\begin{align*}
U_V[\Theta] 
&= \frac{1-\theta_1}{\theta_1}U_{y^{(1)}}\left[\frac{\delta_2}{N_2+2}\right]
+ \frac{1-\theta_2}{\theta_2} \sum_{i_2=1}^{N_2} \left(U_{y_{i_2}^{(2)}}\left[\frac{\delta_2}{N_2+2}\right] \right)^2 \\
&+\frac{(1-\theta_1)(1-\theta_2)}{\theta_1\theta_2}U_{y^{(3)}}\left[\frac{\delta_2}{N_2+2}\right]
\end{align*}
where $y^{(1)}_i = \left(\sum_{i_2=1}^{N_2} \mathcal{J}\left(t_{1, i}, t_{2, i_2}\right)\right)^2$, 
$y^{(2)}_{i_2,i} = \mathcal{J}\left(t_{1, i}, t_{2, i_2}\right)$, 
$y^{(3)}=\sum_{i_2=1}^{N_2} \mathcal{J}\left(t_{1, i}, t_{2, i_2}\right)^2$, 
and $U_y\left[\delta\right]$ is the upper bound of the Student's t confidence interval of the summation of $y$ 
with $1-\delta$ confidence \cite{intro-math-stats}.
\end{lemma}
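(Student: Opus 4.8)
The plan is to prove the lemma in two phases: first derive a closed-form expression for the exact variance $Var[\hat\mu]$ of the upscaled \asum estimator over the Join of two independent Bernoulli block samples, and then replace the unknown quantities in that expression by probabilistic upper bounds estimated from the pilot sample, stitching them together with a union bound. Throughout I write $a_{ij} = \mathcal{J}(t_{1,i}, t_{2,j})$, let $B_{1,i}$ and $B_{2,j}$ be the independent Bernoulli inclusion indicators of the blocks of $T_1$ and $T_2$ (with means $\theta_1$ and $\theta_2$), and set $r_i = \sum_{j} a_{ij}$ and $c_j = \sum_{i} a_{ij}$.

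First I would compute the exact variance. The estimator is $\hat\mu = (\theta_1\theta_2)^{-1}\sum_{i,j} B_{1,i}B_{2,j} a_{ij}$, which is unbiased since all indicators are independent. Applying the law of total variance, conditioning on the indicators of $T_2$ (equivalently, a direct second-moment expansion that splits the four index cases $i=k$ vs. $i\neq k$ and $j=l$ vs. $j\neq l$), yields the three-term decomposition
\begin{equation}
Var[\hat\mu] = \frac{1-\theta_1}{\theta_1}\sum_{i} r_i^2 + \frac{1-\theta_2}{\theta_2}\sum_{j} c_j^2 + \frac{(1-\theta_1)(1-\theta_2)}{\theta_1\theta_2}\sum_{i,j} a_{ij}^2. \notag
\end{equation}
Matching terms against the lemma gives $\sum_i r_i^2 = \sum_i y^{(1)}_i$, $\sum_{i,j} a_{ij}^2 = \sum_i y^{(3)}_i$, and $\sum_j c_j^2 = \sum_{i_2}\bigl(\sum_i y^{(2)}_{i_2,i}\bigr)^2$, so that each coefficient appearing in $U_V[\Theta]$ is exactly the coefficient produced by this computation.

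Next I would bound each unknown sum from the pilot. Each of the three sums is a block-level summation over $T_1$: joining each of the $n_p$ sampled $T_1$ blocks against the full $T_2$ lets us evaluate $y^{(1)}_i = r_i^2$, $y^{(3)}_i = \sum_j a_{ij}^2$, and, for every $T_2$ block $i_2$, the per-block contribution $y^{(2)}_{i_2,i} = a_{i,i_2}$. Invoking the Student's $t$ upper confidence bound — valid by the block-level CLT / sub-Gaussian argument underlying Procedure \ref{theorem:rel-error} and Theorem \ref{theorem:join-clt} — at failure level $\delta_2/(N_2+2)$, I obtain $U_{y^{(1)}}$, $U_{y^{(3)}}$, and one $U_{y^{(2)}_{i_2}}$ per block $i_2$ of $T_2$; since the column sums are nonnegative, squaring preserves the bound, so $c_{i_2}^2 \le (U_{y^{(2)}_{i_2}})^2$. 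There are exactly $1 + 1 + N_2 = N_2+2$ such confidence bounds, so Boole's inequality makes them all hold simultaneously with probability at least $1 - (N_2+2)\cdot \tfrac{\delta_2}{N_2+2} = 1-\delta_2$; on that event, substituting the bounds term-by-term into the exact variance gives $Var[\hat\mu] \le U_V[\Theta]$.

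The main obstacle is the middle term $\sum_j c_j^2$. Unlike the first and third terms, which are ordinary sums over $T_1$ blocks that the pilot samples directly, this term requires an upper bound on each column sum $c_j$ (a sum over all of $T_1$) from a sample that only covers $T_1$, which is why the proof must spend a separate confidence interval on every $T_2$ block and then square it. Getting the failure-probability accounting exactly right — allocating the budget $\delta_2$ across precisely $N_2+2$ intervals — and justifying that squaring the per-block bound is sound (nonnegativity of the Join contributions, or otherwise bounding $|c_j|$) are the delicate points; the variance algebra of the first step is mechanical by comparison.
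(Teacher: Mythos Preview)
Your proposal is correct and follows essentially the same approach as the paper: an exact variance decomposition of the upscaled \asum estimator into the three terms with coefficients $\tfrac{1-\theta_1}{\theta_1}$, $\tfrac{1-\theta_2}{\theta_2}$, $\tfrac{(1-\theta_1)(1-\theta_2)}{\theta_1\theta_2}$, followed by Student's~$t$ upper bounds on each block-level sum from the pilot sample and a Boole union over the $N_2+2$ intervals. The paper in fact states and proves a general $k$-table version (subsets $S\subseteq\{1,\dots,k\}$ with coefficients $c_S$) and recovers the two-table lemma as a specialization; your Case ``middle term'' is exactly their Case~2 ($1\notin S$), handled the same way by bounding each column sum separately and then squaring.
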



\section{Evaluation}\label{sec:eval}
In this section, we evaluate \name with experiments to answer the 
following questions:
\begin{enumerate}[leftmargin=*]
    \item Does \name achieve statistical guarantees (\S \ref{subsec:eval-correct})? 
    \item How much can \name accelerate queries (\S \ref{sec:perf})? 
    \item How much can \bstats improve existing online AQP (\S \ref{subsec:eval-augment})? 
    \item What are the individual contributions of \samplingscheme and \bstats 
    to overall performance (\S \ref{subsec:ablation})?
\end{enumerate}

\subsection{Experiment Settings}

\begin{table}
    \centering
    \caption{Characteristics of workloads.}
    \label{tab:workloads}
    \begin{tabular}{cccc}
        \toprule
        Benchmark    & \#Queries & \#Queries w/ Join & Max/Avg. \#groups \\
        \midrule
        TPC-H       & 9         & 7                          & 175/22            \\
        ClickBench  & 7         & 0                          & 17/3            \\
        SSB & 10        & 10                         & 150/38            \\
        Instacart   & 9         & 7                          & 146/22           \\
        DSB-DBest       & 169       & 42                         & 261/52           \\
        \bottomrule
    \end{tabular}
\end{table}
\minihead{Benchmarks} 
We evaluate \name on a diverse set of benchmarks, including four benchmarks that 
are widely used in prior work \cite{blinkdb,verdictdb,sample+seek,clickbench-ref1,
ssb-ref1,ssb-ref2,baq} and a benchmark that simulates real-world data with skewed 
distributions \cite{ding2021dsb}. Other real-world benchmarks used in prior work 
are proprietary \cite{sample+seek,blinkdb}. Thus, we cannot evaluate \name on 
those benchmarks.
\begin{itemize}[leftmargin=*]
    \item \textbf{TPC-H} and \textbf{SSB} are synthetic benchmarks for 
    decision-making \cite{tpch} and star-schema data warehousing 
    \cite{ssb-paper}, respectively. We use a scale factor of 1,000.
    \item \textbf{ClickBench} is a real-world benchmark obtained from the 
    traffic recording of web analytics \cite{clickbench}. We scale up the raw 
    data by $5\times$, resulting in a pre-processed size of 200GB.
    \item \textbf{Instacart} is a micro-benchmark with real-world data from the 
    Instacart \cite{instacart} and queries from TPC-H. We scale up the original 
    data by 100$\times$ using the same method as \textsc{VerdictDB} 
    \cite{verdictdb}.
    \item \textbf{DSB} is a synthetic benchmark based on TPC-DS, blended with 
    skewed yet real-world data distributions, including the (bucketed) 
    exponential distribution and correlations across columns \cite{ding2021dsb}. 
    We use a scale factor of 1,000. To cover the skewness in aggregation, Join, 
    and Group By columns, we use the queries from 
    \textsc{DBest} \cite{dbest}.
\end{itemize}
In line with previous AQP studies \cite{verdictdb,taster}, we exclude 
queries with an empty result, correlated subqueries, and a large group cardinality. 
In production scenarios, \name can identify those queries via \samplingscheme 
and execute the exact query. We summarize the key statistics of the workloads in 
Table \ref{tab:workloads}. A large portion of queries contain Join and various 
numbers of groups.

\minihead{DBMSs}
We evaluate \name on three DBMSs: PostgreSQL 16.3, SQL Server 2022, and DuckDB 
1.0. DuckDB is an open-source in-memory column-oriented DBMS\cite{duckdb}. The 
default block sampler of DuckDB always scans the entire column, which is less 
efficient compared to PostgreSQL and SQL Server. To improve the efficiency of 
DuckDB's block sampling, we add optimization rules that push down block sampling 
to sequential scanning. Our extension has been merged in DuckDB 1.2.

\minihead{Baselines}
As far as we know, \name is the first AQP system that simultaneously achieves 
\aPrioriError, \noMaintenance, and \notModifyDbms. There are no directly 
comparable AQP systems to use as a baseline. Hence, we compare \name with 
executing exact queries on DBMSs that have state-of-the-art query optimizations. 
In addition, we compare with \textsc{Quickr} \cite{quickr}, the state-of-the-art 
online AQP system. \textsc{Quickr} achieves \aPrioriError and \noMaintenance 
but fails to fulfill \notModifyDbms, which is the closest to \name.

\minihead{Testbed}
Our experiments are conducted on CloudLab \cite{cloudlab} r6525 nodes, each 
equipped with 256GB RAM, 1.6TB NVMe SSD, and two 32-core AMD 7543 CPUs.
\footnote[8]{256GB RAM is large enough for DuckDB to fit in required columns for 
individual queries after default compressions.} Before executing each query, we 
clear both the operating system cache and the query plan cache.

\begin{figure*}[t!]
    \begin{minipage}{\linewidth}
        \centering
        \begin{subfigure}[t]{0.24\linewidth}
            \centering
            \includegraphics[width=\linewidth]{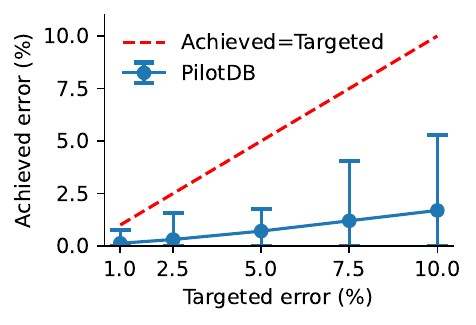}
            \caption{TPC-H.}
            \label{fig:tpch-error}
        \end{subfigure}%
        \hfill
        \begin{subfigure}[t]{0.24\linewidth}
            \centering
            \includegraphics[width=\linewidth]{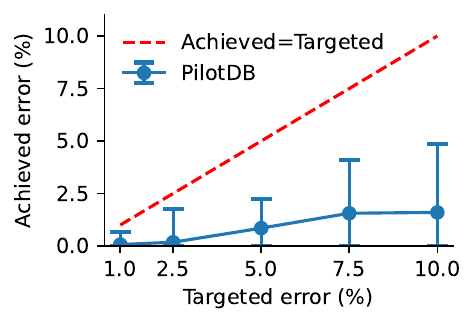}
            \caption{SSB.}
            \label{fig:ssb-error}
        \end{subfigure}%
        \hfill
        \begin{subfigure}[t]{0.24\linewidth}
            \centering
            \includegraphics[width=\linewidth]{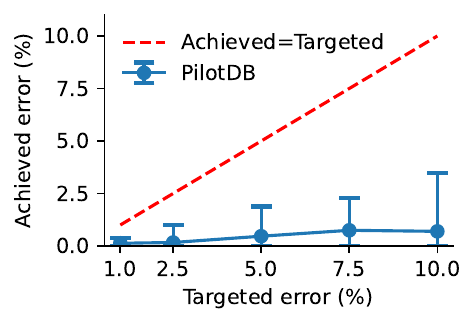}
            \caption{ClickBench.}
            \label{fig:clickbench-error}
        \end{subfigure}%
        \begin{subfigure}[t]{0.24\linewidth}
            \centering
            \includegraphics[width=\linewidth]{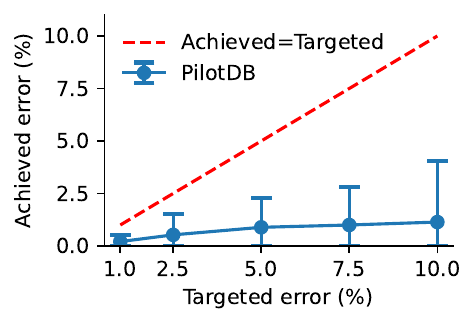}
            \caption{Instacart.}
            \label{fig:instacart-error}
        \end{subfigure}%
    \caption{\name achieves error guarantees on TPC-H, SSB, ClickBench, and 
    Instacart. The achieved error is smaller than targeted error if the result
    is below the red dashed line. We show the maximum, mean, and minimum errors 
    in 20 executions.}
    \label{fig:agg-error}
    \end{minipage}
\end{figure*}
\begin{figure}[t]
    \centering
    \includegraphics[width=0.6\linewidth]{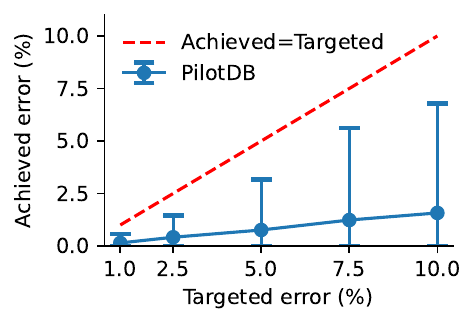}
    \caption{\name achieves the targeted errors on DSB, which has skewness on aggregation columns, Join columns, and Group By columns.}
    \label{fig:dsb-error}
\end{figure}

\subsection{\name Guarantees Errors} \label{subsec:eval-correct}
We first evaluated whether \name achieves a priori error guarantees. We executed 
each query from the five benchmarks on PostgreSQL 20 times, each with different 
targeted error rates--the maximum relative error in the specification (\S 
\ref{subsec:error-semantic}). We set the confidence to 95\% and measured the 
maximum relative error of aggregates. By default, we sampled at 0.05\% during the 
planning stage of \samplingscheme. If the input query has Group By clauses, we 
use Lemma \ref{lemma:group-error} with $g=200, p_f=0.05$ to compute the sampling 
rate for planning.

Figures \ref{fig:agg-error} and \ref{fig:dsb-error} show the achieved errors for 
each benchmark with various targeted errors. The bars in the figure represent the
minimum and maximum achieved errors across all queries and executions, while the 
dots indicate the average achieved errors. For reference, we plot a dashed red 
line to show the case when the achieved error equals the targeted error. As 
shown, the achieved errors of \name are always less than the targeted errors. 
Furthermore, we find that none of the evaluated queries miss groups.

We observe that \name guarantees errors conservatively, with the maximum achieved 
errors being approximately half of the targeted errors. This arises because the 
sampling rates determined by \samplingscheme are guaranteed to be sufficiently 
large but may not always be the minimum necessary to meet the user's error 
specifications. For example, we apply Boole's Inequality to tackle the joint 
probability of multiple events. The equality holds only when events are 
mutually exclusive. To ensure the sampling rates are also the minimum 
necessary, it is crucial to analyze the correlations between aggregates, which 
will be a future work.

We also evaluated the achieved errors when \bstats is replaced with a standard 
CLT-based confidence interval. We show that without \bstats, the achieved error
can be up to 52$\times$ higher (1.7$\times$ higher on average) than the target 
error, highlighting the contribution and necessity of \bstats. 

\subsection{\name Accelerates Query Processing} \label{sec:perf}
We analyze the performance of \name by evaluating it on various 
DBMSs, with different targeted errors, and across all five benchmarks. The query 
execution follows the setting in Section \ref{subsec:eval-correct}.

\minihead{\name Accelerates Queries across Various DBMSs}
We evaluated \name on TPC-H, ClickBench, SSB, and Instacart across three DBMSs, 
targeting a 5\% error and 95\% confidence. We executed each query in each DBMS 10 
times and calculated the geometric mean (GM) of speedups. 

Figure \ref{fig:performance} provides a detailed view of performance on each 
database, showing the cumulative probability function (CDF) of speedups compared 
to exact query execution. As shown, \name consistently accelerates 80\% of queries 
across all DBMSs. Moreover, \name achieves up to 126$\times$ speedup on transactional 
databases and up to 13$\times$ speedup on an analytical database, DuckDB. In the 
worst case, \name slows down the execution by at most 8\%. This is 
because the sample planning stage involve executing a pilot query, the primary 
overhead causing the loss in performance.

We observe that \name performs better on PostgreSQL and SQL Server 
than on DuckDB. This is because DuckDB is optimized for in-memory processing. 
When the data fits in the memory, DuckDB processes queries faster than 
transactional databases.
\begin{figure*}[t!]
    \begin{minipage}{0.74\linewidth}
        \begin{subfigure}{0.33\linewidth}
            \centering
            \includegraphics[width=\linewidth]{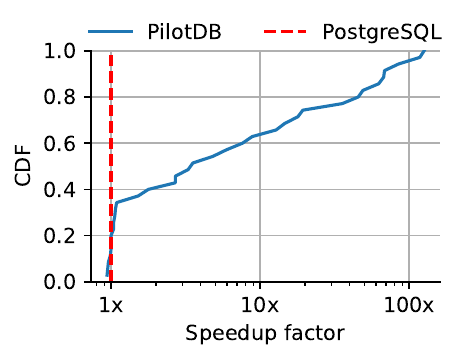}
            \caption{PostgreSQL (log scale).}
            \label{fig:postgres}
        \end{subfigure}%
        \hfill
        \begin{subfigure}{0.33\linewidth}
            \centering
            \includegraphics[width=\linewidth]{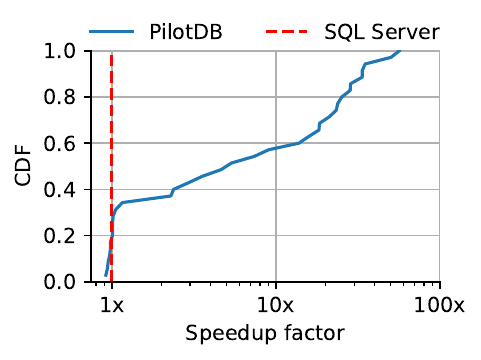}
            \caption{SQL Server (log scale).}
            \label{fig:sqlserver}
        \end{subfigure}%
        \hfill
        \begin{subfigure}{0.33\linewidth}
            \centering
            \includegraphics[width=\linewidth]{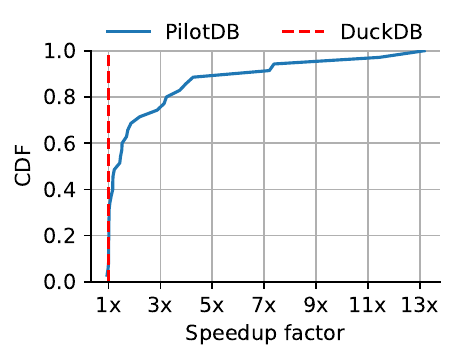}
            \caption{DuckDB.}
            \label{fig:duckdb}
        \end{subfigure}
        \caption{\name achieves 0.92-126$\times$ speedups over exact 
        execution across three DBMSs.}
        \label{fig:performance}
    \end{minipage}
    \begin{minipage}{0.24\linewidth}
        \includegraphics[width=\linewidth]{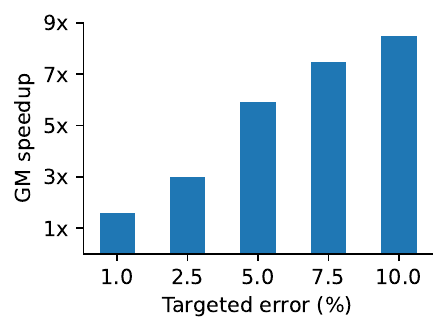}
        \caption{Speedups of \name across various errors.}
        \label{fig:error-perf}
    \end{minipage}
\end{figure*}

\begin{figure}[t!]
    \begin{subfigure}[t]{0.48\linewidth}
        \centering
        \includegraphics[width=\linewidth]{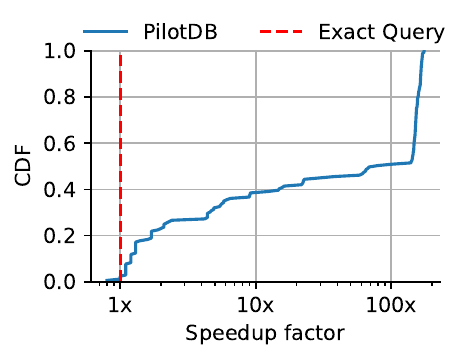}
        \caption{Detailed performance on individual queries.}
        \label{fig:skew-cdf}
    \end{subfigure}%
    \hfill
    \begin{subfigure}[t]{0.48\linewidth}
        \centering
        \includegraphics[width=\linewidth]{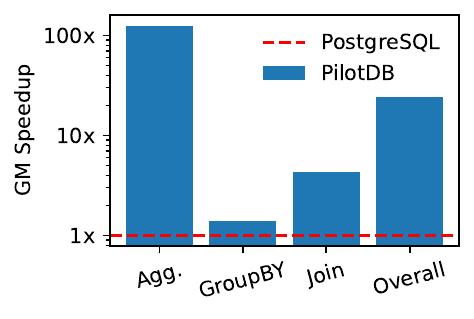}
        \caption{Performance grouped by query types.}
        \label{fig:skew-types}
    \end{subfigure}
    \caption{\name accelerates queries on skewed data.}
\end{figure}

\minihead{\name Accelerates Queries on Skewed Data}
To demonstrate the performance of \name on skewed data distributions, we evaluated 
\name on DSB with a diverse set of 97 aggregation queries, 30 queries with 
Group By, and 42 queries with Join \cite{dbest}. We executed each query 10 times 
on PostgreSQL and calculated the geometric mean of speedups.

Figure \ref{fig:skew-cdf} shows the CDF of query speedups of \name on DSB. As 
shown, \name accelerates queries over skewed data by up to two orders of 
magnitude compared to exact queries on PostgreSQL. To understand how \name 
performs on different types of queries and skewness, we group query speedups by 
the query type in Figure \ref{fig:skew-types}. ``Agg.'' refers to simple 
aggregation queries where the data of aggregated columns is exponentially 
distributed. ``GroupBy'' and ``Join'' refer to queries with exponentially 
distributed data in the Group By dimension or Join columns, respectively. \name 
achieves 55$\times$ overall speedup and 125$\times$ speedup on simple 
aggregation queries. On Group By and Join queries, \name achieves 1.4$\times$ and 
4.3$\times$ speedup, respectively. This is relatively small compared to simple 
aggregation queries, but still significant compared to row-level uniform 
sampling which has 0.9$\times$ speedup on average.

\minihead{\name Accelerates Queries with Various Error Targets}
To study how \name performs with different error targets, we evaluated the 
performance of \name with error targets 1\%-10\% on PostgreSQL. We executed 
each query 10 times for each error target and calculated the geometric mean of 
speedups.

Figure \ref{fig:error-perf} shows the speedup according to different targeted 
errors. We observe that \name achieves query speedups for all evaluated targeted 
errors. Even with a small targeted error of 1\%, \name achieves 1.6$\times$ 
speedup. As expected, we find that \name achieves higher speedups at larger 
targeted errors.

\minihead{Comparison with \textsc{Quickr}}
We compared \name with the state-of-the-art online AQP system \textsc{Quickr}.
Since \textsc{Quickr} is not open-sourced, we consider a strict performance 
upper bound of it. Specifically, as mentioned explicitly in their paper 
\cite{quickr}, \textsc{Quickr} requires one pass over the data. Therefore, we 
consider the data scanning time on each DBMS as the performance upper bound 
(\ie, latency lower bound) of \textsc{Quickr}. We give \textsc{Quickr} the 
benefit of parallelizing scanning with all CPU cores and only consider the 
elapsed time of the longest scanning operation.

\begin{figure}
    \begin{minipage}{0.48\linewidth}
        \centering
        \includegraphics[width=\linewidth]{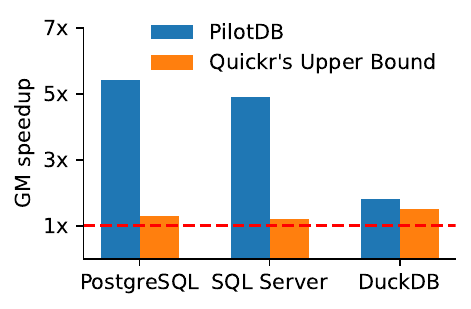}
        \caption{\name outperforms \textsc{Quickr} by up to 4.2$\times$ across
        three DBMSs.}
        \label{fig:quickr}
    \end{minipage}
    \hfill
    \begin{minipage}{0.48\linewidth}
        \centering
        \includegraphics[width=\linewidth]{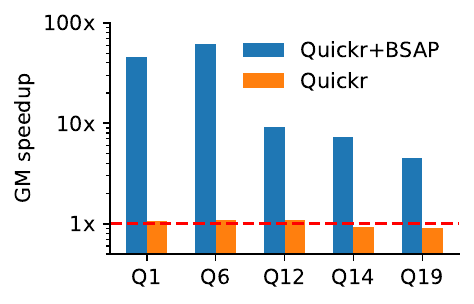}
        \caption{\bstats improves the speedup of \textsc{Quickr} by up to 
        60$\times$ on DuckDB.}
        \label{fig:quickr-aug}
    \end{minipage}
\end{figure}

Figure \ref{fig:quickr} demonstrates the upper bound speedup of \textsc{Quickr} 
and the speedup of \name across three DBMSs. As shown, \name demonstrates significantly 
higher query speedup by 1.2-4.2$\times$. Compared to \textsc{Quickr} 
which always scans the whole data, \name achieves better efficiency by skipping 
non-sampled data blocks.

\subsection{\bstats Augments Existing Online AQP} \label{subsec:eval-augment}
In this section, we evaluated whether and how much \bstats can improve the 
performance of existing online AQP. We used TPC-H queries where \textsc{Quickr} 
applies row-level uniform sampling. On those queries, we reproduce 
\textsc{Quickr} in DuckDB by manually adopting the rules described in 
\cite{quickr} and then rewriting queries with parallelized row-level uniform 
sampling. We incorporate \bstats into \textsc{Quickr} by further (1) replacing 
the uniform sampling with block sampling and (2) adapting the Horvitz-Tompson 
estimator with the error analysis of \bstats. Finally, we target a 10\% error, 
which is consistent with the setting in \textsc{Quickr}'s paper \cite{quickr}.

Figure \ref{fig:quickr-aug} shows the speedups of \textsc{Quickr}+\bstats and 
original \textsc{Quickr}, compared to exact queries on DuckDB. As shown, 
\textsc{Quickr}+ \bstats achieves higher speedups by 4.9-60$\times$. 
We find that these evaluated queries typically have a latency bottleneck at 
table scanning. In this case, \bstats can significantly accelerate existing 
online AQP by skipping non-sampled blocks when scanning tables.

\subsection{Ablation Study}\label{subsec:ablation}
We evaluated the effectiveness of the design choices of \name by 
comparing \name with its alternative configurations.
\begin{enumerate}[leftmargin=*]
    \item We replace \samplingscheme with pre-computed statistics (\oracle).
    \item We replace \bstats with row-level sampling (\uniform).
    \item We replace Bernoulli sampling with fixed-size sampling.
\end{enumerate}
We used the same setting as Section \ref{sec:perf} for query executions.
\begin{figure*}[t!]
    \centering
    \begin{minipage}{0.30\linewidth}
        \centering
        \includegraphics[width=\linewidth]{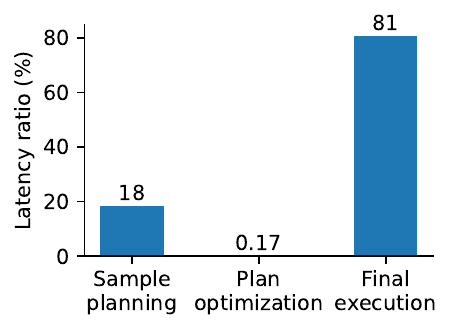}
        \caption{Latency decomposition of \name.}
        \label{fig:cost}
    \end{minipage}
    \hfill
    \begin{minipage}{0.33\linewidth}
        \centering
        \includegraphics[width=\linewidth]{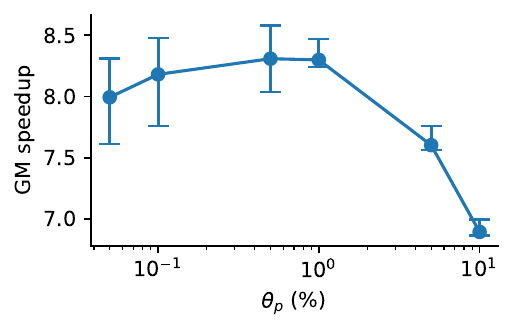}
        \caption{\name achieves >6$\times$ speedup across various $\theta_p$.}
        \label{fig:theta_p}
    \end{minipage}
    \hfill
    \begin{minipage}{0.32\linewidth}
        \centering
        \includegraphics[width=\linewidth]{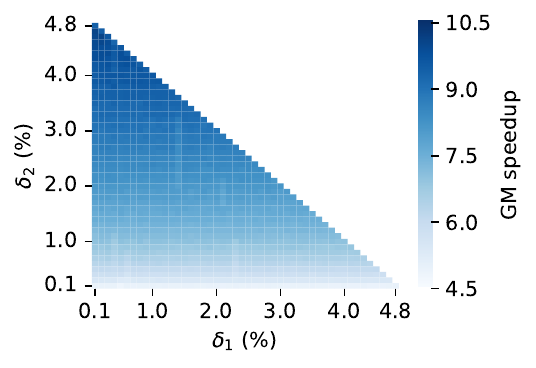}
        \caption{\name achieves 4.8-10.0$\times$ speedups for various $(\delta_1, \delta_2)$.}
        \label{fig:deltas}
    \end{minipage}
\end{figure*}

\minihead{\name Achieves Near-Optimal Performance}
In \samplingscheme, we use estimations based on a pilot query to determine the 
sampling rates for a given error specification (\S \ref{subsec:error-semantic}). 
To understand the impact of those estimations on the performance of \name, we 
compare it with \oracle, which represents the upper-bound performance achievable 
for AQP that uses online block sampling. For each query, we measure the latency
of \oracle, \name, and the second stage of \name. We executed all queries 
in our benchmarks.
\begin{table}[t]
    \centering
    \caption{Geometric mean of the slowdowns of \name compared to \oracle.}
    \label{tab:oracle}
    \begin{tabular}{lccc}
        \toprule
             & PostgreSQL & SQL Server & DuckDB \\
        \midrule
        PilotDB (overall)     & 1.61$\times$ & 1.21$\times$ & 1.27$\times$ \\
        PilotDB (2nd stage)   & 1.04$\times$ & 1.08$\times$ & 1.19$\times$ \\
        \bottomrule
    \end{tabular}
\end{table}

Table \ref{tab:oracle} shows the slowdowns of \name compared to 
\oracle, computed as the ratio of their latencies. Compared to \oracle, \name is 
only 21\%-61\% slower, showing the effectiveness of \samplingscheme. However, 
the latency of \oracle does not include the time to determine sampling rates, 
which requires executing the original input query. To decouple factors that affect 
the final latency, we also exclude the time to determine the sampling rates in 
\name, leaving the latency of the second stage of \name. As shown in Table 
\ref{tab:oracle}, the latency of the second stage of \name is only 4\%-19\% 
higher than \oracle. This demonstrates that the optimized sampling plan of \name 
is close to optimal.


\begin{table}[t]
    \centering
    \caption{Speedups of \name over \uniform.}
    \label{tab:uniform}
    \begin{tabular}{lccc}
        \toprule
             & PostgreSQL & SQL Server & DuckDB \\
        \midrule
        Geometric mean     & 12.6$\times$ & 9.37$\times$ & 1.92$\times$ \\
        Maximum            & 219$\times$  & 71.4$\times$  & 13.2$\times$\\
        \bottomrule
    \end{tabular}
\end{table}

\minihead{\name Outperforms Row-level Bernoulli Sampling}
In Section \ref{sec:block}, we showed the advantage of \bstats over uniform 
row-level sampling with a motivating experiment in Figure \ref{fig:motivate-block}. 
Here, we demonstrate the benefit of \bstats in terms of end-to-end latency. We 
compared \name and \uniform across all the benchmarks. In \uniform, we use the 
default row-level Bernoulli sampling in each DBMS as the sampling method. That 
is, we rewrite queries in PostgreSQL and DuckDB with 
``\texttt{TABLESAMPLE BERNOULLI(p)}'', where ``\texttt{p}'' is the sampling rate, and in 
SQL Server with ``\texttt{WHERE rand() < p}'', where ``\texttt{rand()}'' outputs a 
random number in [0,1].

Table \ref{tab:uniform} summarizes the speedup of \name compared to \uniform.
We show the geometric mean and maximum speedup for each DBMS. \name achieves a 
higher geometric mean speedup by 8.0$\times$ and a higher maximum speedup by 
219$\times$, compared to \uniform. We observe that \name provides a greater 
benefit on PostgreSQL and SQL Server compared to DuckDB. This is because DuckDB 
is columnar, which, unlike Postgres and SQL Server, allows it to scan selected columns.

\color{black}
\minihead{Comparison with Fixed-size Sampling}
We compare \name with fixed-size sampling at the row and block level. We use 
``\texttt{ORDER BY RANDOM() LIMIT {sample\_size}}'' for row-level fixed-size sampling.
Furthermore, only PostgreSQL supports block-level fixed-size 
sampling, via an extension: \texttt{tsm\_system\_rows} \cite{postgres-block-fix-size-sampling}. 
We repeat both 
methods on PostgreSQL for TPC-H 10 times, targeting 5\% error and 95\% confidence. In terms of the 
geometric mean speedup, \name outperforms row-level fixed-size sampling by 
93.3$\times$ and underperforms block-level fixed-size sampling by 3.8\%. 
This is because Bernoulli sampling leads to varied sample size which requires 
sampling more data to maintain the same error guarantees, compared to fixed-size 
sampling. However, the performance loss is small since the probability of size 
variation decreases exponentially as the variation amount increases, 
according to the Chernoff Bound on the Binomial distribution \cite{kambo1966exponential}.
\color{black}

\subsection{Latency Decomposition} \label{sec:cost}
We decompose the latency of \name into three parts (1) sample planning 
(\S \ref{subsec:plan}), (2) plan optimization (\S \ref{subsec:optimize}), and 
(3) final execution. We executed each query on PostgreSQL 10 
times and calculated the geometric mean of their latencies. Figure \ref{fig:cost} 
demonstrates the latency proportion of each part. As shown, the sample 
planning via pilot query execution is the major overhead, while the final query
execution constitutes the majority of the total latency.

\color{black}
\subsection{Sensitivity Analysis} \label{sec:sensi}
We conducted a sensitivity analysis of PilotDB's performance across a wide range 
of parameter settings in Procedure \ref{theorem:rel-error}: $\theta_p$, 
$\delta_1$, and $\delta_2$.

\minihead{Impact of the pilot query sampling rate ($\theta_p$)}
We executed TPC-H Query 6 on PostgreSQL with various $\theta_p$ values 
(0.05\%-10\%), aiming for 1\% errors and 95\% confidence. Figure \ref{fig:theta_p} 
shows maximum, minimum, and geometric mean speedups achieved by \name across 10 
executions. We find that the speedup is non-monotonic with respect to 
$\theta_p$: performance declines at low sampling rates due to loose estimations 
and at high rates due to expensive sample planning. Nevertheless, \name achieves 
>6$\times$ speedups consistently.

\minihead{Impact of the failure probability allocation ($\delta_1$, $\delta_2$)}
We execute TPC-H query 6 on PostgreSQL with various $\theta_1$ and $\theta_2$ 
values (0.1\%-4.8\%) , targeting a 1\% error. According to Procedure 
\ref{theorem:rel-error}, we ensure $\delta_1+\delta_2+p'=5\%$ to maintain the 
95\% confidence for the error guarantees. Figure \ref{fig:deltas} shows the 
geometric mean speedup of \name across 10 executions. As shown, \name achieves 
4.8-10.0$\times$ speedups, with the maximum speedup at $\delta_1=0.2\%$ and 
$\delta_2=4.6\%$. Our default setting leads to 21\% lower speedup compared to 
the optimal configuration. For scenarios requiring optimal speedups, we can 
efficiently tune $\delta_1$ and $\delta_2$ with cached pilot query results or
incorporate $\delta_1$ and $\delta_2$ as optimizable parameters during the 
sampling plan optimization.
\color{black}

\section{Related Work}

\minihead{Online AQP} 
Generating samples of large tables upon query arrival is widely studied in prior 
AQP techniques \cite{simple-random-sampling,group-by-sample,sampling-algebra,
analytical-bootstrap,abs,quickr,idea}. Prior work formulated random sampling as 
a standard operation in query processing to estimate aggregates and used 
analytical or bootstrap confidence intervals to measure the estimation error 
\cite{simple-random-sampling,group-by-sample,sampling-algebra,bootstrap,
analytical-bootstrap,abs}. As a step further for complex queries, \textsc{Quickr} 
injects sampling operations in the query plan level and integrate sample 
planning with query optimization to achieve acceleration and a priori error 
guarantees \cite{quickr}. Additionally, \textsc{Idea} reuses previous results to 
accelerate future approximate queries \cite{idea}. More recently, \textsc{Taster} 
combines online and offline methods by caching the online samples for future 
reuse \cite{taster}.

Although existing online AQP systems return estimation errors, they cannot 
provide a priori error guarantees without modifying the underlying DBMS. 
In addition to the DBMS modifications, state-of-the-art methods with a priori 
error guarantees slow down a significant part of queries compared with exact 
execution \cite{quickr,taster} or lead to errors as big as 100\% \cite{quickr}.

\minihead{Offline AQP} 
Prior work developed two types of offline AQP methods: summary-based methods 
\cite{sketch-tutorial,bias-aware-sketch,count-filter-sketch,count-est,
histogram-selectivity,digithist,wavelet,distributed-wavelet} and sampling-based 
method \cite{join-synopses,aqua-sigmod,aqua,icicles,congressional-sample,start,
outlier,sample-selection,blinkdb,sample+seek,verdictdb,baq,taster}.
The primary idea of summary-based offline AQP is to compress or summarize 
columns through numeric transformations. Therefore, they cannot process queries 
with non-numeric columns, such as categorical columns, or with complex 
relational operations, such as join and grouping.

Offline Sampling-based AQP generate offline samples to answer online queries. 
\textsc{Aqua} developed the method of rewriting queries with pre-computed samples 
\cite{join-synopses,congressional-sample,aqua-sigmod,aqua}. Subsequently, 
various optimizations in offline sample creation have been proposed, such as 
weighted sampling \cite{icicles,sample-selection}, stratified sampling 
\cite{start}, and outlier index \cite{outlier}. Prior work has explored 
guaranteeing errors a priori by generating specialized samples for non-nested 
queries \cite{baq}, sparse data distribution \cite{error-bounded-stratified}, 
queries over specific columns \cite{blinkdb}, and queries with specific 
selectivities \cite{sample+seek}. Furthermore, \textsc{VerdictDB} developed 
offline AQP as middleware to avoid modifications to DBMSs \cite{verdictdb}. 

Offline sampling-based AQP methods have two limitations. First, their 
a priori error guarantees are limited to predictable workloads 
\cite{blinkdb,sample+seek,error-bounded-stratified,baq}. For example, \textsc{BlinkDB} 
requires that incoming queries output columns in a pre-defined column set; 
\textsc{Sample+Seek} relies on the pre-knowledge of the query selectivity to 
select the right processing policy (\ie, sample or seek). Moreover, maintaining 
offline samples requires special effort and costs, including regularly 
refreshing samples to ensure statistical correctness and regenerating samples 
when the database changes \cite{aqua,blinkdb,verdictdb}. 

\minihead{Online Aggregation} 
Previous research has explored interactive processing of aggregation queries, 
providing initial results immediately and improving accuracy as more data is 
sampled \cite{online-aggregation,ripple-join,dbo,turbo-dbo,cosmos,gola,deepola,
progressivedb}. OLA, first proposed by Hellerstein et al. 
\cite{online-aggregation}, has been subsequently improved to support join 
queries \cite{ripple-join,wander-join}, scalable processing on large databases 
\cite{dbo,turbo-dbo}, processing multiple queries simultaneously \cite{cosmos}, 
and complex aggregates \cite{gola}. Furthermore, \textsc{ProgressiveDB} explored 
online aggregation as an extension to existing DBMSs using progressive views 
\cite{progressivedb}. Recently, \textsc{DeepOLA} tackled nested queries for 
online aggregation \cite{deepola}. 

Although OLA techniques can continuously update confidence intervals, it is 
invalid to consider the monitored confidence interval as an error guarantee due 
to the problem of peeking at early results \cite{peeking}. Nevertheless, OLA can 
be integrated with the second stage of \name to provide constantly updating 
results, thereby improving the interactivity and user experience.

\minihead{Block Sampling}
In block sampling, data is sampled at the level of physical data blocks or pages, 
a method widely recognized as a more efficient sampling scheme than row-level 
sampling \cite{page-selection,haas2004bi,ci2015efficient,hou1991statistical,
block-sampling-variance,cheng2017bi}. Prior work has studied confidence intervals for aggregates 
computed directly over the output of block sampling \cite{hou1991statistical,
block-sampling-variance,pansare2011online}, block sampling mixed with row-level
sampling (\ie, bi-level sampling) \cite{haas2004bi,cheng2017bi}, and improved the 
statistical efficiency of block sampling with block-level summary statistics 
\cite{page-selection}. However, statistical guarantees for complex approximate 
queries (\eg, nested queries and Join queries) with block sampling have not been 
investigated in literature.

\section{Conclusion}
We propose \name, an online AQP system that achieves (1) a priori 
error guarantees, (2) no maintenance overheads, and (3) no DBMS modifications.
To achieve these properties, we propose a novel online AQP algorithm, \samplingscheme,
based on query rewriting and online sampling. To accelerate queries with
\samplingscheme, we formalize block sampling with new statistical techniques to 
provide guarantees on nested queries and Join queries. Our evaluation shows that 
\name consistently achieves a priori error guarantees and accelerates queries by 
0.92-126$\times$ on various DBMSs.
\section{Acknowledgement}
We are grateful to the CloudLab for providing computing resources for 
experiments \cite{cloudlab}. We thank Andy Luo, Chuxuan Hu, and Lilia Tang for
their feedback and help.

\bibliographystyle{ACM-Reference-Format}
\bibliography{paper}

\appendix
\twocolumn[\hsize\textwidth\columnwidth\hsize\csname @twocolumnfalse\endcsname
]

\section{Additional Experiments}
\subsection{Failures of \samplingscheme with Standard CLT on Queries with Block Sampling} 
\label{subsec:failure}

\begin{figure*}[t!]
    \begin{minipage}{\linewidth}
        \begin{subfigure}[t]{0.25\linewidth}
            \centering
            \includegraphics[width=\linewidth]{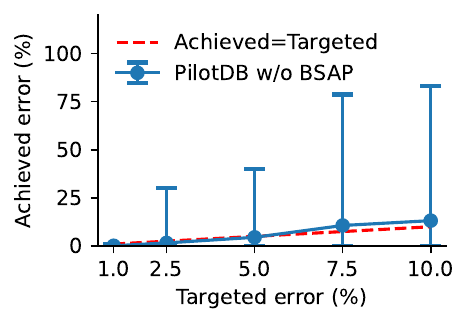}
            \caption{TPC-H.}
            \label{fig:tpch-error-wrong}
        \end{subfigure}%
        \hfill
        \begin{subfigure}[t]{0.25\linewidth}
            \centering
            \includegraphics[width=\linewidth]{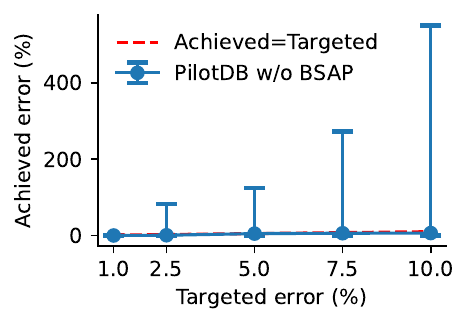}
            \caption{SSB.}
            \label{fig:ssb-error-wrong}
        \end{subfigure}%
        \hfill
        \begin{subfigure}[t]{0.25\linewidth}
            \centering
            \includegraphics[width=\linewidth]{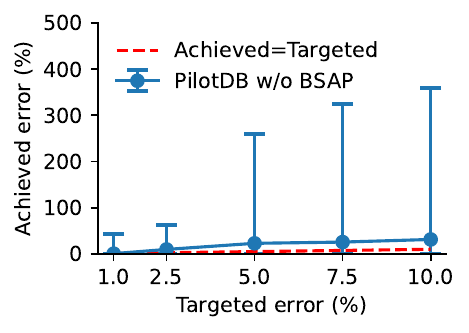}
            \caption{ClickBench.}
            \label{fig:clickbench-error-wrong}
        \end{subfigure}%
        \hfill
        \begin{subfigure}[t]{0.25\linewidth}
            \centering
            \includegraphics[width=\linewidth]{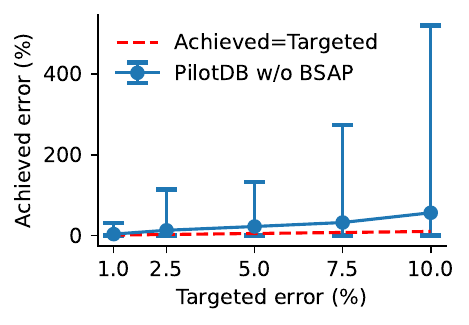}
            \caption{Instacart.}
            \label{fig:instacart-error-wrong}
        \end{subfigure}%
    \caption{\samplingscheme with standard CLT fails to achieves error guarantees 
    on TPC-H, SSB, ClickBench, and Instacart, resulting errors up to 520\% when 
    the targeted error is 10\%. We show the maximum, mean, and minimum errors in 
    20 executions.}
    \label{fig:agg-error-wrong}
    \end{minipage}
\end{figure*}

\begin{figure}
    \centering
    \includegraphics[width=0.55\linewidth]{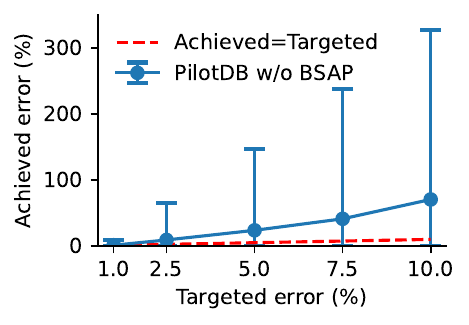}
    \caption{\samplingscheme with standard CLT fails to guarantee errors for
    the DSB benchmark.}
    \label{fig:dsb-fail}
\end{figure}

In this section, we empirically evaluate whether we can apply \samplingscheme
with standard CLT on queries with block sampling. To do that, we treat queries
with block sampling in the same way as treating queries with row-level Bernoulli
sampling. Specifically, $L_\mu$ is calculated using a one-sided confidence interval 
on $\mu$. $U_V[\Theta]$ is calculated using (1) an upper bound of the standard 
deviation, which is governed by a chi-squared distribution, and (2) a lower 
bound of the sample size given $\Theta$, which is governed by a binomial 
distribution \cite{intro-math-stats}. We execute all queries in Table 
\ref{tab:workloads} for 20 times on PostgreSQL. We use the same setting as 
Section \ref{subsec:eval-correct} to calculate and record errors.

We show the errors of all queries in Figures \ref{fig:agg-error-wrong} and 
\ref{fig:dsb-fail}. As shown, the maximum errors of queries with block sampling
can be up to 52$\times$ higher than the user-specified error requirement. 
These experiments demonstrate that standard CLT is not sufficient to analyze 
errors of queries with block sampling in \samplingscheme, motivating the 
development of \bstats.

\subsection{Additional Sensitivity Study} \label{subsec:sensi}
In this section, we analyze the performance of \name under various settings. The 
execution follows the same setting as described in Section \ref{sec:perf}.

\begin{figure}[t]
    \centering
    \begin{minipage}{0.48\linewidth}
        \includegraphics[width=\linewidth]{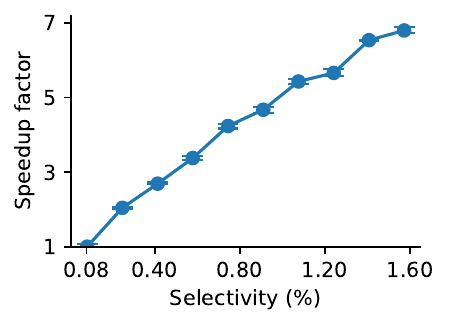}
        \caption{Varying selectivity.}
        \label{fig:selectivity}
    \end{minipage}
    \hfill
    \begin{minipage}{0.48\linewidth}
    \includegraphics[width=\linewidth]{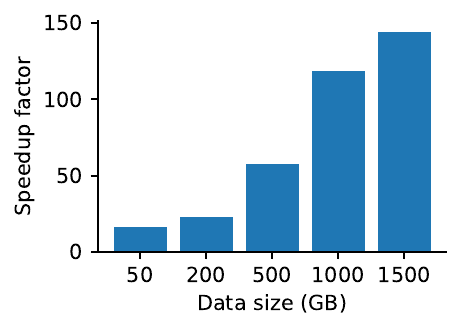}
    \caption{Varying data sizes.}
    \label{fig:data-size}
    \end{minipage}
\end{figure}

\minihead{Impact of Selectivity}
We analyze how the selectivity of the query affects the performance of \name.
Specifically,
    we modify the predicate of TPC-H query 6 to represent different selectivities, targeting a 1\% error.
Figure \ref{fig:selectivity} shows the speedups of \name with various selectivities.
As expected,
    \name achieves better performance with higher selectivity.
This is because we require smaller sampling rates for queries with higher selectivity to achieve the same requested error.

\minihead{Impact of Data Size}
We analyze how the data size affects the performance of \name.
Specifically,
    we execute TPC-H query 6 on PostgreSQL with various scale factors.
Figure \ref{fig:data-size} shows the speedups of \name with 50GB to 1,500GB data.
We observe that speedups of \name increase monotonically regarding the data size,
    demonstrating that \name is suitable for big data analytics.

\section{Proofs of Theoretical Statements} \label{sec:app-proof}
\color{black}
\subsection{Derivation of Probabilistic Bounds} \label{subsec:app-bounds}
We first consider row-level Bernoulli sampling on one table of the query 
in the following lemma.
\begin{lemma}
We assume that the mean estimator over an arbitrary queried table is 
sub-Gaussian. Suppose the pilot query with sampling rate $\theta_p$ results in 
the sample size $n_p$, the sample mean $\hat\mu_p$, and the sample variance 
$\hat\sigma_p^2$. Let $\hat\mu$ be the sample mean obtained via a final query 
with sampling rate $\theta$. When $n_p \rightarrow \infty$, we have 
\begin{align*}
    \mathbb{P}\left[\mu \ge \hat\mu_p - t_{n_p-1,1-\delta_1} \cdot 
    \frac{\hat\sigma_p}{n_p}\right] \ge 1 - \delta_1 \\
    \mathbb{P}\left[Var[\hat\mu] \le \frac{n_p-1}{\chi^2_{n_p-1,1-\delta_2/3}} 
    \frac{\hat\sigma_p^2}{L_N\theta-z_{1-\delta_2/3}\sqrt{L_N\theta(1-\theta)}}\right]
    \ge 1 - \delta_2
\end{align*}
where $t_{n_p-1,1-\delta_1}$ is the $1-\delta_1$ percentile of the Student-t
distribution with $n_p-1$ degrees of freedom, $z_{1-\delta_2/3}$ is the 
$1-\delta_2/3$ percentile of the normal distribution, 
$\chi_{n_p-1,1-\delta_2/3}$ is the $1-\delta_2/3$ percentile of the chi-squared
distribution with $n_p-1$ degrees of freedom, and $L_N$ is the calculated as
\begin{equation*}
    L_N = \left(\sqrt{\frac{n_p}{\theta_p} + z_{1-\delta_2/3}^2\frac{1-\theta_p}{4\theta_p}}
    -\sqrt{z_{1-\delta_2/3}^2\frac{1-\theta_p}{4\theta_p}}\right)^2
\end{equation*}
\end{lemma}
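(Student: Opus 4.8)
The plan is to prove the two bounds entirely separately, since they constrain different quantities and draw their randomness from the same source (the pilot sample $(n_p,\hat\mu_p,\hat\sigma_p^2)$): the first is a one-sided lower confidence bound on the unknown population mean $\mu$, the second an upper confidence bound on the estimator variance $Var[\hat\mu]$ of the \emph{final} query run at rate $\theta$. The mean bound is the routine half. I would invoke the studentized convergence already stated in the body, $\tfrac{\mu-\hat\mu_p}{\hat\sigma_p/n_p}\xrightarrow{D}t_{n_p-1}$, which holds as $n_p\to\infty$ by the CLT for the numerator under the sub-Gaussian assumption together with Slutsky using the consistent variance estimate. Rearranging the $1-\delta_1$ quantile event of this statistic and using the symmetry of the $t$ distribution yields $\mathbb{P}[\mu\ge \hat\mu_p - t_{n_p-1,1-\delta_1}\,\hat\sigma_p/n_p]\ge 1-\delta_1$ directly.

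For the variance bound I would first reduce it to a product of an upper bound on $\sigma^2$ and a lower bound on the final sample size $n$. Conditioning on a realized $n$, sampling without replacement gives $Var[\hat\mu\mid n]\le \sigma^2/n$ (the finite-population correction only shrinks the variance), so it suffices to bound $\sigma^2$ above and $n$ below on a high-probability event and divide. I split the budget $\delta_2$ into three pieces of $\delta_2/3$ and control three events, combined by Boole's inequality at the end. (i) Since $(n_p-1)\hat\sigma_p^2/\sigma^2\to\chi^2_{n_p-1}$, inverting the chi-squared tail gives a $1-\delta_2/3$ upper confidence bound $\sigma^2\le (n_p-1)\hat\sigma_p^2/\chi^2_{n_p-1,1-\delta_2/3}$. (ii) Since the pilot size obeys $n_p\sim\mathrm{Bin}(N,\theta_p)$, I would use the normal approximation to the binomial tail and identify the smallest $N$ for which observing $n_p$ remains inside the $1-\delta_2/3$ confidence region; writing the boundary condition $n_p = N\theta_p + z_{1-\delta_2/3}\sqrt{N\theta_p(1-\theta_p)}$ as a quadratic in $\sqrt{N}$ and selecting the admissible (nonnegative) root reproduces exactly the stated $L_N$. (iii) Since the final size obeys $n\sim\mathrm{Bin}(N,\theta)$, the same normal approximation gives $n\ge N\theta - z_{1-\delta_2/3}\sqrt{N\theta(1-\theta)}$ with probability $1-\delta_2/3$; because the right-hand side is increasing in $N$, substituting $N\ge L_N$ is valid and produces the denominator $L_N\theta - z_{1-\delta_2/3}\sqrt{L_N\theta(1-\theta)}$.

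Assembling on the intersection of the three events, which holds with probability at least $1-\delta_2$ by the union bound, we obtain
\[
Var[\hat\mu]\le \frac{\sigma^2}{n}\le \frac{(n_p-1)\hat\sigma_p^2}{\chi^2_{n_p-1,1-\delta_2/3}}\cdot\frac{1}{L_N\theta - z_{1-\delta_2/3}\sqrt{L_N\theta(1-\theta)}},
\]
which is the claimed $U_V[\Theta]$ (restricted here to the single-table case $\Theta=\theta$; the general $U_V[\Theta]$ is treated separately in Lemma \ref{lemma:var-ub}).

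The main obstacle I anticipate is step (ii): cleanly inverting the binomial tail to a closed-form $L_N$ requires the normal approximation and the correct root of the quadratic in $\sqrt{N}$, and then step (iii) must chain the two size bounds through the monotonicity in $N$ without spending extra failure probability. A second point needing care is the probabilistic framing of $Var[\hat\mu]$ itself: the cited bound $\sigma^2/n$ is a \emph{conditional} variance, so I must argue on the event $\{n\ge L_N\theta - z_{1-\delta_2/3}\sqrt{L_N\theta(1-\theta)}\}$ and confirm this is exactly the randomness the statement $\mathbb{P}[Var[\hat\mu]\le U_V[\Theta]]\ge 1-\delta_2$ quantifies over. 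Finally I would track which sub-steps are exact versus asymptotic—the chi-squared and both binomial statements consume the $n_p\to\infty$ hypothesis—and verify the three events are budgeted independently; everything beyond that is algebraic rearrangement.
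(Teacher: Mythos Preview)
Your proposal is correct and follows essentially the same approach as the paper: a Student-$t$ one-sided bound for $\mu$, and for $Var[\hat\mu]$ the same three-way split of $\delta_2$ into a chi-squared upper bound on $\sigma^2$, a normal-approximation lower bound on $N$ obtained by solving the quadratic in $\sqrt{N}$ (yielding exactly $L_N$), and a normal-approximation lower bound on $n$, all chained by the union bound. Your added remarks on monotonicity in $N$ and the conditional interpretation of $Var[\hat\mu]=\sigma^2/n$ make the chaining step more explicit than the paper's own derivation, but the structure is the same.
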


\begin{proof}
The probabilistic lower bound of $\mu$ can be obtained by directly applying 
CLT on the pilot query results \cite{intro-math-stats}. Simultaneously, we can 
obtain a probabilistic upper bound of the variance:
\begin{equation*}
    \mathbb{P}\left[\sigma^2 \le \frac{n_p-1}{\chi^2_{n_p-1,1-\delta_2/3}} 
    \hat\sigma_p^2\right] \ge 1 - \frac{\delta_2}{3}
\end{equation*}
Let $n$ be the sample size of the final query. We can then express the 
variance of $\hat\mu$ as $Var\left[\hat\mu\right] = \frac{\sigma^2}{n}$.
Therefore, we have
\begin{equation}
    \mathbb{P}\left[Var\left[\hat\mu\right] \le \frac{n_p-1}{\chi^2_{n_p-1,1-\delta_2/3}} 
    \frac{\hat\sigma_p^2}{n}\right] \ge 1 - \frac{\delta_2}{3} \label{eq:app-var-bound-1}
\end{equation}

Next, we calculate the lower bound of $n$ given the sampling rate $\theta$. We 
observe that $n$ follows a binomial distribution $Bin(\theta, N)$, where $N$ is 
the total number of units to sample. To simplify calculation, we approximate 
$Bin(\theta, N)$ with a normal distribution 
$\mathcal{N}(N\theta, N\theta(1-\theta))$ 
\cite{verdictdb,intro-math-stats,agresti1998approximate}. Then, we have the 
following probabilistic bound:
\begin{equation}
    \mathbb{P}\left[n \ge N\theta - z_{1-\delta_2/3}\sqrt{N\theta(1-\theta)}\right] 
    \ge 1 - \frac{\delta_2}{3} \label{eq:app-var-bound-2}
\end{equation}

Finally, we calculate the lower bound of $N$ based on the pilot query result. We
also apply the normal approximation to the binomial distribution $Bin(\theta_p, N)$.
We can obtain the following bound:
\begin{equation*}
    \mathbb{P}\left[n_p \le N\theta_p + z_{1-\delta_2/3}\sqrt{N\theta_p(1-\theta_p)}\right] \ge 1 - \frac{\delta_2}{3}
\end{equation*}
Then, we can calculate the following upper bound for $N$:
\begin{equation}
    \mathbb{P}\left[N \ge \left(
        \sqrt{\frac{n_p}{\theta_p}+z_{1-\delta_2/3}^2\frac{1-\theta_p}{4\theta_p}}
        -\sqrt{z_{1-\delta_2/3}^2\frac{1-\theta_p}{4\theta_p}}\right)^2 \right] 
        \ge 1 - \frac{\delta_2}{3} \label{eq:app-var-bound-3}
\end{equation}

We can chain Inequalities \ref{eq:app-var-bound-1}, \ref{eq:app-var-bound-2}, 
and \ref{eq:app-var-bound-3} and obtain the final probabilistic upper bound
for $Var[\hat\mu]$.
\end{proof}
\minihead{Discussion of $n_p \rightarrow \infty$} Our derivation of probabilistic
bounds is based on CLT, which holds asymptotically as $n_p \rightarrow \infty$. 
In consistent with literature 
\cite{abae,inquest,quickr,hou1991error,hertzog2008considerations}, we consider
$n_p=30$ is sufficiently large. To have more conservative guarantees, we can 
use the Chernoff bounds for sub-Gaussian random variables to derive the bounds 
\cite{sub-gaussian,buldygin1980sub}.
\color{black}

\subsection{Proof of Theorem \ref{theorem:rel-error}}\label{proof:rel-error}
\begin{proof}
We first recap the definitions of $L_\mu$, $U_V[\Theta]$, and the CLT-based 
confidence interval on $\mu$:
\begin{align*}
    \mathbb{P}\left[\mu \ge L_\mu\right] &\ge 1 - \delta_1 \\
    \mathbb{P}\left[Var[\hat\mu](\Theta) \le U_V[\Theta]\right] &\ge 1 - \delta_2 \\
    \mathbb{P}\left[
        \hat\mu - z_{(1+p')/2}\sqrt{Var[\hat\mu]} 
        \le \mu \le 
        \hat\mu + z_{(1+p')/2}\sqrt{Var[\hat\mu]}
    \right] &\ge p'
\end{align*}
Then, based on Boole's Inequality, we have
\begin{align*}
\mathbb{P} & \big[
    \left(\mu \ge L_\mu\right) \vee 
    \left(Var[\hat\mu](\Theta) \le U_V[\Theta]\right) \\
    & \vee \left(\left|\hat\mu - \mu\right| \le z_{(1+p')/2}\sqrt{Var[\hat\mu](\Theta)}\right) \big] \\
& \ge 1 - \delta_1 - \delta_2 - (1-p') = p' - \delta_1 - \delta_2
\end{align*}
which implies that
\begin{equation*}
\mathbb{P} \left[
    \left|\frac{\hat\mu - \mu}{\mu} \le \frac{z_{(1+p')/2}\sqrt{U_V[\Theta]}}{L_\mu}\right|
\right] \ge p' - \delta_1 - \delta_2
\end{equation*}
Therefore, if we let
\begin{equation*}
    \frac{z_{(1+p')/2}\sqrt{U_V[\Theta]}}{L_\mu} \le e \quad
    \text{and} \quad
    p' - \delta_1 - \delta_2 = p
\end{equation*}
user's error specification will be satisfied, which is
\begin{equation}
    \mathbb{P} \left[
        \left|\frac{\hat\mu - \mu}{\mu} \le e\right|
    \right] \ge p
\end{equation}
\end{proof}

\subsection{Proof of the Rules in Table \ref{tab:error-prop}}\label{proof:error-prop}
\begin{lemma}(\textsc{Multiplication})
Let $\mu$ be a quantity calculated as the product of two positive quantities $\mu_1$ 
and $\mu_2$ (\ie, $\mu=\mu_1\mu_2$). We estimate $\mu_1$ with $\hat \mu_1$ and $\mu_2$ 
with $\hat \mu_2$. Let $e_{\mu_1} (<1)$ be the relative error between $\hat \mu_1$ 
and $\mu_1$, and $e_{\mu_2} (<1)$ be the relative error between $\hat \mu_2$ and 
$\mu_2$. The relative error between $\mu$ and $\hat \mu_1 \hat \mu_2$ has an upper 
bound of $e_{\mu_1} + e_{\mu_2} + e_{\mu_1} \cdot e_{\mu_2}$.
\end{lemma}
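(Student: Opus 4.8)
The plan is to reparametrize each estimate by its signed relative error and then expand the product algebraically. First I would define the signed errors $\epsilon_1 = (\hat\mu_1 - \mu_1)/\mu_1$ and $\epsilon_2 = (\hat\mu_2 - \mu_2)/\mu_2$, which are well defined because $\mu_1, \mu_2 > 0$. By construction $|\epsilon_1| = e_{\mu_1}$ and $|\epsilon_2| = e_{\mu_2}$, and each estimate can be written multiplicatively as $\hat\mu_i = \mu_i(1 + \epsilon_i)$. This converts the problem from one about absolute deviations into one about a clean product of perturbation factors.

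Next I would form the product and expand it: $\hat\mu_1 \hat\mu_2 = \mu_1\mu_2 (1+\epsilon_1)(1+\epsilon_2) = \mu_1\mu_2\,(1 + \epsilon_1 + \epsilon_2 + \epsilon_1\epsilon_2)$. Subtracting $\mu = \mu_1\mu_2$ and dividing through by $\mu$ (again using positivity so the division is legitimate) yields the \emph{exact} relative error of the composite estimator, $(\hat\mu_1\hat\mu_2 - \mu_1\mu_2)/(\mu_1\mu_2) = \epsilon_1 + \epsilon_2 + \epsilon_1\epsilon_2$. Finally I would take absolute values and apply the triangle inequality together with $|\epsilon_1\epsilon_2| = |\epsilon_1|\,|\epsilon_2|$ to obtain $|\epsilon_1 + \epsilon_2 + \epsilon_1\epsilon_2| \le |\epsilon_1| + |\epsilon_2| + |\epsilon_1|\,|\epsilon_2| = e_{\mu_1} + e_{\mu_2} + e_{\mu_1} e_{\mu_2}$, which is precisely the stated upper bound.

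There is essentially no hard step here: the result reduces to a single algebraic identity followed by one application of the triangle inequality. The only subtlety worth flagging is that the hypotheses $e_{\mu_1} < 1$ and $e_{\mu_2} < 1$ are not actually required for the multiplication bound, since the identity and the inequality above hold for arbitrary signed errors; these conditions are shared with the other rows of Table \ref{tab:error-prop} and become genuinely necessary only for the division rule, where one must keep the denominator factor $1 + \epsilon_2$ bounded away from zero.
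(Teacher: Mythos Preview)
Your argument is correct and follows essentially the same route as the paper: both rewrite $\hat\mu_i = \mu_i(1+\epsilon_i)$, expand the product, and read off the bound. The paper phrases it via two-sided inequalities $(1-e_{\mu_i})\mu_i \le \hat\mu_i \le (1+e_{\mu_i})\mu_i$ and multiplies them, whereas you compute the exact signed relative error $\epsilon_1+\epsilon_2+\epsilon_1\epsilon_2$ and then apply the triangle inequality; these are the same algebra in slightly different dress, and your remark that the hypothesis $e_{\mu_i}<1$ is unused here is accurate.
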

\begin{proof}
By definition of the relative error and the positiveness of $\mu_1$, we have
\begin{equation}
\left| \frac{\hat\mu_1 - \mu_1}{\mu_1} \right| \le e_{\mu_1} \quad 
\Leftrightarrow \quad (1-e_{\mu_1}) \mu_1 \le \hat \mu_1 \le (1+e_{\mu_1}) \mu_1 \notag
\end{equation}
Similarly, we have
\begin{equation}
\left| \frac{\hat\mu_2 - \mu_2}{\mu_2} \right| \le e_{\mu_2} \quad 
\Leftrightarrow \quad (1-e_{\mu_2}) \mu_2 \le \hat \mu_2 \le (1+e_{\mu_2}) \mu_2 \notag
\end{equation}
Then, we have
\begin{gather}
(1-e_{\mu_1})(1-e_{\mu_2})\mu_1\mu_2 \le \hat \mu_1 \hat \mu_2 \le (1+e_{\mu_1})(1+e_{\mu_2})\mu_1\mu_2 \notag \\
\Leftrightarrow  (e_{\mu_1}e_{\mu_2} - e_{\mu_1} - e_{\mu_2})\mu_1\mu_2 \le \hat \mu_1 \hat \mu_2 - \mu_1\mu_2 \\
\le (e_{\mu_1}e_{\mu_2} + e_{\mu_1} + e_{\mu_2})\mu_1\mu_2 \notag \\
\Leftrightarrow \left| \frac{\hat \mu_1 \hat \mu_2 - \mu_1\mu_2}{\mu_1\mu_2} \right| \le e_{\mu_1} + e_{\mu_2} + e_{\mu_1} \cdot e_{\mu_2} \notag
\end{gather}
\end{proof}

\begin{lemma}(\textsc{Division})
Let $\mu$ be a quantity calculated as the ratio of two positive quantities $\mu_1$ and 
$\mu_2$ (\ie, $\mu=\mu_1/\mu_2$). We estimate the $\mu_1$ with $\hat\mu_1$ and $\mu_2$ with $\hat\mu_2$. 
Let $e_{\mu_1} (<1)$ be the relative error between $\hat\mu_1$ and $\mu_1$, and $e_{\mu_2} (<1)$ 
be the relative error between $\hat\mu_2$ and $\mu_2$. The relative error between $\mu$ 
and $\hat\mu_1/\hat\mu_2$ has an upper bound of $\frac{e_{\mu_1}+e_{\mu_2}}{1+\min(e_{\mu_1}, e_{\mu_2})}$.
\end{lemma}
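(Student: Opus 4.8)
The plan is to mirror the interval-squeezing argument used in the Multiplication lemma, since the relative-error hypotheses on $\hat\mu_1$ and $\hat\mu_2$ translate directly into two-sided envelopes on each factor. First I would rewrite the two hypotheses $\left|(\hat\mu_i - \mu_i)/\mu_i\right| \le e_{\mu_i}$ in envelope form, using the positivity of $\mu_1$ and $\mu_2$: $(1-e_{\mu_1})\mu_1 \le \hat\mu_1 \le (1+e_{\mu_1})\mu_1$ and $(1-e_{\mu_2})\mu_2 \le \hat\mu_2 \le (1+e_{\mu_2})\mu_2$. Because $e_{\mu_2} < 1$, the lower envelope $(1-e_{\mu_2})\mu_2$ is strictly positive, so $\hat\mu_2 > 0$ and the quotient $\hat\mu_1/\hat\mu_2$ is well defined; recording this positivity explicitly is what legitimizes every subsequent division.

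Next I would propagate these envelopes through the quotient. Since $x \mapsto 1/x$ is decreasing on the positive reals, $\hat\mu_1/\hat\mu_2$ is maximized by pairing the largest admissible numerator with the smallest admissible denominator and minimized by the opposite pairing, which yields
\[ \frac{1-e_{\mu_1}}{1+e_{\mu_2}}\cdot\frac{\mu_1}{\mu_2} \le \frac{\hat\mu_1}{\hat\mu_2} \le \frac{1+e_{\mu_1}}{1-e_{\mu_2}}\cdot\frac{\mu_1}{\mu_2}. \]
Dividing through by the true value $\mu = \mu_1/\mu_2$ and subtracting one reduces the claim to sandwiching $\frac{\hat\mu_1/\hat\mu_2}{\mu_1/\mu_2} - 1$ between the two constants $\frac{1-e_{\mu_1}}{1+e_{\mu_2}} - 1$ and $\frac{1+e_{\mu_1}}{1-e_{\mu_2}} - 1$. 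Each of these collapses, after a common denominator, to a single fraction whose numerator is exactly $\pm(e_{\mu_1}+e_{\mu_2})$, so the structure $e_{\mu_1}+e_{\mu_2}$ over a $1\pm e$ factor emerges automatically, exactly paralleling the Multiplication case.

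The final step is to take absolute values and combine the two one-sided estimates into the stated relative-error bound, and I expect this to be the main obstacle. The delicate point is deciding which endpoint governs the magnitude of the signed relative error and then rewriting its denominator in the symmetric form appearing in Table~\ref{tab:error-prop}; this requires keeping the assumption $e_{\mu_i} < 1$ in force throughout so that no denominator changes sign, and being careful about which factor's error controls the tighter side. Once that case analysis is settled, the remaining manipulations are the routine algebra the paper alludes to, and the Addition rule follows from an even simpler application of the same envelope argument to $\hat\mu_1 + \hat\mu_2$.
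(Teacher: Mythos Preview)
Your envelope argument is exactly the paper's: it writes the same two-sided bound on $\hat\mu_1/\hat\mu_2$, subtracts $\mu_1/\mu_2$, and then passes to the absolute value. So the strategy matches.

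Your instinct that the last step is the obstacle is correct, and in fact it does not close. After subtracting $1$ the two endpoints are
\[
\frac{1-e_{\mu_1}}{1+e_{\mu_2}}-1=-\frac{e_{\mu_1}+e_{\mu_2}}{1+e_{\mu_2}},\qquad
\frac{1+e_{\mu_1}}{1-e_{\mu_2}}-1=\frac{e_{\mu_1}+e_{\mu_2}}{1-e_{\mu_2}},
\]
so the denominators are $1+e_{\mu_2}$ and $1-e_{\mu_2}$, not the $1+e_{\mu_1}$ and $1+e_{\mu_2}$ the paper writes in its middle display. Taking the larger magnitude gives the bound $\dfrac{e_{\mu_1}+e_{\mu_2}}{1-e_{\mu_2}}$, which is strictly \emph{larger} than the claimed $\dfrac{e_{\mu_1}+e_{\mu_2}}{1+\min(e_{\mu_1},e_{\mu_2})}$. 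No amount of case analysis on which error is smaller will reverse this, because the upper endpoint alone already violates the target: with $\mu_1=\mu_2=1$, $e_{\mu_1}=e_{\mu_2}=0.1$, $\hat\mu_1=1.1$, $\hat\mu_2=0.9$, the actual relative error is $2/9\approx 0.222$ while the stated bound is $0.2/1.1\approx 0.182$. The paper's second line is simply miscomputed, and the lemma as stated (and the division row of Table~\ref{tab:error-prop}) is false; the correct bound obtainable from your envelopes is $(e_{\mu_1}+e_{\mu_2})/(1-\min(e_{\mu_1},e_{\mu_2}))$ or, more sharply, $(e_{\mu_1}+e_{\mu_2})/(1-e_{\mu_2})$.
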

\begin{proof}
    By the definition of the relative error and the positiveness of $\mu_1$ and $\mu_2$, we have
    \begin{gather}
\frac{1-e_{\mu_1}}{1+e_{\mu_2}}\frac{\mu_1}{\mu_2} \le \frac{\hat\mu_1}{\hat\mu_2} \le \frac{1+e_{\mu_1}}{1-e_{\mu_2}}\frac{\mu_1}{\mu_2} \notag \\
\Leftrightarrow -\frac{e_{\mu_1} + e_{\mu_2}}{1+e_{\mu_1}}\frac{\mu_1}{\mu_2}\le 
    \frac{\hat\mu_1}{\hat\mu_2} - \frac{\mu_1}{\mu_2} \le \frac{e_{\mu_1} + e_{\mu_2}}{1+e_{\mu_2}}\frac{\mu_1}{\mu_2} \notag \\
\Rightarrow \left| \frac{\hat\mu_1/\hat\mu_2 - \mu_1/\mu_2}{\mu_1/\mu_2} \right| \le \frac{e_{\mu_1}+e_{\mu_2}}{1+\min(e_{\mu_1}, e_{\mu_2})} \notag
    \end{gather}
\end{proof}

\begin{lemma}(\textsc{Addition})
Let $\mu$ be a quantity calculated as the linear combination of two positive 
quantities $\mu_1$ and $\mu_2$. Namely, $\mu = \lambda_1 \mu_1 + \lambda_2 \mu_2$, 
where $\lambda_1$ and $\lambda_2$ are positive. We estimate $\mu_1$ with $\hat\mu_1$ 
and $\mu_2$ with $\hat\mu_2$. Let $e_{\mu_1} (<1)$ be the relative error between 
$\hat\mu_1$ and $\mu_1$, and $e_{\mu_2} (<1)$ be the relative error between 
$\hat\mu_2$ and $\mu_2$. The relative error between $\mu$ and 
$\lambda_1 \hat\mu_1+\lambda_2\hat\mu_2$ has an upper bound of $\max(e_{\mu_1}, e_{\mu_2})$.
\end{lemma}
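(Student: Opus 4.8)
The plan is to follow the same template used for the multiplication and division lemmas above: first convert each relative-error hypothesis into an absolute bound, then bound the absolute error of the composite estimator $\lambda_1 \hat\mu_1 + \lambda_2 \hat\mu_2$ by the triangle inequality, and finally divide through by the true value $\mu = \lambda_1 \mu_1 + \lambda_2 \mu_2$, which is positive since $\lambda_1, \lambda_2, \mu_1, \mu_2 > 0$.

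First I would rewrite each hypothesis $|(\hat\mu_i - \mu_i)/\mu_i| \le e_{\mu_i}$ as the absolute bound $|\hat\mu_i - \mu_i| \le e_{\mu_i}\mu_i$, which is valid because each $\mu_i$ is positive. Next I would write the absolute error of the composite estimator as $\lambda_1(\hat\mu_1 - \mu_1) + \lambda_2(\hat\mu_2 - \mu_2)$ and apply the triangle inequality together with the positivity of $\lambda_1, \lambda_2$ to get $|\lambda_1 \hat\mu_1 + \lambda_2 \hat\mu_2 - \mu| \le \lambda_1 e_{\mu_1}\mu_1 + \lambda_2 e_{\mu_2}\mu_2$. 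The key observation is then that the right-hand side is a nonnegative combination of $e_{\mu_1}$ and $e_{\mu_2}$ with weights $\lambda_1\mu_1$ and $\lambda_2\mu_2$, so it is bounded above by $\max(e_{\mu_1}, e_{\mu_2})(\lambda_1\mu_1 + \lambda_2\mu_2) = \max(e_{\mu_1}, e_{\mu_2})\,\mu$. Dividing by $\mu > 0$ yields the claimed bound $|(\lambda_1 \hat\mu_1 + \lambda_2 \hat\mu_2 - \mu)/\mu| \le \max(e_{\mu_1}, e_{\mu_2})$.

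Unlike the multiplication and division cases, there is essentially no obstacle here: the argument reduces to one application of the triangle inequality plus the elementary fact that a weighted average of two numbers (with positive weights) never exceeds their maximum. The only points requiring care are the sign conditions—positivity of $\mu_1, \mu_2, \lambda_1, \lambda_2$—which justify both the absolute-value reformulation and the step that factors out the maximum without weakening the bound; the inequality becomes tight precisely when $e_{\mu_1} = e_{\mu_2}$ or when one of the two weights vanishes.
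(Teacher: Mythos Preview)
Your proposal is correct and follows essentially the same approach as the paper: both arguments convert the relative-error hypotheses into the two-sided absolute bound $|\lambda_1\hat\mu_1+\lambda_2\hat\mu_2-\mu|\le \lambda_1 e_{\mu_1}\mu_1+\lambda_2 e_{\mu_2}\mu_2$ and then factor out $\max(e_{\mu_1},e_{\mu_2})$ before dividing by $\mu>0$. The only cosmetic difference is that the paper writes the two-sided inequality explicitly rather than invoking the triangle inequality, but the content is identical.
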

\begin{proof}
By the definition of the relative error and the positiveness of $\mu_1$ and $\mu_2$, we have
\begin{gather}
(1-e_{\mu_1})\lambda_1\mu_1 + (1-e_{\mu_2})\lambda_2\mu_2 \le 
    \lambda_1 \hat\mu_1 + \lambda_2\hat\mu_2 \\
    \le (1+e_{\mu_1})\lambda_1\mu_1 + (1+e_{\mu_2})\lambda_2\mu_2 \notag \\
\Leftrightarrow -\lambda_1e_{\mu_1} \mu_1 - \lambda_2e_{\mu_2} \mu_2 \le 
    \lambda_1\hat\mu_1 + \lambda_2\hat\mu_2 - (\lambda_1\mu_1 + \lambda_2\mu_2) \\
    \le \lambda_1e_{\mu_1} \mu_1+\lambda_2e_{\mu_2} \mu_2 \notag \\
\Rightarrow \left|
    \frac{ \lambda_1\hat\mu_1 + \lambda_2\hat\mu_2 - (\lambda_1\mu_1 + \lambda_2\mu_2)}{\lambda_1\mu_1 + \lambda_2\mu_2}
 \right| \le \max(e_{\mu_1}, e_{\mu_2}) \notag
    \end{gather}
\end{proof}

\subsection{Proof of Lemma \ref{lemma:group-error}}\label{proof:group-error}
\begin{lemma}
    For a table $T$ with block size $b$, block sampling with a sampling rate $\theta$ 
    satisfying the condition below ensures that the probability of missing a group of 
    size greater than $g$ is less than $p_f$.
    \begin{equation}
        \theta \ge 
        1 - \left(1-\left(1-p_f\right)^{\lceil g/b \rceil/|T|}\right)^{1/\lceil g/b \rceil} 
        \label{eq:group-error}
    \end{equation}
    \end{lemma}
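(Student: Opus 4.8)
The plan is to recast the event ``some output group of size greater than $g$ is missed'' as a block-coverage problem and then reduce it to an extremal block layout whose coverage probability admits a closed form. First I would set $m = \lceil g/b \rceil$ and observe that any group containing more than $g$ rows must occupy at least $m$ blocks, since each block holds at most $b$ rows. Under block-level Bernoulli sampling each block is retained independently with probability $\theta$, so a fixed group spanning $k \ge m$ blocks is missed exactly when all $k$ of its blocks are dropped, an event of probability $(1-\theta)^k \le (1-\theta)^m$. Writing $C_i$ for the event that group $i$ is covered (at least one of its blocks survives), the goal reduces to a lower bound on $\mathbb{P}[\bigcap_i C_i]$, and the required sampling rate will follow by forcing this probability to be at least $1 - p_f$.

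The core computation I would carry out in the worst-case layout. Partition the table's blocks into groups of exactly $m$ disjoint blocks each; taking the number of blocks to be $|T|$, there are $|T|/m$ such groups. Because distinct groups then use disjoint blocks, their coverage events are independent, each with probability $1 - (1-\theta)^m$, so $\mathbb{P}[\text{all covered}] = \bigl(1 - (1-\theta)^m\bigr)^{|T|/m}$. Requiring this to be at least $1 - p_f$ and inverting is routine algebra: isolating $(1-\theta)^m \le 1 - (1-p_f)^{m/|T|}$ and taking an $m$-th root yields precisely $\theta \ge 1 - \bigl(1 - (1-p_f)^{m/|T|}\bigr)^{1/m}$, which is Inequality \ref{eq:group-error}.

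The step I expect to be the main obstacle is justifying that this tiled configuration is genuinely the worst case, so that the derived condition is sufficient for an arbitrary data layout and not merely for the tiled one. Two monotonicity facts drive this: shrinking the number of blocks a group spans can only raise its miss probability (hence $m$ is the adversarial span), and making the groups' block sets disjoint can only lower the joint coverage probability. The latter is where care is needed, since groups may share blocks; the clean route is to note that each $C_i$ is increasing in the block-inclusion indicators, so by the Harris/FKG positive-association inequality $\mathbb{P}[\bigcap_i C_i] \ge \prod_i \mathbb{P}[C_i] \ge (1 - (1-\theta)^m)^G$, and block-sharing only strengthens the correlation in our favor, leaving the residual task of bounding the number of relevant groups by $G \le |T|/m$ using the fact that each requires at least $m$ blocks. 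This mirrors, at the block level, the row-level group-coverage argument of Proposition 4 of \cite{quickr}, with rows replaced by blocks and the group size $g$ replaced by its block footprint $\lceil g/b \rceil$.
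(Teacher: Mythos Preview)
Your approach matches the paper's almost step for step: both reduce to the minimum block footprint $m=\lceil g/b\rceil$, lower-bound the joint coverage probability by the product $\prod_i\bigl(1-(1-\theta)^m\bigr)$, upper-bound the number of large groups by $|T|/m$, and invert. The one substantive difference is how the product bound is justified: the paper carries out an explicit pairwise conditional-probability calculation (groups $i$ and $j$ sharing $k$ blocks) and then ``extends to all groups,'' whereas you invoke Harris/FKG for increasing events in independent Bernoulli coordinates. Your route is cleaner and more standard; the paper's pairwise computation is more elementary but its extension from two groups to arbitrarily many is left informal, so FKG actually buys you a tighter write-up of the same inequality.

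One small gap in your proposal: the justification of $G\le |T|/m$ via ``each group requires at least $m$ blocks'' does not go through as stated, because distinct groups can share blocks (a single block may contain rows from several groups), so their block footprints need not be disjoint and you cannot simply divide the block count by $m$. The bound is still correct, but the right reason is row-disjointness: in a \sqlgroupby\ the groups partition the rows, so the number of groups of size exceeding $g$ is at most $|T|/g\le |T|/\lceil g/b\rceil$ when $|T|$ counts rows, which is the paper's convention. You should therefore not read $|T|$ as a block count in this step.
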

\begin{proof}
Based on the meta-information, we calculate the number of blocks in $T$ as $N_b=|T|/b$. 
Because each group has at least $g$ rows, each group takes at least 
$n_b=\left\lceil g/b\right\rceil$ blocks. Suppose there are $t$ groups with size 
larger than $g$. Let $n_i$ be the number of blocks taken by the $i$-th group. 
We then have the following constraints
\begin{gather}
        t \le \frac{|T|}{n_0} \label{eq:const-group-num} \\
        \forall_{1\le i\le t},\ n_i \ge n_0  \label{eq:const-group-size}
\end{gather}
Based on the process of block sampling, we can calculate the probability of 
including $i$-th group as following
\begin{equation}
\mathbb{P}\left[\text{include group i}\right] = 1- \mathbb{P}\left[\text{miss group i}\right] = 1-(1-\theta)^{n_i} \notag
\end{equation}
Next, we calculate the probability of including groups $i$ and $j$ ($i \neq j$). 
Suppose group $i$ and group $j$ share $k$ blocks ($k \ge 0$), then the probability of including two groups has the following lower bound.
\begin{align}
    & \mathbb{P}\left[(\text{include group i}) \wedge  (\text{include group j})\right] \notag\\
    &= \mathbb{P}\left[\text{include group i}\ |\ \text{include group j}\right] \cdot \mathbb{P}\left[\text{include group j}\right] \notag \\
    &= \bigl(\mathbb{P}\left[\text{include group i}\right] \cdot \mathbb{P}\left[\text{not include shared blocks}\ |\ \text{include group j}\right] \notag\\
    & \quad + \mathbb{P}\left[\text{include shared blocks}\ |\ \text{include group j}\right] \bigr) \cdot \mathbb{P}\left[\text{include group j}\right] \notag \\
    &= \left(\left(1-(1-\theta)^{n_i}\right) \cdot \frac{n_j-k}{n_j} + \frac{k}{n_j}\right) \cdot \left(1-(1-\theta)^{n_j}\right) \notag \\
    &\ge \left(1-(1-\theta)^{n_i}\right) \cdot \left(1-(1-\theta)^{n_j}\right) \notag
\end{align}
We extend the results to all groups and calculate lower bound of the probability 
of including all groups
\begin{equation}
    \mathbb{P}\left[\text{include all groups}\right] \ge \prod_{i=1}^t \left(1-(1-\theta)^{n_i}\right) \notag
\end{equation}
Applying the constraints in \ref{eq:const-group-num} and \ref{eq:const-group-size}, we can have the following lower bound for the probability of including all groups
\begin{align*}
    \mathbb{P}\left[\text{include all groups}\right] 
    & \ge \prod_{i=1}^t \left(1-(1-\theta)^{n_0}\right) \\
    & \ge \prod_{i=1}^{|R|/n_0} \left(1-(1-\theta)^{n_0}\right) \\
    & = \left(1-(1-\theta)^{\lceil g/b \rceil}\right)^{|T|/\lceil g/b \rceil} \notag
\end{align*}
Therefore, if the sampling rate $\theta$ satisfies
\begin{equation}
    \theta \ge 1 - \left(1-\left(1-p\right)^{\lceil g/b \rceil/|T|}\right)^{1/\lceil g/b \rceil} \notag
\end{equation}
then
\begin{equation}
    \mathbb{P}\left[\text{miss a group}\right] = 1-\mathbb{P}\left[\text{include all groups}\right] \le p \notag
\end{equation}
\end{proof}

\subsection{Proof of Lemma \ref{lemma:comp-stats}}\label{proof:stats-eff}
\begin{proof}
Let $n_b$ be the number of blocks in the block sample and $n_r$ be the number of
rows in the row-level uniform sample. Then, the estimated mean using the block 
sample is
\begin{equation*}
\hat\mu_b = \frac{1}{b\cdot n_b}\sum_{i=1}^{n_b\cdot b} X_{i}
\end{equation*}
The estimated mean using the row-level uniform smaple is
\begin{equation*}
\hat\mu_u = \frac{1}{n_r}\sum_{i=1}^{n_r} X_{i}
\end{equation*}
We find that both estimators are unbiased:
\begin{align*}
&\mathbb{E}\left[\hat\mu_b\right] 
    = \frac{1}{n_b} \sum_{i=1}^{n_b} \mathbb{E}\left[\frac{1}{b}\sum_{i=j\cdot b}^{(j+1)\cdot b - 1} X_i\right]
    = \frac{1}{n_b} \sum_{i=1}^{n_b} \frac{1}{Nb} \sum_{i=1}^{N\cdot b} X_i
    = \mu \\
&\mathbb{E}\left[\hat\mu_r\right] = \frac{1}{n_r}\sum_{i=1}^{n_r}\mathbb{E}\left[ X_{i}\right] = \mu
\end{align*}
Therefore, the variance of the estimator is essentially the expected squared error.
To achieve the same accuracy in expectation, we let the variance of two estimators 
be the same. That is,
\begin{equation}
    Var[\hat\mu_b] = Var[\hat\mu_r] = \frac{1}{n_r} Var[X_i]
\end{equation}
Moreover, based on the law of total variance, we can decompose the variance of $\hat\mu_b$
in the following way:
\begin{equation*}
Var[\hat\mu_b] = \frac{1}{n_b} Var\left[\frac{1}{b}\sum_{i=j\cdot b}^{(j+1)\cdot b - 1} X_i\right]
= \frac{1}{n_b} \left( Var\left[X_i\right] - \mathbb{E}\left[\sigma_j^2\right] \right)
\end{equation*}
Finally, we can derive the ratio between the sample size of block sampling and 
that of row-level uniform sampling to achieve the same accuracy in expectation as
follows:
\begin{align}
\frac{b\cdot n_b}{n_r} = b\left(1 - \frac{\mathbb{E}\left[\sigma_j^2\right]}{Var\left[X_i\right]}\right)
\end{align}
\end{proof}

\subsection{Proof of Proposition \ref{prop:identical-agg}}\label{proof:identical-agg}
\begin{proof}
Let $\tilde{R}=\{R_1^*, \ldots, R_k^*\}$ be the set of tables on which the aggregate 
function $f$ produces $x$. Then, we have
\begin{align*}
\mathbb{P}\left[f\biggl(\mathcal{S}_1(\{T_1, \ldots, T_n\})\biggr) = x\right]
    &= \mathbb{P}\left[\bigcup_{i=1}^k \mathcal{S}_1(\{T_1, \ldots, T_n\}) = R_i^*\right] \\
    &= \sum_{i=1}^k \mathbb{P}\left[\mathcal{S}_1(\{T_1, \ldots, T_n\}) = R_i^*\right] \\
\end{align*}
Similarly, we have
\begin{equation*}
\mathbb{P}\left[f\left(\mathcal{S}_2(\{T_1, \ldots, T_n\})\right) = x\right] 
    = \sum_{i=1}^k \mathbb{P}\left[\mathcal{S}_2(\{T_1, \ldots, T_n\}) = R_i^*\right]
\end{equation*}
Since $\mathcal{S}_1$ and $\mathcal{S}_2$ are equivalent, we then have
\begin{equation*}
    \mathbb{P}\left[f\left(\mathcal{S}_1(\{T_1, \ldots, T_n\})\right) = x\right] 
    = \mathbb{P}\left[f\left(\mathcal{S}_2(\{T_1, \ldots, T_n\})\right) = x\right] 
\end{equation*}
\end{proof}

\subsection{Proof of Propositions \ref{theorem:select-comm}, \ref{theorem:join-comm}, and \ref{theorem:union-comm}} \label{proof:equiv}

\minihead{Proposition \ref{theorem:select-comm} (Selection)}
\begin{proof}
First, we describe two events whose probability is zero and independent of the order of block sampling and selection. Suppose the input table $T$ has $N$ blocks: $P_1, \ldots, P_N$. Let $T^*$ be the result table of applying block sampling and selection on $T$.
\begin{align}
    E_1 &:= \ \exists_{r \in T^*},\ \phi(r)=0 \notag \\
    E_2 &:= \ \forall_{i \in [1,N]}\ \exists_{r_1, r_2\in P_i},\ \phi(r_1)=\phi(r_2)=1\ \texttt{AND}\ r_1 \in T^*\ \texttt{AND}\ r_2 \notin T^* \notag
\end{align}
As long as the selection operation is applied, every row in the result table must evaluate the predicate $\phi$ to 1. Therefore, the probability of $E_1$ is $0$. Furthermore, block sampling ensures that if one row is sampled, the whole block is sampled. Therefore, it is not likely to have two rows from the same block satisfying the predicate, but only one of them is in the result table. Namely, the probability of $E_2$ is $0$. 

Next, we analyze the probability distributions excluding events $E_1$ and $E_2$. Let $T'=\{P'_1, \ldots, P'_{N'}\}$ be the subset of $T$ satisfying $\phi$, where each block $P'_i$ contains non-zero rows satisfying $\phi$. If we exclude events $E_1$ and $E_2$, the result table must be a subset of blocks in $T'$, regardless of the order of operations. We can calculate the inclusion probability of $P'_i$ in $T^*$ for different orders of operations. It turns out that the inclusion probability is independent of the operation order.
\begin{align}
    & \mathbb{P}\left[P'_i \in \sigma_\psi(\mathcal{B}_\theta(T))\right] = \mathbb{P}\left[P_i \in \mathcal{B}_\theta(T)\right] = \theta \notag \\
    & \mathbb{P}\left[P'_i \in \mathcal{B}_\theta(\sigma_\psi(T))\right] = \mathbb{P}\left[P'_i \in \mathcal{B}_\theta(T')\right] = \theta \notag
\end{align}

Since blocks are independent of each other in the process of selection and block sampling, for the result table $T^*=\{P'_1, \ldots, P'_n\}$, we have
\begin{align}
    & \mathbb{P}\left[\sigma_\psi(\mathcal{B}_\theta(T)) = T^*\right] \notag \\
    & = \mathbb{P}\left[\left(\bigcap_{i=1}^n P'_i \in \sigma_\psi(\mathcal{B}_\theta(T))\right) \bigcap\left(\bigcap_{i=n+1}^{N'} P'_i \notin \sigma_\psi(\mathcal{B}_\theta(T))\right)\right] \notag \\
    & = \theta^n (1-\theta)^{N'-n}; \notag \\
    & \mathbb{P}\left[\mathcal{B}_\theta(\sigma_\psi(T)) = T^*\right] \notag \\
    & = \mathbb{P}\left[\left(\bigcap_{i=1}^n P'_i \in \mathcal{B}_\theta(\sigma_\psi(T))\right) \bigcap\left(\bigcap_{i=n+1}^{N'} P'_i \notin \mathcal{B}_\theta(\sigma_\psi(T))\right)\right] \notag \\
    & = \theta^n (1-\theta)^{N'-n} \notag
\end{align}
Therefore, for any result table, the probability is independent of the order of selection and block sampling. Namely, selection and block sampling commutes.
\end{proof}

\minihead{Proposition \ref{theorem:join-comm} (Join)}
\begin{proof}
Since block sampling commutes with selection, it is sufficient to prove that block sampling also commutes with cross-product. Without loss of generality, we assume block sampling is executed on $T_1$. We first describe an event whose probability is zero and is independent of the order of block sampling and cross-product. Suppose $T_1$ has $N$ blocks: $T_1 = \{P_1. \ldots, P_N\}$. Let $T^*$ be the result table after applying block sampling and cross-product over $T_1$ and $T_2$.
\begin{equation}
    E := \quad \forall_{i \in [1, N]} \ \exists_{r_1, r_2 \in P_i},\ (r_1 \Join T_2) \in T^* \ \texttt{AND} \ (r_2 \Join T_2) \notin T^* \notag
\end{equation}
Block sampling ensures that if one row is sampled, the whole block is sampled. Therefore, it is not likely to have one row in the result table while other rows from the same block are not. Namely, the probability of $E$ is $0$.

Next, we analyze the probability distributions excluding the event $E$. Let $T' = \{P'_1, \ldots, P'_N\}$ be the full cross-product where $P'_i=P_i\Join T_2$ (\ie, the cross-product of block $P_i$ and table $T_2$). If we exclude event $E$, the result table $T$ must be a subset of blocks in $T'$. We can calculate the inclusion probability of $P'_i$ in $T$. It turns out that the inclusion probability is independent of the operation order.
\begin{align}
    & \mathbb{P}\left[P'_i \in \left(\mathcal{B}_\theta(T_1) \Join T_2\right)\right] = \mathbb{P}\left[P_i \in \mathcal{B}_\theta(T_1)\right] = \theta \notag \\
    & \mathbb{P}\left[P'_i \in \mathcal{B}_\theta(T_1 \Join T_2)\right] = \mathbb{P}\left[P'_i \in \mathcal{B}_\theta(T')\right] = \theta \notag
\end{align}

Since blocks are independent of each other in the process of cross-product and block sampling, for the arbitrary result table $T^*=\{P'_1, \ldots, P'_n\}$, we have
\begin{align}
    & \mathbb{P}\left[\left(\mathcal{B}_\theta(T_1) \Join T_2\right) = T^*\right] \notag \\
    & = \mathbb{P}\left[\left(\bigcap_{i=1}^n P'_i \in \left(\mathcal{B}_\theta(T_1) \Join T_2\right)\right) \bigcap\left(\bigcap_{i=n+1}^{N} P'_i \notin\left(\mathcal{B}_\theta(T_1) \Join T_2\right)\right)\right] \notag \\
    & = \theta^n (1-\theta)^{N-n}; \notag \\
    & \mathbb{P}\left[\mathcal{B}_\theta(T_1 \Join T_2) = T^*\right] \notag \\ 
    & = \mathbb{P}\left[\left(\bigcap_{i=1}^n P'_i \in \mathcal{B}_\theta(T_1 \Join T_2)\right) \bigcap\left(\bigcap_{i=n+1}^{N} P'_i \notin \mathcal{B}_\theta(T_1 \Join T_2)\right)\right] \notag \\
    & = \theta^n (1-\theta)^{N-n} \notag
\end{align}

Therefore, for any result table, the probability is independent of the order of cross-product and block sampling. Namely, cross-product and block sampling commute. Since selection and block sampling commute, join and block sampling commute.
\end{proof}

\minihead{Proposition \ref{theorem:union-comm} (Union)}
\begin{proof}
First, we describe an event whose probability is zero and independent of the order of union and block sampling. Suppose $T_1$ has $N_1$ blocks: $T_1 = \{P^{(1)}_{1}, \ldots, P^{(1)}_{N_1}\}$, and $T_2$ has $N_2$ blocks: $T_2 = \{P^{(2)}_1, \ldots, P^{(2)}_{N_2}\}$. Let $T^*$ be the result table after applying block sampling and union over $T_1$ and $T_2$.
\begin{align}
    & E_1 := \quad \forall_{i \in [1, N_1]} \ \exists_{r_1, r_2 \in P^{(1)}_i},\ r_1 \in T^* \ \texttt{AND} \ r_2 \notin T^* \notag \\
    & E_2 := \quad \forall_{i \in [1, N_2]} \ \exists_{r_1, r_2 \in P^{(2)}_i},\ r_1 \in T^* \ \texttt{AND} \ r_2 \notin T^* \notag
\end{align}
Block sampling ensures that if one row is sampled, the whole block is sampled. Therefore, it is not likely to have
one row in the result table while other rows from the same block are not. Namely, both the probability of $E_1$ and the probability of $E_2$ are zero.

Next, we analyze the probability distribution excluding $E_1$ and $E_2$. Let $T' = \{P^{(1)}_{1}, \ldots, P^{(1)}_{N_1}, P^{(2)}_1, \ldots, P^{(2)}_{N_2}\}$ be the union of $T_1$ and $T_2$. If we exclude $E_1$ and $E_2$, the result table $T$ must be a subset of blocks in $T'$. We can calculate the inclusion probability of $P^{(1)}_i$ and $P^{(2)}_i$ in $T$. It turns out that the inclusion probability is independent of the operation order. Without loss of generality, we only show the calculation of the inclusion probability of $P^{(1)}_i$.
\begin{align}
    & \mathbb{P}\left[P^{(1)}_i \in \left(\mathcal{B}_\theta(T_1) \cup \mathcal{B}_\theta(T_2)\right)\right] = \mathbb{P}\left[P^{(1)}_i \in \mathcal{B}_\theta(T_1)\right] = \theta \notag \\
    & \mathbb{P}\left[P^{(1)}_i \in \mathcal{B}_\theta(T_1 \cup T_2)\right] = \theta \notag
\end{align}

Since blocks are independent of each other in the process of union and block sampling, for the arbitrary result table:
$$T^* = \{P^{(1)}_1, \ldots, P^{(1)}_{n_1}, P^{(2)}_1, \ldots, P^{(2)}_{n_2}\}$$
we have
\begin{align}
    &\mathbb{P}\left[\left(\mathcal{B}_\theta(T_1) \cup \mathcal{B}_\theta(T_2)\right) = T^*\right] \notag \\
    &=\mathbb{P}\biggl[\left(\bigcup_{i=1}^{n_1} P_i^{(1)} \in \mathcal{B}_\theta(T_1)\right) \bigcup \left(\bigcup_{i=n_1+1}^{N_1} P_i^{(1)} \notin \mathcal{B}_\theta(T_1)\right) \notag \\
    & \quad \quad \bigcup \left(\bigcup_{i=1}^{n_2} P_i^{(2)} \in \mathcal{B}_\theta(T_2)\right) \bigcup \left(\bigcup_{i=n_2+1}^{N_2} P_i^{(2)} \notin \mathcal{B}_\theta(T_2)\right)\biggr] \notag \\
    &= \theta^{n_1} \cdot (1-\theta)^{N_1-n_1} \cdot \theta^{n_2} \cdot (1-\theta)^{N_2-n_2} \notag \\
    & = \theta^{n_1+n_2} \cdot (1-\theta)^{N_1 + N_2 - n_1 - n_2}; \notag \\
    &\mathbb{P}\left[\mathcal{B}_\theta(T_1 \cup T_2) = T^*\right] \notag  \\
    & = \mathbb{P}\biggl[\left(\bigcup_{k=1}^2 \bigcup_{i=1}^{n_k} P_i^{(k)} \in \mathcal{B}_\theta(T_1 \cup T_2)\right) \notag \\
    & \quad \quad \bigcup \left(\bigcup_{k=1}^2 \bigcup_{i=n_k+1}^{N_k} P_i^{(k)} \notin \mathcal{B}_\theta(T_1 \cup T_2)\right)\biggr] \notag \\
    &= \theta^{n_1 + n_2} (1-\theta)^{N_1 - n_1 + N_2 - n_2} \notag
\end{align}
Therefore, for any result table, the probability is independent of the order of union and block sampling. Namely, union and block sampling commute.
\end{proof}

\subsection{Proof of Theorem \ref{theorem:join-clt}}\label{proof:join-clt}
\begin{proof}
We consider the Join of $k$ block samples as $n$ independent and orthogonal 
blcok sampling procedures on the full cross product. Then, the final convergence 
can be proved by induction. We set up the notations as follows.

Let $w_{t,j}(i_1, \ldots, i_t)$ be the sum of Join with fixed indices,
$(i_1, \ldots, i_t)$ on the first $t$ table, a fixed index $j$ on the $t+1$-th 
table, and all of the last $k-1-t$ tables. Namely,
\begin{equation}
    w_{t,j}(i_1, \ldots, i_t) = \sum_{i_{t+2}=1}^{N_{t+2}} \cdots \sum_{i_{k}}^{N_k} \mathcal{J}(i_1, \ldots, i_t, j, i_{t+2}, \ldots, i_k)
\end{equation}
Then, let $w_t$ be the vector of of $w_{t,j}$s with all possible $j$s:
\begin{equation}
    \vec w_t(i_1, \ldots, i_t) = (w_{t,1}(i_1, \ldots, i_t), \ldots, w_{t, N_{t+1}}(i_1, \ldots, i_t))^\top
\end{equation}
Suppose we know the sampling results of the first $t$ tables, we denote $\hat W_t$
as the mean of the result:
\begin{equation*}
    \hat W_{n_1, \ldots, n_{t}}^{(t)} = \left(\prod_{i=1}^t \frac{1}{n_i}\right) \sum_{i_1=1}^{n_1} \cdots \sum_{i_t=1}^{n_t} \vec w_t(i_1, \ldots, i_t)
\end{equation*}
Let $u_{t,i}$ be a one-hot vector of length $N_{t+1}$, where the $i$-th element 
is 1 and the rest of elements are 0:
\begin{equation}
    \vec u_{t,i} = (\overbrace{\underbrace{0, \ldots, 0}_{i-1}, 1, 0, \ldots, 0}^{N_{t+1}})^\top
\end{equation}
Let $\tilde{u}_t$ be the set of $N_{t+1}$ $u_{t,i}$s and $U_{n_{t+1}}^{(t)}$ be 
the mean of $n_{t+1}$ vectors randomly sampled from $\tilde{u}_t$ without replacement.
Namely,
\begin{equation*}
    \tilde{u}_t = \{u_{t,1}, \ldots, u_{t,N_{t+1}}\}, \quad 
    \hat U_{n_{t+1}}^{(t)} = \frac{1}{n_{t+1}} \sum_{i=1}^{n_{t+1}} \vec u_{t,i}
\end{equation*}
We denote $\hat Z_{n_1, \ldots, n_{t+1}}$ as the inner product of $\hat W_{n_1, \ldots, n_{t}}^{(t)}$
and $\hat U_{n_{t+1}}^{(t)}$:
\begin{equation*}
    \hat Z_{n_1, \ldots, n_{t+1}} = \left(\hat W_{n_1, \ldots, n_{t}}^{(t)}\right)^\top \hat U_{n_{t+1}}^{(t)}
\end{equation*}
Then, we find that the mean of the Join result over block samples $\hat\mu$ can 
be represented as $\hat Z_{n_1, \ldots, n_t}$ with $t=k$. Next, we prove the convergence of
$\hat Z_{n_1, \ldots, n_t}$ by induction on $t$.

\minihead{Base case} When $t=0$, we have
\begin{gather}
    w_{0,j} = \sum_{i_2=1}^{N_2} \cdots \sum_{i_{k}}^{N_k} \mathcal{J}(j, i_2, \ldots, i_k) \\
    \hat Z_{n_1} = \frac{1}{n_1} \sum_{i_1}^{n_1} \sum_{i_2=1}^{N_2} \cdots\sum_{i_{k}}^{N_k} \mathcal{J}(i_1, i_2, \ldots, i_k)
\end{gather}
In this case, the convergence of $\hat Z_{n_1}$ follows the standard CLT. Namely,
with $\mu^{(1)} = \frac{1}{N_1} \sum_{i_1}^{N_1} \cdots\sum_{i_{k}}^{N_k} \mathcal{J}(i_1, \ldots, i_k)$,
we have
\begin{equation*}
    \hat Z_{n_1} - \mu^{(1)} \xrightarrow{D} \mathcal{N}(0, Var\left[\hat Z_{n_1}\right])
\end{equation*}

\minihead{Induction step}
Suppose the convergence of $\hat Z_{1n_1, \ldots, n_t}$ holds, \ie,
\begin{equation*}
    \hat Z_{n_1, \ldots, n_t} - \mu^{(t)} \xrightarrow{D} \mathcal{N}\left(0, Var\left[\hat Z_{n_1, \ldots, n_t}\right]\right)
\end{equation*}
Then, for the case of $t+1$, we have the following according to multivariate CLT.
\begin{equation*}
    \hat W_{n_1, \ldots, n_t}^{(t)} - W_{n_1, \ldots, n_t}^{(t)} \xrightarrow{D} 
    \mathcal{N}(0, \Sigma_W)
\end{equation*}
where $\Sigma_W$ is the covariance matrix of $\hat W_{n_1, \ldots, n_t}^{(t)}$ and
\begin{equation*}
    W_{n_1, \ldots, n_t}^{(t)} = \left(\prod_{i=1}^t \frac{1}{N_i}\right) \sum_{i_1=1}^{N_1} \cdots \sum_{i_t=1}^{N_t} \vec w_t(i_1, \ldots, i_t)
\end{equation*}
According to the standard multivariate CLT, we have
\begin{equation}
    \hat U_{n_{t+1}}^{(t)} - U_{n_{t+1}}^{(t)} \xrightarrow{D} \mathcal{N}\left(0, \Sigma_U\right)
\end{equation}
where $\Sigma_U$ is the covariance matrix of $\hat U_{n_{t+1}}^{(t)}$ and
\begin{equation*}
    U_{n_{t+1}}^{(t)} = \underbrace{\left(\frac{1}{N_{t+1}}, \ldots, \frac{1}{N_{t+1}}\right)}_{N_{t+1}}
\end{equation*}
Since the block sampling procedure on table $t+1$ is indenpendent of the block
sampling procedures on previous $t$ tables, we can apply the Cramer-Wold Theorem,
which leads to
\begin{equation*}
    \left(\hat W_{n_1, \ldots, n_{t}}^{(t)}, \hat U_{n_{t+1}}^{(t)}\right) - 
    \left(W_{n_1, \ldots, n_t}^{(t)}, U_{n_{t+1}}^{(t)}\right) \xrightarrow{D}
    \mathcal{N}\left(0, (\Sigma_W, \Sigma_U)\right)
\end{equation*}
We further apply the $\delta$-method, which results in
\begin{gather*}
    \left(\hat W_{n_1, \ldots, n_{t}}^{(t)}\right)^\top \hat U_{n_{t+1}}^{(t)} - 
    \left(W_{n_1, \ldots, n_t}^{(t)}\right)^\top U_{n_{t+1}}^{(t)} \xrightarrow{D} \\
    \mathcal{N}\left(0, \left(W_{n_1, \ldots, n_t}^{(t)}\right)^\top\Sigma_WW_{n_1, \ldots, n_t}^{(t)} + 
    (U_{n_{t+1}}^{(t)})^\top\Sigma_UU_{n_{t+1}}^{(t)}\right)
\end{gather*}
Namely,
\begin{equation*}
    \hat Z_{n_1, \ldots, n_{t+1}} - \mu^{(t+1)} \xrightarrow{D} \mathcal{N}\left(0, Var\left[\hat Z_{n_1, \ldots, n_{t+1}}\right]\right)
\end{equation*}
\end{proof}

\subsection{Generalized Version of Lemma \ref{lemma:var-ub} and Its Proof} \label{proof:var-ub}

\begin{lemma}
    Without loss of generality, we suppose $T_1$ is sampled with sampling rate 
    $\theta_p$ in the pilot query. Then, the standard estimator of the \asum 
    aggregate obtained from a query with a sampling plan $\Theta=[\theta_1, \ldots, 
    \theta_k]$ has a variance of at most
    \begin{align*}
    U_V[\Theta] 
    =  &\sum_{S \subseteq \{1, \ldots, k\} | 1 \in S}
        c_S \cdot U_{y_{S,i}^{(1)}} [\delta] \\
        +&\sum_{S \subseteq \{2, \ldots, k\} | S \neq \emptyset}
        c_S \cdot 
        \sum_{t \in \Lambda(S) }
        \left(U_{y_{t,S,i}^{(2)}}[\delta] \right)^2
    \end{align*}
    with a probability of at least $1-\delta$, where
    \begin{align*}
    c_S &= \sum_{S' \subseteq S} (-1)^{|S|+|S'|} \prod_{i=1}^k \theta_i^{-\mathbbm{1}(i \notin S')}, \\
    y^{(1)}_{S,i} &= \sum_{t_{j,q_j} \in T_j | j \in S,j\neq 1} 
        \left(\sum_{t_{j,q_j} \in T_j | j \in S^C} \myjoin(t_{1,i}, \ldots, t_{k,q_k})\right) ^ 2, \\
    y^{(2)}_{t,S,i} &= \sum_{t' \in \Lambda(S^C)} \myjoin(t_{1,i}, t, t'), \\
    S^C &= \{1, \ldots, k\} / S, \\
    \Lambda(\{1, \ldots, k'\}) &= \{t|t \in T_1\} \times \ldots \times \{t | t \in T_{k'}\} \\
    U_{y_i}[\delta] &= \frac{1}{\theta_p} \left(\sum_{i=1}^{n_p} y_i + \sqrt{n_p} \cdot \hat\sigma(y_i) \cdot t_{\delta, n_p-1}\right)\\
    \delta' &= \delta \bigl/ \left(2^{k-1} + 2\prod_{i=2}^k N_i\right)
    \end{align*}
\end{lemma}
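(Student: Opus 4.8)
The plan is to compute $Var[\hat\mu]$ exactly from the covariance structure of the block-inclusion indicators, reorganize it into a sum over ``agreement sets'' of tables via an inclusion--exclusion identity, and then replace each resulting (unknown) population quantity by a pilot-based Student's $t$ upper bound, taking a union bound so that all of them hold jointly. First I would fix notation: for each table $T_l$ let $B_{l,j}\in\{0,1\}$ indicate that block $j$ is included by block sampling, so the $B_{l,j}$ are independent with $\mathbb P[B_{l,j}=1]=\theta_l$ and $B_{l,j}^2=B_{l,j}$. Writing $a_{i_1,\dots,i_k}=\myjoin(t_{1,i_1},\dots,t_{k,i_k})$, the upscaled \asum estimator is $\hat\mu=\bigl(\prod_{l}\theta_l^{-1}\bigr)\sum_{i_1,\dots,i_k}\bigl(\prod_{l}B_{l,i_l}\bigr)a_{i_1,\dots,i_k}$, which is unbiased for the total. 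Expanding its variance gives $Var[\hat\mu]=\bigl(\prod_l\theta_l^{-2}\bigr)\sum_{\vec i,\vec i'}a_{\vec i}a_{\vec i'}\,\mathrm{Cov}\bigl(\prod_l B_{l,i_l},\prod_l B_{l,i_l'}\bigr)$. The key observation is that this covariance depends only on the \emph{agreement set} $S=\{l:i_l=i_l'\}$: since distinct blocks are independent and equal blocks satisfy $B^2=B$, one gets $\mathbb E[\prod_l B_{l,i_l}B_{l,i_l'}]=\prod_{l\in S}\theta_l\prod_{l\notin S}\theta_l^2$, so the covariance equals $\bigl(\prod_l\theta_l^2\bigr)\bigl(\prod_{l\in S}\theta_l^{-1}-1\bigr)$, which vanishes when $S=\emptyset$.

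Next I would carry out the combinatorial reorganization. Cancelling $\prod_l\theta_l^{-2}$ leaves $Var[\hat\mu]=\sum_{\vec i,\vec i'}a_{\vec i}a_{\vec i'}\bigl(\prod_{l\in S}\theta_l^{-1}-1\bigr)$. Using the multiplicative identity $\prod_{l\in S}\theta_l^{-1}-1=\sum_{\emptyset\neq T\subseteq S}\prod_{l\in T}(\theta_l^{-1}-1)$ and exchanging the order of summation, the pairs contributing to a fixed $T$ are exactly those that agree on (at least) the coordinates in $T$; fixing the common indices on $T$ and letting the two copies range independently over the remaining coordinates factorizes the inner double sum into a perfect square, giving the clean decomposition
\[
Var[\hat\mu]=\sum_{\emptyset\neq S\subseteq\{1,\dots,k\}} c_S \sum_{(i_l)_{l\in S}}\Bigl(\sum_{(i_l)_{l\in S^C}} \myjoin(t_{1,i_1},\dots,t_{k,i_k})\Bigr)^2,\qquad c_S=\prod_{l\in S}\frac{1-\theta_l}{\theta_l},
\]
where $S^C=\{1,\dots,k\}\setminus S$ and the stated $c_S$ is this quantity re-expressed by inclusion--exclusion. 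I would then match each inner quantity to the lemma: for $S\ni1$ the index of the pilot table $T_1$ is an outer summation variable, so the per-block square is computable on each pilot block and summed over them, matching $U_{y^{(1)}_{S,i}}$; for $S\subseteq\{2,\dots,k\}$ the index of $T_1$ sits inside the squared marginal sum, so for each fixed choice $t\in\Lambda(S)$ of the $S$-blocks the inner sum is bounded first and then squared, matching $\sum_{t\in\Lambda(S)}(U_{y^{(2)}_{t,S,i}})^2$.

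Finally I would convert the exact but unknown inner sums into probabilistic upper bounds. Because only $T_1$ is sampled in the pilot, each inner quantity is (up to deterministic summation over the other tables) a Horvitz--Thompson total over the blocks of $T_1$; invoking the asymptotic normality of Theorem \ref{theorem:join-clt} together with the standard $\chi^2$/Student's $t$ machinery of \cite{intro-math-stats}, every such total admits the one-sided bound $U_y[\delta]=\theta_p^{-1}\bigl(\sum_{i=1}^{n_p}y_i+\sqrt{n_p}\,\hat\sigma(y_i)\,t_{\delta,n_p-1}\bigr)$ holding with probability $1-\delta$. Since all $c_S\ge 0$, replacing each inner quantity by its upper bound yields an upper bound on $Var[\hat\mu]$. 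I would then count the distinct bounds used---$2^{k-1}$ of type $y^{(1)}$ for the sets containing $1$, and the $y^{(2)}$ bounds indexed over $\bigcup_S\Lambda(S)$ for the remaining sets---and apply Boole's inequality, which forces the per-bound level $\delta'=\delta/(2^{k-1}+2\prod_{i=2}^k N_i)$ so that the full expression $U_V[\Theta]$ holds jointly with probability at least $1-\delta$. The main obstacle will be the type-$2$ terms: there the squared inner quantity is the square of a \emph{signed} marginal Join sum, so bounding it by $(U_{y^{(2)}})^2$ requires two-sided control of the estimate---precisely what introduces the factor $2$ in the denominator of $\delta'$---and the inclusion--exclusion bookkeeping must be aligned so that the coefficient multiplying each squared bound is exactly $c_S$ rather than a looser constant.
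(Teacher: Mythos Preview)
Your proposal is correct and follows essentially the same route as the paper: compute the exact variance of the Horvitz--Thompson \asum estimator, decompose it over nonempty subsets $S\subseteq\{1,\dots,k\}$ of agreement indices, split into the cases $1\in S$ and $1\notin S$, upper-bound each resulting population sum via a one-sided Student's~$t$ interval based on the pilot sample of $T_1$, and finish with a union bound to obtain the joint confidence $1-\delta$. Your derivation is in fact more explicit than the paper's---you actually derive the decomposition via the identity $\prod_{l\in S}\theta_l^{-1}-1=\sum_{\emptyset\neq T\subseteq S}\prod_{l\in T}(\theta_l^{-1}-1)$ and obtain the closed form $c_S=\prod_{l\in S}(1-\theta_l)/\theta_l$, which matches the paper's $k=2$ specialization (Lemma~\ref{lemma:var-ub}); the lemma's inclusion--exclusion expression for $c_S$ is meant to be the same quantity.
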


\begin{proof}
We first calculate the variance exactly.
\begin{align}
&Var\left[\left(\prod_{i=1}^k \frac{1}{\theta_i}\right)
    \sum_{t_1 \in \hat T_1} \cdots \sum_{t_k \in \hat T_k} 
    \mathcal{J}(t_1, \ldots, t_k)\right]\\
=& \sum_{S \subseteq \{1, \ldots, k\}} c_S \sum_{t_{j,q_j} \in T_j | j \in S} 
\left(\sum_{t_{j,q_j} \in T_j | j \in S^C} \myjoin(t_{1,i}, \ldots, t_{k,q_k})\right) ^ 2 \label{eq:proof-var-exact}
\end{align}
where $\hat T_i$ is the block sample of $T_i$ with sampling rate $\theta_i$ and
\begin{equation*}
    c_S = \sum_{S' \subseteq S} (-1)^{|S|+|S'|} \prod_{i=1}^k \theta_i^{-\mathbbm{1}(i \notin S')}
\end{equation*}

We observe that Equation \ref{eq:proof-var-exact} requires execute the full query,
which is prohebitively expensive. Therefore, we use the pilot query which samples
the table $T_1$ with sampling rate $\theta_p$ to estimate the upper bound of the
variance. Based on the exact expression of the variance, we need to estimate the
upper bound of the following item for each $S$, the subset of $\{1, \ldots, k\}$.
\begin{equation}
    Y_S = \sum_{t_{j,q_j} \in T_j | j \in S} \left(\sum_{t_{j,q_j} \in T_j | j \in S^C} \myjoin(t_{1,i}, \ldots, t_{k,q_k})\right) ^ 2
    \label{eq:var-decomp}
\end{equation}
Depending on whether the sampling table $T_1$ is in $S$, we consider the following
two cases.

\minihead{Case 1: $1 \in S$}
In this case, we denote
\begin{equation*}
    y^{(1)}_{S,i} = \sum_{t_{j,q_j} \in T_j | j \in S,j\neq 1} 
        \left(\sum_{t_{j,q_j} \in T_j | j \in S^C} \myjoin(t_{1,i}, \ldots, t_{k,q_k})\right) ^ 2
\end{equation*}
We find that $Y_S$ in Equation \ref{eq:var-decomp} is a sum of $y^{(1)}_{S,i}$. 
We can obtain the upper bound of the sum of random variables with the sum of its 
sample and statistics from Student's t distribution. Namely,
\begin{equation*}
    \mathbb{P}\left[Y_S \le \frac{1}{\theta_p}\left(\sum_{i=1}^{n_p} y^{(1)}_{S,i} + \sqrt{n_p} \cdot \hat\sigma(y^{(1)}_{S,i}) \cdot t_{1-\delta, n_p-1}\right)\right] \ge 1 - \delta
\end{equation*}
where $\theta_p$ is the sampling rate, $n_p$ is the size of resulting sample, 
$\hat\sigma(y^{(1)}_{S,i})$ is the standard deviation of the sample, and $t_{1-\delta, n_p-1}$
is the $1-\delta$ percentile of the Student's t distribution with $n_p-1$ degrees 
of freedom.

\minihead{Case 2: $1 \notin S$}
In this case, we denote
\begin{equation*}
    y^{(2)}_{t,S,i} = \sum_{t' \in \Lambda(S^C)} \myjoin(t_{1,i}, t, t')
\end{equation*}
Then we can express $Y_S$ equivalently as
\begin{equation*}
    Y_S = \sum_{t \in \Lambda(S) } \left(\sum_{i=1}^{N_1} y^{(2)}_{t,S,i}\right) ^2
\end{equation*}
which intuitively is summation of squared summation of $y^{(2)}$.
Therefore, we can obtain the upper bound of $Y_S$ by obtaining the upper bound of
each $\sum_{i=1}^{N_1} y^{(2)}_{t,S,i}$, using the same technique in the Case 1.
Namely,
\begin{equation*}
    \mathbb{P}\left[\sum_{i=1}^{N_1} y^{(2)}_{t,S,i} \le \frac{1}{\theta_p}\left(\sum_{i=1}^{n_p} y^{(2)}_{t,S,i} + \sqrt{n_p} \cdot \hat\sigma(y^{(2)}_{t,S,i}) \cdot t_{1-\delta, n_p-1}\right)\right] \ge 1 - \delta
\end{equation*}

Combining two cases and dividing the overall failure probability by the number of
individual upper bounds we used in the derivation, we can obtain the final result.
\end{proof}

\color{black}
\subsection{Comparing Equivalence Rules with Dominance Rules} \label{subsec:quickr-rules}
We first review the definition of sampling dominance \cite{quickr}.
\begin{definition}
Given the same original query, the sampling procedure $\mathcal{S}_1$ with 
output $R_1$ dominates the sampling procedure $\mathcal{S}_2$ with output 
$R_2$, or $\mathcal{S}_1 \xRightarrow{Q} \mathcal{S}_2$, if and only if,
\begin{align*}
    & (v-dominance)\quad \forall i, j:\\
    & \quad \frac{\mathbb{P}\left[i\in R_1, j\in R_1\right]}
        {\mathbb{P}\left[i\in R_1\right]\mathbb{P}\left[j\in R_1\right]} \ge 
        \frac{\mathbb{P}\left[i\in R_2, j\in R_2\right]}
        {\mathbb{P}\left[i\in R_2\right]\mathbb{P}\left[j\in R_2\right]}, and \\
    & (c-dominance)\quad \mathbb{P}\left[t\in R_1\right] \le 
        \mathbb{P}\left[t\in R_2\right]
\end{align*}
\end{definition}

We then prove that our equivalence rules are stronger than the sampling dominance
rules of \textsc{QuickR}.
\begin{theorem}
Given the same origianl query with $k$ input tables $\{T_1, \ldots, T_k\}$, if 
$\mathcal{S}_1 \Leftrightarrow \mathcal{S}_2$, 
then $\mathcal{S}_1 \xRightarrow{Q} \mathcal{S}_2$ and 
$\mathcal{S}_1 \xRightarrow{Q} \mathcal{S}_2$.
\end{theorem}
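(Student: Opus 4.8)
The plan is to reduce both QuickR dominance conditions to statements about marginal and pairwise inclusion probabilities, and then show that equivalence forces every such probability to coincide between $\mathcal{S}_1$ and $\mathcal{S}_2$. The central observation is that the inclusion event $\{t \in R\}$ is simply the disjoint union of the atomic outcomes $\{\mathcal{S}(\{T_1,\ldots,T_k\}) = R\}$ over all sample results $R$ that contain $t$. Since Definition~\ref{def:commutativity} asserts $\mathbb{P}[\mathcal{S}_1 = R] = \mathbb{P}[\mathcal{S}_2 = R]$ for \emph{every} $R$, any event that can be written as a union of such atomic outcomes must have the same probability under both procedures.

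First I would establish equality of the marginal inclusion probabilities. For any unit $t$,
$$\mathbb{P}[t \in R] = \sum_{R\,:\,t \in R} \mathbb{P}\bigl[\mathcal{S}(\{T_1,\ldots,T_k\}) = R\bigr],$$
and summing the equivalence identity over the common index set $\{R : t \in R\}$ yields $\mathbb{P}[t \in R_1] = \mathbb{P}[t \in R_2]$. This immediately gives the c-dominance condition $\mathbb{P}[t \in R_1] \le \mathbb{P}[t \in R_2]$ as an equality. Second, the identical argument applied to a pair $i,j$---writing $\mathbb{P}[i \in R, j \in R]$ as a sum of $\mathbb{P}[\mathcal{S} = R]$ over all $R$ containing both $i$ and $j$---shows the joint inclusion probabilities agree. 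Substituting these equalities into the v-dominance ratio makes its two sides identical, so the v-dominance inequality holds (again with equality).

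Finally, because the equivalence relation of Definition~\ref{def:commutativity} is symmetric, exchanging the roles of $\mathcal{S}_1$ and $\mathcal{S}_2$ re-runs the same argument and delivers dominance in the reverse direction. Hence $\mathcal{S}_1 \xRightarrow{Q} \mathcal{S}_2$ and $\mathcal{S}_2 \xRightarrow{Q} \mathcal{S}_1$, which is strictly stronger than one-directional dominance. The only point requiring care---and the closest thing to an obstacle---is making the inclusion event $\{t \in R\}$ precise as a set of output tables, so that marginalizing over $R$ is well-defined; once the single- and pairwise-inclusion events are expressed as unions of atomic outcomes, the distributional equality transfers verbatim and there is no genuine analytic difficulty.
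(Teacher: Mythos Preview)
Your proposal is correct and follows essentially the same route as the paper: express the single- and pairwise-inclusion events as sums of the atomic outcome probabilities $\mathbb{P}[\mathcal{S}(\{T_1,\ldots,T_k\})=R]$, invoke Definition~\ref{def:commutativity} to conclude these sums coincide under $\mathcal{S}_1$ and $\mathcal{S}_2$, and read off c- and v-dominance as equalities in both directions. The paper's argument is identical in structure, so there is nothing to add.
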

\begin{proof}
Let $R_1$ be the output of $\mathcal{S}_1$ and $R_2$ be the output of 
$\mathcal{S}_2$. Then, we observe that
\begin{align*}
    \mathbb{P}\left[i \in R_1\right] = \sum_{R \in \{R': i \in R'\}} 
    \mathbb{P}\left[\mathcal{S}_1\left(\{T_1, \ldots, T_k\}\right) = R\right] \\
    \mathbb{P}\left[i \in R_2\right] = \sum_{R \in \{R': i \in R'\}} 
    \mathbb{P}\left[\mathcal{S}_2\left(\{T_1, \ldots, T_k\}\right) = R\right]
\end{align*}
According to the definition of $\mathcal{S}_1 \Leftrightarrow \mathcal{S}_2$,
\begin{equation}
    \forall R,\ \mathbb{P}\left[\mathcal{S}_1\left(\{T_1, \ldots, T_k\}\right)=R\right] 
    = \mathbb{P}\left[\mathcal{S}_2\left(\{T_1, \ldots, T_k\}\right)=R\right]. \notag
\end{equation}
Thus, 
\begin{equation}
    \mathbb{P}\left[i \in R_1\right] = \mathbb{P}\left[i \in R_2\right] \label{eq:prove-c-dominance}
\end{equation}
which proves the mutual c-dominance between $\mathcal{S}_1$ and $\mathcal{S}_1$.

Similarly, we can show that
\begin{align*}
    \mathbb{P}\left[i \in R_1, j \in R_1\right] = \sum_{R \in \{R': i \in R', j\in R'\}} 
    \mathbb{P}\left[\mathcal{S}_1\left(\{T_1, \ldots, T_k\}\right) = R\right] \\
    \mathbb{P}\left[i \in R_2, j \in R_2\right] = \sum_{R \in \{R': i \in R', j\in R'\}} 
    \mathbb{P}\left[\mathcal{S}_2\left(\{T_1, \ldots, T_k\}\right) = R\right]
\end{align*}
Since $\mathcal{S}_1 \Leftrightarrow \mathcal{S}_2$, we have
\begin{equation}
    \mathbb{P}\left[i \in R_1, j \in R_2\right] = \mathbb{P}\left[i \in R_2, j \in R_2\right] \label{eq:prove-v-dominance}
\end{equation}
Equation \ref{eq:prove-c-dominance} and \ref{eq:prove-v-dominance} together 
prove the mutual v-dominance between $\mathcal{S}_1$ and $\mathcal{S}_1$. Hence,
$\mathcal{S}_1 \xRightarrow{Q} \mathcal{S}_2$ and 
$\mathcal{S}_1 \xRightarrow{Q} \mathcal{S}_2$.
\end{proof}

As shown, the sampling dominance rules of \textsc{QuickR} only considers the 
inclusion probability of one or two sampling units, which does not prove the
equivalence. The equivalence property enables \bstats to not only maintain error
guarantees but also avoid unnecessary increasing of sampling rates.
\color{black}

\color{black}
\section{Supported Database Management Systems} \label{sec:app-dbms}
Existing databases support sampling at the page level, shard level, or row level. 
\name can accelerate queries for those supporting page-level sampling. These 
databases include PostgreSQL, SQL Server, and Oracle. However, page-level 
sampling can be implemented in other databases, such as Snowflake and BigQuery easily. We categorize databases as follows:
\begin{enumerate}[leftmargin=*]
    \item Databases that support page-level sampling: PostgreSQL, SQL Server, Oracle, Apache Hive, DB2, DuckDB
    \item Databases that only support shard-level sampling: Snowflake, BigQuery
    \item Databases that only support row-level sampling: SparkSQL, MongoDB
    \item Databases that do not support sampling: MySQL
\end{enumerate}

\color{black}

\end{document}